\newcommand{\as}{\widetilde P^*(a,b)}
\newcommand{\bs}{\widetilde Q^*(a,b)}
\newcommand{\ps}{ P^*(a,b)}
\newcommand{\qs}{ Q^*(a,b)}
\newcommand{\pqs}{ q P^*(a,b)+ p Q^*(a,b)}
\newcommand{\aaa}{{\cal A}}
\newcommand{\ba}{\begin{array}}
\newcommand{\bal}{\begin{array}{l}}
\newcommand{\be}{\begin{equation}}
\newcommand{\beqa}{\begin{eqnarray}}
\newcommand{\bl}{\begin{lem}}
\newcommand{\bt}{\begin{teo}}
\newcommand{\C}{\mathbb{C}}
\newcommand{\ch}{\choose}
\newcommand{\ea}{\end{array}}
\newcommand{\ee}{\end{equation}}
\newcommand{\eeqa}{\end{eqnarray}}
\newcommand{\el}{\end{lem}}
\newcommand{\et}{\end{teo}}
\newcommand{\gk}{g_{qk,pk} }
\newcommand{\kxn}{k[x_1,\dots,x_n]}
\newcommand{\la}{\langle}
\newcommand{\mc}{\mathcal}
\newcommand{\N}{\mathbb{N}}
\newcommand{\Q}{\mathbb{C}}
\newcommand{\ra}{\rangle}
\newcommand{\vv}{{\bf V}}
\newcommand{\wt}{\widetilde}
\newcommand{\Z}{\mathbb{Z}}
\newcommand{\sH}{{\sum_{(i,j)\in {S_H}}}}
\newcommand{\monom}{[\mu;\nu]}
\newtheorem{teo}{Theorem}[section]
\newtheorem{lem}[teo]{Lemma}
\newtheorem{pro}[teo]{Proposition} 
\newtheorem{defin}[teo]{Definition}
\newtheorem{prob}{Problem}        
\newtheorem*{exstar}{Example}             
\newtheorem{conj}{Conjecture}
\theoremstyle{plain}
\newtheorem{remark}[teo]{Remark}
\providecommand{\keywords}[1]
{
  \text{\textit{Keywords---}} #1
}
\title{Integrability of polynomial vector fields \\ and a dual problem}
\author{Tatjana Petek$^{1,2}$  and Valery G.~Romanovski$^{1,3,4}$\\
 $^1${\it Faculty of Electrical Engineering and Computer Science,} \\ {\it University of Maribor,
 Koro\v ska cesta 46, SI-2000 Maribor, Slovenia} \\
$^2${\it Institute of Mathematics, Physics and Mechanics,}\\
{\it Jadranska 19, SI-1000 Ljubljana, Slovenia}\\
$^3${\it Faculty of Natural Science and Mathematics,}   \\ {\it University of Maribor,
 Koro\v ska cesta 160, SI-2000 Maribor, Slovenia}\\
$^4${\it Center for Applied Mathematics and Theoretical Physics,}\\
{\it Mladinska 3, SI-2000 Maribor, Slovenia}}
\date{}
\begin{document}
 \maketitle  

 \begin{abstract} 
 We  investigate the integrability of polynomial vector fields through the lens of duality in parameter spaces. We examine formal power series solutions annihilated by  differential operators and explore the properties of the integrability variety in relation to the invariants of the associated Lie group. Our study extends to differential operators on affine algebraic varieties, highlighting the intrinsic  connection between these operators and local analytic first integrals. To illustrate the duality  the case of quadratic vector fields is considered in  detail. 
 
\end{abstract}

\keywords{Polynomial system of ODEs, integrability,  first order linear PDE,  multigraded rings,  invariants }

\section{Introduction}


This paper delves into one of the  pivotal aspects of polynomial systems of ODEs -- integrability.  Integrability pertains to the ability to solve a system exactly, yielding explicit solutions, or to find  conserved quantities that simplify the system's analysis. Identifying integrable systems not only enriches our theoretical understanding but also provides powerful tools for performing the bifurcational analysis of the systems.

In this paper, we deal with the  integrability of
two-dimensional polynomial systems 
of the form 
\be  \label{gsG}
\begin{aligned}
\frac{d x}{d\tau} &= \phantom{-}  p  x - \sum_{i+j=2}^m
                                   \alpha_{ij}x^{i}{y}^{j}= X(x,y), \\
\frac{d y}{d\tau} &=             -q y+\sum_{i+j=2}^m
                                  \beta_{ij}x^{i}{y}^{j}=Y(x,y),
\end{aligned}
\ee 
where $\alpha_{ij}$ and $\beta_{ij}$ are complex parameters 
and $p, q$ mutually prime natural numbers. 
We  first address the problem of finding systems 
in a given family \eqref{gsG} 
which admit an analytic first integral in a neighborhood of the origin. The problem is known as the problem of local analytic  integrability or, in the case $p=q=1$, as the Poincar\'e center problem,   which has been intensively studied for more than a century {(see, e.g. \cite{ChR,FSZ,GM,JLR,Liu2,Huang,RS,Sib1,YZ,Z}} and the references therein).  

We order parameters of \eqref{gsG} in some manner and denote by $n$  the number of parameters of the  system.
Then there is a one-to-one correspondence between the set of all  systems \eqref{gsG}
and the points in the space $\C^n$.
It is well known  that the set of 
all systems \eqref{gsG}  admitting an analytic first integral in a neighborhood of the origin form an affine algebraic variety in $\C^n$, which is called the 
integrability variety (or the center variety) of system \eqref{gsG} (see e.g. \cite{RS,RXZ}).

The irreducible decomposition of the integrability variety is important, particularly for to the following two reasons.

First, each component of the variety 
is related to a certain mechanism of integrability. In this paper, we will discuss only two important  mechanisms: the Darboux integrability and time-reversibility.
Actually,  finding  all  methods which ensure the  construction of a local analytic 
first integral of \eqref{gsG} (provided the system has it)  could be considered as a solution to the integrability 
problem for system \eqref{gsG}.

Second, the components of the variety are  related to
studies on Hilbert's 16th problem.
The second part of this problem, which remains unresolved even in the simplest case of the quadratic system, concerns the maximum number and relative position of limit cycles of polynomial vector fields.
The polynomial vector fields with a center or focus type singularity at 
the origin, 
\be \label{uvSys}
\dot u =-v +U(u,v), \quad \dot v =u+V(u,v),
\ee
can be enclosed into  vector fields \eqref{gsG} with $p=q=1$ via a complexification procedure.  It often happens that  for generic points of  each component of the integrability  variety 
the number of limit cycles bifurcating   from the center or focus 
cannot exceed the codimension of the component \cite{Bau,Chr2005}. 
Thus, the study of the structure of the integrability variety is important 
for the research on the local Hilbert's 16th problem.

As it is known, 
duality in mathematics refers to a principle or relationship where two related structures, concepts, or problems can be transformed into each other in a way that reveals deeper insights about their nature. This transformation often involves reversing or interchanging certain aspects, leading to a "dual" version of the original concept. Duality manifests in various areas of mathematics, and its specific meaning and implications can vary significantly across different fields. Some notable examples are dual vector spaces in linear algebra, dual 
spaces in functional analysis, Lagrange duality in optimization etc.

In the present paper, we investigate an intricate duality between linear operators acting on polynomial vector fields depending on parameters and linear operators acting on power series in the space of parameters of these vector fields.  It is clear that any analytic or formal first integral of system \eqref{gsG} is an element in the kernel of the linear differential operator 
$$
D:= \frac{\partial}{\partial x} X(x,y)+\frac{\partial}{\partial y} Y(x,y),
$$ 
so we study the duality  between the  kernel of $D$  and the kernel of 
a  certain differential operator  in the space of formal power series in  parameters of system \eqref{gsG}. We show that there is a one-to-one correspondence between
first integrals of  the form 
\begin{equation} \label{Int}
	\Psi(x, y) =v_{0,0} x^q  y^p + \sum_{i + j > p+q} v_{i - q,j - p} x^i y^j=
	x^q  y^p \left ( v_{0,0}+  \sum_{i + j > p+q} v_{i - q,j - p} x^{i-q} y^{j-p}  \right)
\end{equation}
of system \eqref{gsG} and formal power series in the parameters of \eqref{gsG} annihilated by a certain  differential operator. 

Furthermore, we discuss some properties of  the integrability variety, 
particularly in relation to the invariants of the Lie group 
generated by the matrix of the linear approximation of system \eqref{gsG}. 

Finally, we introduce differential operators on affine algebraic 
varieties and discuss  relations of power series solutions to the differential operators on varieties and local analytic  first integrals of system \eqref{gsG}.

In recent work \cite{PR}, it is shown that  the space of parameters 
of a polynomial vector field is a kind of dual space 
for  computing  Poincar\'e-Dulac normal forms and the computations of  normal forms  can be performed 
working with series defined on the space of parameters. 
The studies in \cite{PR} and in  the present paper  present a clear evidence 
that the space of parameters of a polynomial vector field  is a kind of dual space when we study certain linear operators on vector fields.

\section{Preliminaries} 

Let
 $\N_0$ be  the set of
non-negative integers and  $\N_{-1}$ be the set 
$\{-1\}\cup \N_0$. 
Consider a system of the form \eqref{gsG} written as 
\begin{equation} \label{gs1}
\begin{aligned}
\dot x &= \phantom{-}  p  x - \sum_{(i,j) \in S}
                                  a_{ij}x^{i+1}{y}^{j} = x( p   - \sum_{(i,j) \in S}
                                  a_{ij}x^{i}{y}^{j}) \\
\dot y &=             -q y+\sum_{(i,j) \in S}
                                  b_{ji}x^{j}{y}^{i+1}=y (   -q +\sum_{(i,j) \in S}
                                  b_{ji}x^{j}{y}^{i})
\end{aligned}
\end{equation}
where the dot means the differentiation with respect to $\tau$,  positive integers $p$, $q$ are  mutually prime, and  $ S$ is the ordered finite set 
\be
 S = \{(u_k,v_k) \mid u_k+v_k >0, k = 1, \ldots,
\ell \} 
\subset 
\N_{-1} \times \N_0. 
\ee 
 Notation \eqref{gs1} 
emphasizes that 
we take into account only the nonzero coefficients of the 
polynomials in \eqref{gsG}.   
We denote by 
$$  
      (a,b) = (a_{u_1,v_1},a_{u_2,v_2},\dots,a_{u_\ell,v_\ell},b_{v_\ell,u_\ell},\dots , b_{v_1,u_1})
$$
the
ordered
$2\ell$-tuple  of coefficients of system \eqref{gs1}, 
by
$\Q[a,b]$ the polynomial 
ring in the variables $a_{ij}$, $b_{ji}$
and   by
$\Q[a^\pm,b^\pm]$ the  
ring of Laurent polynomials  in the variables $a_{ij}$, $b_{ji}, a_{ij}^{-1}, b_{ji}^{-1}.  $
The 
parameter space of \eqref{gs1} is  $\C^{2\ell}$.


It is known  that if system \eqref{gs1} has a  first integral 
represented by a power series 
$$ \Psi(x, y) =   \sum_{i + j \ge  0} u_{i ,j } x^i y^j,
$$
which is convergent in a neighborhood of the origin, 
then it has a  first integral of the form  \eqref{Int}.
It is also known  that if system \eqref{gs1} has a formal first integral of the form \eqref{Int},
then it has also analytic first integral of the same form (see e.g. \cite{RS_BMS}). Moreover, the coefficients $v_{ij}$ are polynomially  dependent on the parameters $a$, $b$ of system \eqref{gs1}.

The polynomials $v_{ij}$ in \eqref{Int} have some special properties. 
To describe them we recall the notion of the multigraded ring.


Let 
$\mathsf{S}= k[x_1^\pm, x_2^\pm, \ldots, x_n\pm]$ be a Laurent  polynomial ring  over a field $k$ and $A$ be  an abelian group.
The  ring $\mathsf{S}$ is called multigraded by $A$   
when
it has been endowed with a degree map $\deg : Z^n \to  A$. In another words, one can say 
that the grading is  given by an exact sequence 
$$
0\to \mathsf{L} \to \Z^n \to A
$$
of abelian groups, where the lattice $\mathsf{L}$ is the kernel of the degree map.

For ${\bf a} \in  A$  let $R_{\bf a}$ denote the vector space (over the field $k$) of 
polynomials of degree $\bf{a}$ in the $A$-grading.
Then $\mathsf{ S}$  has the direct sum decomposition
$$\mathsf{S} = \bigoplus_{{\bf a}\in A} R_{\bf{a}} $$
and it holds 
 $R_{\bf{a}}  R_{\bf{b}} \subseteq R_{\bf{a}+\bf{b}}.$
Elements of  $R_{\bf{a}}$ are called homogeneous polynomials of degree $\bf a$.

We will work with polynomials in variables which are parameters of system \eqref{gs1}. 
Any monomial in  indeterminates  $a_{ij}, 
b_{ji}$  has the form
$$ [ \mu ;\nu] :=\prod_{(i,j)\in S} a_{ij}^{\mu_{ij} }b_{ji}^{\nu_{ji}},
$$
where for all $(i,j)\in S$ we have  $\mu_{ij}$, $\nu_{ji}\in \N_0$ in the case of usual polynomials,
and $\mu_{ij}$, $\nu_{ji}\in \Z$ in the case of Laurent  polynomials.

Once the $\ell$--element set $S$ has been specified and ordered, we let 
$L : \Z^{2\ell} \to \Z^2$ be the map defined by
\be 
\begin{aligned} \label{L1}
L(\mu,\nu) &= (L_1(\mu,\nu), L_2(\mu,\nu))^\top \\
    &=\sum_{(i,j)\in S} \left( { i \ch  j}  \mu_{ij}  + 
    { j\ch i} \nu_{ji} \right),
\end{aligned}
\ee
where $\mu$ and $\nu$ are $\ell$-tuples with the entries from $\Z$. 

Along with the set $S$ we will use the set 
  $\mathcal{S}= S \cup\ \{(j,i) :\,  (i,j)\in S\}$. 
We also agree on convention $a_{ij}=b_{ji}=0$ for $ (i,j) \not  \in S $. 

Denote 
by $\mc Q$ the subgroup of $\Z^2$ generated by $\mathcal{S}$. We will treat  $\Q[a^\pm,b^\pm]$ and  $\C[a,b]$
as the rings  multigraded by 
$\mathcal{Q}$ with the degree map being the map $L(\mu,\nu)$ defined by \eqref{L1}.
%
%
%
%
For this multigrading  it holds that 
$$
 \C[a^\pm,b^\pm]=\bigoplus_{(i,j)\in \mc Q }R^\pm_{(i,j)} 
$$
and 
$$
 R^\pm_{(j,k)}R^\pm_{(s,t)}\subseteq R^\pm_{(j+s,k+t)}.
$$
We  call  elements of $R^\pm_{(u,v)} $ the  $(u,v)$-polynomials
(they are homogeneous of degree $(u,v)$).


Let $M^\pm$ be the set of all formal  power series of the form \eqref{Int} such that 
the coefficient of $x^i y^j$ in \eqref{Int} is an  $(i-q,j-p)$-{polynomial} and let 
$${\mc Q}'=\left( \N_{0}^2 - (q,p) \right) \cap {\mc Q}.$$
Denoting $M^\pm_{(i,j)}=R^\pm_{(i,j)}x^{i+q}y^{j+p}$ we have
$$
M^\pm=
\bigoplus_{
(i,j)\in {\mc Q}'
}
M^\pm_{(i,j)}  
.$$
We define the multiplication of elements $\Psi\in M^\pm$ by elements of $ \C[a^\pm,b^\pm]$ as follows.
Every polynomial $f  \in \C[a^\pm,b^\pm]  $ can be written in the form 
$$
f=\sum_k f_{i_k,j_k},
$$ 
where $f_{i_k,j_k}$ are $(i_k,j_k)$ polynomials. Then we define
$$
f * \Psi =  \Psi  \sum_k f_{i_k,j_k} x^{i_k} y^{j_k},\ \ \Psi\in M^\pm. 
$$
With this operation of multiplication  $
M^\pm
$
is a multigraded $   \C[a^\pm,b^\pm] $-module. Observe that the graded ring $    \C[a^\pm,b^\pm] $ is defined using only the nonlinear terms of \eqref{gs1}, but the module $M^\pm$ is defined also using the linear part of  \eqref{gs1} since the structure of the series \eqref{Int} depends on $p$ and $q$.

In the definitions and constructions above, if we restrict our consideration  to polynomials in the ring $ \Q[a,b]$, we denote the corresponding objects as \(R_{(i,j)}\),  \(M_{(i,j)}\), and \(M\). In this context, it is clear that \(R_{(0,0)}=\Q\).


We mention some properties of the ring $\Q[a,b]$ multigraded by $\mathcal{Q}$.
\begin{pro}\label{pro:mult_grad}
The multigraded ring $\Q[a,b]$ has the following properties:\\
1. The only polynomials of degree $0$ are the constants.\\
2. For all ${(r,s)} \in  \mathcal{Q})$, the $\Q$-vector space $R_{(r,s)}$ is finite-dimensional.\\
3. The only nonnegative vector in the lattice $\ker L$ is $0$.\\
\end{pro}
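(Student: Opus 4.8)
The three statements are really statements about the map $L:\Z^{2\ell}\to\Z^2$ and its relation to the nonnegative orthant, so the plan is to reduce everything to the combinatorial geometry of $L$ and the finite set $\mathcal{S}$.

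For part 3, I would argue that $\ker L$ contains no nonzero nonnegative vector by exhibiting a linear functional on $\Z^2$ that is strictly positive on all of $\mathcal{S}$. Indeed, since each $(u_k,v_k)\in S$ satisfies $u_k\ge -1$, $v_k\ge 0$, and $u_k+v_k>0$, and since $\mathcal{S}=S\cup\{(j,i):(i,j)\in S\}$, every element $(s,t)$ of $\mathcal{S}$ has $s\ge -1$, $t\ge -1$, and $s+t>0$. The columns of $L$ are exactly the vectors $\binom{i}{j}$ for $(i,j)\in\mathcal{S}$ (grouping the $a$- and $b$-parts). Pick the functional $\phi(x,y)=x+y$; then $\phi$ evaluated on each column is $\ge 1>0$. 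Hence if $\gamma\in\Z^{2\ell}$ is nonnegative and $L(\gamma)=0$, then $0=\phi(L(\gamma))=\sum_k (\text{positive})\cdot\gamma_k$, forcing $\gamma=0$. This is the key step and it is short; the only subtlety is to double-check that the degenerate column $\binom{-1}{1}$-type entries (which can occur, since $i=-1$ is allowed) still have positive $\phi$-value, which they do because $i+j>0$ is built into the definition of $S$ and hence of $\mathcal{S}$.

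Part 1 follows from part 3. A polynomial of degree $(0,0)$ is a $k$-linear combination of monomials $[\mu;\nu]$ with $\mu,\nu\in\N_0^\ell$ and $L(\mu,\nu)=(0,0)$; by part 3 the only such exponent vector is $(\mu,\nu)=0$, i.e. the only degree-$(0,0)$ monomial is $1$, so $R_{(0,0)}=\Q$. (Here I am using that in $\Q[a,b]$, as opposed to the Laurent ring, exponents are genuinely nonnegative, which is exactly the hypothesis part 3 needs.) For part 2, finite-dimensionality of $R_{(r,s)}$ amounts to showing that for fixed $(r,s)$ there are only finitely many $(\mu,\nu)\in\N_0^{2\ell}$ with $L(\mu,\nu)=(r,s)$. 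Using the same functional $\phi$, any such $(\mu,\nu)$ satisfies $\sum_k c_k(\mu_k+\nu_k')=r+s$ with all $c_k\ge 1$ integers (writing $\mu_k,\nu_k'$ for the coordinates), so $\sum_k(\mu_k+\nu_k')\le r+s$; this bounds each coordinate by $r+s$, leaving only finitely many monomials, and $R_{(r,s)}$ is spanned by these. Note this also forces $r+s\ge 0$ for $R_{(r,s)}\neq 0$, consistent with $\mathcal{Q}'$ lying in a half-space.

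The main obstacle, such as it is, is purely bookkeeping: making sure the indexing conventions for the two blocks of coordinates of $(\mu,\nu)$ (the $a$-exponents and the $b$-exponents, with the $b$'s listed in reversed order) are handled consistently, and confirming that the "columns of $L$" are precisely the vectors $\binom{i}{j}$ with $(i,j)$ ranging over $\mathcal{S}$ with multiplicity. Once the single auxiliary functional $\phi(x,y)=x+y$ is in hand, all three parts drop out immediately; I would present $\phi$ first and then dispatch the parts in the order $3\Rightarrow 1$, then $2$.
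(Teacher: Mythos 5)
Your argument is correct, and for the key statement (part 3) it is essentially the paper's own proof: applying the functional $\phi(x,y)=x+y$ to $L(\mu,\nu)$ is exactly the paper's step of summing $L_1(\mu,\nu)+L_2(\mu,\nu)=\sum_{(i,j)\in S}(i+j)(\mu_{ij}+\nu_{ji})$ and using $i+j>0$ for all $(i,j)\in S$. Where you diverge is in parts 1 and 2: the paper simply invokes Theorem 8.6 of Miller--Sturmfels (the equivalence of the standard characterizations of a positive multigrading), whereas you derive both statements directly from the same functional --- the only degree-$(0,0)$ monomial with nonnegative exponents is $1$, and for fixed $(r,s)$ the bound $\sum_{(i,j)\in S}(i+j)(\mu_{ij}+\nu_{ji})=r+s$ with each weight $i+j\ge 1$ leaves only finitely many exponent vectors, hence $\dim_{\Q}R_{(r,s)}<\infty$ (and $R_{(r,s)}=0$ when $r+s<0$). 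Your route buys a self-contained, elementary proof of all three statements at the cost of a few lines; the paper's route buys brevity and situates the proposition within the general theory of positive multigradings, which it also uses later (e.g.\ Proposition \ref{pro:ap} again cites \cite{MS}). Both are valid; your bookkeeping of the two blocks of coordinates and of the swapped pairs $(j,i)$ is handled correctly, since the coordinate sum $i+j$ is symmetric under the swap.
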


\begin{proof}
We first prove the third statement. Assume $L(\mu,\nu)=(0,0).$ Then 
\be \label{l1pl2}
 L_1(\mu,\nu)+L_2(\mu,\nu)=\sum_{(i,j)\in S} (  (i+j) \mu_{ij} + (j+i) \nu_{ji}  )=0.
 \ee
Since by our assumption for any $(i,j)\in S$ 
it holds that $i+j> 0$,   equality \eqref{l1pl2} can hold only 
if all entries  of $\mu $ and $\nu $ are zeros or at least one entry  is negative. 

The correctness of the other statements of the proposition follows from
Theorem 8.6 of \cite{MS}.
\end{proof}


\section{Normal forms of power series and necessary conditions of integrability}

In this section, we describe a procedure to compute a power series which 
gives the necessary conditions for the existence of an analytic first integral in a neighborhood of the origin of system \eqref{gs1}. 
We call such a series the series in the standard form. 

We write system \eqref{gs1} in the form 
\begin{equation} \label{gs}
	\begin{aligned}
		\dot x &= \phantom{-} x ( p   - \sum_{(i,j) \in S}
		a_{ij}x^{i}{y}^{j}) =x(p +\widetilde P(x,y)  )=  P(x,y) , \\
		\dot y &=           y (-q+\sum_{(i,j) \in S}
		b_{ji}x^{j}{y}^{i})=y (-q+\widetilde Q(x,y)) =  Q(x,y).
	\end{aligned}
\end{equation}

The  operator of the differentiation with respect to the vector field \eqref{gs} can be written as 
$$
{\mc D}:= {P}\frac{\partial}{\partial x} +{Q}\frac{\partial}{\partial y}={\mc D}_1+{\mc D}_2, 
$$
where 
\be 
{\mc D}_1:=p x \frac{\partial}{\partial x}   -qy \frac{\partial}{\partial y} \qquad 
{\mc D}_2:= x \widetilde P \frac{\partial}{\partial x}   +y\widetilde Q \frac{\partial}{\partial y}.
\ee

We set $v_{k_1,k_2}=0$ if $(k_1,k_2)\not \in {\mc Q}'$
and 
introduce a formal power series $\Psi$  as 
\be \label{e:Psi}
\Psi = x^q y^p  \sum_{(s_1,s_2)\in \mc{Q}'} v_{s_1s_2}x^{s_1}y^{s_2},
\ee 
where $ v_{s_1s_2}$ depend on the parameters  of  system \eqref{gs1}.

A simple computation shows that the action of ${\mc D}_1$ on the monomial  $x^r y^s$ is as follows
\be \label{diag}
{\mc D}_1(x^ry^s)=(p r - q s)x^ry^s.
\ee
Thus, 
\be \label{D1}
{\mc D}_1(\Psi)= x^q y^p  \sum_{(k_1,k_2)\in \mc{Q}'} (k_1 p-k_2 q) v_{k_1k_2}x^{k_1}y^{k_2}.
\ee
Then ${\mc D}_1(x^r y^s)=0$ if and only if $(r,s)=(kq,kp)$ for some $k\in\N_0$. In particular, ${\mc D}_1 (x^q  y^p)=0$.

We recall notation \eqref{e:Psi} and write 
\begin{align} \label{Psi}
	\Psi(x, y) &=x^qy^p\psi,\\
	\psi&=\sum_{(s_1,s_2)\in\mc Q'} v_{s_1,s_2}x^{s_1}y^{s_2}  .
\end{align}
Then, as ${\mc D}_1 (x^q  y^p)=0$, we have
\begin{align}
	D(\Psi)&={\mc D}_1 (x^q  y^p)\psi + x^q  y^p{\mc D}_1(\psi) + {\mc D}_2 (\Psi)\\
	&=x^q  y^p{\mc D}_1(\psi) + {\mc D}_2 (\Psi).\label{e:d}
\end{align}
Furthermore, as 
$$x\frac{\partial  x^{q+s_1} y^{p+s_2}}{\partial x} =(q+s_1)x^{q+s_1}  y^{p+s_2}$$
and 
$$y\frac{\partial  x^{q+s_1} y^{p+s_2}}{\partial y} =(p+s_2)x^{q+s_1}  y^{p+s_2}, $$
we compute
\be  \label{D2psi}
{\mc D}_2 (\Psi)=x^q  y^p \sum_{\substack{(i,j)\in {\mc S} \\(s_1,s_2)\in {\mc Q}'}}   (-(s_1+q)  a_{ij}+ (s_2+p) b_{ij})    v_{s_1,s_2} x^{s_1+i} y^{s_2+j}.
\ee 
If $v_{s_1s_2}\in R_{(s_1,s_2)}$, using that $(s_1+q)a_{ij}$, $(s_2+p)b_{ij}\in R_{(i,j)}$, we see that $(-(s_1+q)  a_{ij}+ (s_2+p) b_{ij}))v_{s_1,s_2}\in R_{(s_1+i,s_2+j)}$.    

Since ${\mc D}_1$ is semisimple, we can write 
$$
M=\ker {\mc D}_1 \bigoplus {\  \rm  im\ }  {\mc D}_1.
$$

From the above expressions it is not difficult to see that the following statement holds.
\begin{pro}
For any $\Psi \in M_{(k_1,k_2)}$, $(k_1,k_2)\in{\mc Q}'$, we have  ${\mc D}_1 (\Psi)\in M_{(k_1,k_2)}$. Consequently, $\Psi \in M$ implies ${\mc D}_1(\psi)\in M$. $M$ is invariant also for ${\mc D}_2$ so, both   ${\mc D}_1:M \to M$ and ${\mc D}_2:M \to M$ can be regarded as $\Q$-linear operators on $M$ and 
$$\ker {\mc D}_1= \bigoplus_{k\in \N_0} M_{qk,pk}.$$
\end{pro}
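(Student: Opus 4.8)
The plan is to treat the four assertions in turn, each of which reduces to the diagonal formula \eqref{diag} or to the explicit expression \eqref{D2psi}. For the first: if $\Psi\in M_{(k_1,k_2)}$, then $\Psi=c\,x^{k_1+q}y^{k_2+p}$ with $c\in R_{(k_1,k_2)}$, and since ${\mc D}_1=px\frac{\partial}{\partial x}-qy\frac{\partial}{\partial y}$ does not differentiate the parameters $a,b$, it treats $c$ as a scalar, so by \eqref{diag}
\be
{\mc D}_1(\Psi)=\bigl(p(k_1+q)-q(k_2+p)\bigr)\Psi=(pk_1-qk_2)\,\Psi\in M_{(k_1,k_2)}.
\ee
Hence ${\mc D}_1$ acts on the homogeneous component $M_{(k_1,k_2)}$ as multiplication by the scalar $pk_1-qk_2$; in particular it preserves every component, so ${\mc D}_1(M)\subseteq M$. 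The same computation shows ${\mc D}_1$ multiplies each homogeneous term of the series $\psi$ by a scalar, which together with ${\mc D}_1(x^qy^p)=0$ and \eqref{e:d} (i.e.\ ${\mc D}_1(\Psi)=x^qy^p{\mc D}_1(\psi)$) yields the ``consequently'' clause.

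Next I would establish the ${\mc D}_2$-invariance of $M$ from the explicit formula \eqref{D2psi}, which needs two checks. First, that ${\mc D}_2(\Psi)$ is again a series of the form \eqref{Int}: the nonlinear part of \eqref{gs} strictly raises the total degree of each monomial it produces, so the lowest-degree term of ${\mc D}_2(\Psi)$ has degree $>p+q$; moreover the coefficients $-(s_1+q)$ and $(s_2+p)$ appearing in \eqref{D2psi} vanish precisely when differentiation would otherwise create a monomial with a negative exponent, so no such monomial survives and ${\mc D}_2(\Psi)$ is of the shape $x^qy^p\cdot(\text{power series in }x,y)$. Second, that its coefficients are homogeneous of the right degree: the coefficient of a fixed $x^{q+t_1}y^{p+t_2}$ in \eqref{D2psi} is a finite sum of terms $\bigl(-(s_1+q)a_{ij}+(s_2+p)b_{ij}\bigr)v_{s_1,s_2}$ with $(s_1,s_2)+(i,j)=(t_1,t_2)$, each lying in $R_{(t_1,t_2)}$ by the homogeneity remark following \eqref{D2psi}, whence so does the coefficient and ${\mc D}_2(\Psi)\in M$. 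Combining the two paragraphs, ${\mc D}_1$ and ${\mc D}_2$ are $\Q$-linear operators on $M$.

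For the kernel: since ${\mc D}_1$ preserves each graded component, an element of $M$ lies in $\ker{\mc D}_1$ iff each of its homogeneous components does, so $\ker{\mc D}_1=\bigoplus_{(k_1,k_2)\in{\mc Q}'}\bigl(\ker{\mc D}_1\cap M_{(k_1,k_2)}\bigr)$, and by the first step $\ker{\mc D}_1\cap M_{(k_1,k_2)}$ equals $M_{(k_1,k_2)}$ when $pk_1=qk_2$ and $\{0\}$ otherwise. It remains to identify the indices: $p,q$ being mutually prime, $pk_1=qk_2$ forces $(k_1,k_2)=(qk,pk)$ for some $k\in\Z$, and membership of $(qk,pk)$ in ${\mc Q}'\subseteq\N_0^2-(q,p)$ gives $k\ge-1$; the value $k=-1$ is excluded since $R_{(-q,-p)}=0$ in $\Q[a,b]$ — no monomial with nonnegative exponents has $L$-degree $(-q,-p)$, its coordinate sum being negative (cf.\ the proof of Proposition \ref{pro:mult_grad}). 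Therefore $\ker{\mc D}_1=\bigoplus_{k\in\N_0}M_{qk,pk}$, with the convention $M_{qk,pk}=0$ whenever $(qk,pk)\notin{\mc Q}'$.

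I do not expect a genuine obstacle: each step collapses to the computations \eqref{diag} and \eqref{D2psi} already in hand. The subtlest point is the bookkeeping in the ${\mc D}_2$ step, i.e.\ checking simultaneously that ${\mc D}_2\Psi$ has the correct \eqref{Int}-shape \emph{and} ${\mc Q}$-homogeneous coefficients; and, $M$ being a space of formal power series, one should note (routinely) that the kernel decomposition above is legitimate exactly because ${\mc D}_1$ acts diagonally with respect to the ${\mc Q}'$-grading.
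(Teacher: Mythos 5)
Your proposal is correct and follows exactly the route the paper intends: the paper offers no written proof (it says the statement follows "from the above expressions"), and your argument is precisely the spelling-out of \eqref{diag}, \eqref{e:d}, \eqref{D2psi} and the homogeneity remark, including the correct treatment of the vanishing coefficients $-(s_1+q)$, $(s_2+p)$ and the exclusion of $k=-1$ via $R_{(-q,-p)}=0$. No gaps; this matches the paper's (implicit) proof.
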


\begin{defin} 1)
We say that the series  $\Psi \in M$  of the form \eqref{e:Psi}
is in the normal form with respect to system \eqref{gs} (or simply in the normal form) if 
$$
{\mc D}(\Psi)\in \ker {\mc D}_1.
$$
2) If  $\Psi \in M$  
is in the normal form and $v_{00}=1,$  $v_{qk,pk}=0$ for all $k\in \N$, we say that  $\Psi \in M$  
is in the canonical normal form. 
\end{defin}

\begin{teo}\label{th1}
Let family \eqref{gs1}  be given. 
There exists a unique series  $\Psi\ne const$ in the   canonical normal form.
More precisely, 
\begin{enumerate}
\item[(a)] ${\mc D}(\Psi)\in \ker {\mc D}_1$, 
 \be \label{eq_g}
 {\mc D}(\Psi)= \sum_{k=1}^\infty g_{qk,pk} (x^q y^p)^{k+1}
 \ee 
and  for every $k \ge 1$,  $\gk\in  R_{(qk,pk)}$;
\item[(b)] for every  $(k,m) \in {\mc Q}'$,
           $v_{km} \in R_{(k,m)}$;
\item[(c)] $v_{00}=1$ and for every $k\ge 1$, $v_{qk,pk}=0$.
\end{enumerate}
\end{teo}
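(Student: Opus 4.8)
**The plan is to construct $\Psi$ coefficient‑by‑coefficient via a recursion dictated by the equation ${\mc D}(\Psi)\in\ker{\mc D}_1$, ordering the coefficients $v_{km}$ by the total weight $k+m$, and to verify at each step that the required coefficient lies in the correct graded piece and is uniquely determined.**

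Writing $\Psi = x^qy^p\sum_{(s_1,s_2)\in\mc Q'} v_{s_1s_2}x^{s_1}y^{s_2}$ with the normalization $v_{00}=1$, I would combine \eqref{e:d}, \eqref{D1} and \eqref{D2psi} to express ${\mc D}(\Psi)$ as $x^qy^p$ times a series whose coefficient of $x^{N_1}y^{N_2}$ collects the ``diagonal'' term $(N_1p-N_2q)v_{N_1N_2}$ together with a sum of terms $(-(s_1+q)a_{ij}+(s_2+p)b_{ij})v_{s_1s_2}$ over all $(i,j)\in\mc S$, $(s_1,s_2)\in\mc Q'$ with $(s_1+i,s_2+j)=(N_1,N_2)$; crucially all such $(s_1,s_2)$ have strictly smaller weight $s_1+s_2 < N_1+N_2$ since $i+j>0$ on $\mc S$. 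So induct on $n = N_1+N_2 \ge 1$. The defining condition ${\mc D}(\Psi)\in\ker{\mc D}_1$ means the coefficient of $x^{N_1}y^{N_2}$ must vanish whenever $(N_1,N_2)\ne(qk,pk)$, which forces
\[
v_{N_1N_2} = \frac{1}{N_2q-N_1p}\sum_{\substack{(i,j)\in\mc S,\ (s_1,s_2)\in\mc Q'\\ (s_1+i,s_2+j)=(N_1,N_2)}} (-(s_1+q)a_{ij}+(s_2+p)b_{ij})v_{s_1s_2},
\]
a well‑defined assignment because $N_2q-N_1p\ne0$ precisely when $(N_1,N_2)$ is not a multiple of $(q,p)$ (here I use $\gcd(p,q)=1$, via \eqref{diag}). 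When $(N_1,N_2)=(qk,pk)$ the diagonal coefficient vanishes identically, $v_{qk,pk}$ is left \emph{free}, we set it to $0$ to get the canonical form, and the right‑hand side of the displayed sum — evaluated at the already‑constructed lower‑weight coefficients — is exactly $g_{qk,pk}$, the coefficient of $(x^qy^p)^{k+1}$ in ${\mc D}(\Psi)$. This simultaneously proves existence, uniqueness, and part (a)'s formula \eqref{eq_g}.

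For the grading statements (b) and (c): I would carry ``$v_{s_1s_2}\in R_{(s_1,s_2)}$'' as part of the induction hypothesis. The base case $v_{00}=1\in R_{(0,0)}=\Q$ is immediate. For the inductive step, the remark already recorded in the text after \eqref{D2psi} shows that each summand $(-(s_1+q)a_{ij}+(s_2+p)b_{ij})v_{s_1s_2}$ lies in $R_{(s_1+i,s_2+j)}=R_{(N_1,N_2)}$; dividing by the scalar $N_2q-N_1p$ keeps us in $R_{(N_1,N_2)}$, giving (b), and the same observation with $(N_1,N_2)=(qk,pk)$ gives $g_{qk,pk}\in R_{(qk,pk)}$. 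Part (c) is just the chosen normalization. One should also check $\Psi\ne const$: since $v_{00}=1$ and the recursion generically produces nonzero higher coefficients — or more simply, any first integral of the form \eqref{Int} is nonconstant by construction — but even without that, the statement ``there exists a unique series $\Psi\ne const$'' is witnessed by this $\Psi$ having leading term $x^qy^p$.

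The main obstacle, such as it is, is bookkeeping rather than depth: making the weight filtration on $\mc Q'$ rigorous so that the recursion is genuinely triangular (one must confirm that $(s_1,s_2)\in\mc Q'$ and $(i,j)\in\mc S$ with $(s_1+i,s_2+j)\in\mc Q'$ indeed forces $(s_1,s_2)\in\mc Q'$ with smaller weight, using $\mc Q'=(\N_0^2-(q,p))\cap\mc Q$ and closure of $\mc Q$ under the relevant additions), and checking that $R_{(N_1,N_2)}$ is the right home at every step — but Proposition~\ref{pro:mult_grad}(2) guarantees each $R_{(N_1,N_2)}$ is finite‑dimensional so nothing pathological occurs, and Proposition~\ref{pro:mult_grad}(3) is what ultimately underlies the non‑vanishing of the divisor $N_2q-N_1p$ off the diagonal. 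I would also remark that uniqueness follows because at every weight the value of $v_{N_1N_2}$ is \emph{forced} (off‑diagonal by the equation, on‑diagonal by the canonical‑form requirement $v_{qk,pk}=0$), so two series in canonical normal form agree to all orders.
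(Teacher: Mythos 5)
Your proposal is correct and follows essentially the same route as the paper: cancel $x^qy^p$ in ${\mc D}(\Psi)=x^qy^p{\mc D}_1(\psi)+{\mc D}_2(\Psi)$, compare coefficients to get the recursion \eqref{vk1k2Def} off the resonant ray $(qk,pk)$ (dividing by $k_1p-k_2q\ne 0$), set $v_{qk,pk}=0$ and read off $g_{qk,pk}$ on the ray, and prove the grading claims $v_{k_1,k_2}\in R_{(k_1,k_2)}$, $g_{qk,pk}\in R_{(qk,pk)}$ by induction using the multiplicativity $R_{(i,j)}R_{(k_1-i,k_2-j)}\subset R_{(k_1,k_2)}$. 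Your explicit induction on the weight $N_1+N_2$ and the uniqueness remark are just slightly more detailed versions of the paper's "the recursion is well defined since $i+j>0$ for $(i,j)\in S$" argument.
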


\begin{proof} 

As it is our aim to find a $\Psi\in M$ and $g(x,y)= \sum_{k=1}^\infty g_{qk,pk} (x^q y^p)^{k+1}\in\ker {\mc D}_1$ such that $ {\mc D}(\Psi) = g(x,y)$, in view of \eqref{e:d}, we consider the following equation
\be
x^q  y^p{\mc D}_1(\psi)= -{\mc D}_2 (\Psi) +g(x,y).
\ee
Recalling \eqref{D1}, \eqref{D2psi} and canceling out  $x^q y^p$ we obtain
\begin{equation*}
\begin{aligned}
  \sum_{(k_1,k_2)\in \mc{Q}'} (k_1 p-k_2 q) v_{k_1,k_2}x^{k_1}y^{k_2}&= \sum_{\substack{(i,j)\in S\\(s_1,s_2)\in\mc Q'}}   ((s_1+q)  a_{ij}- (s_2+p) b_{ij})    v_{s_1,s_2} x^{s_1+i} y^{s_2+j}\\ &+\sum_{k=1}^\infty g_{qk,pk} (x^q y^p)^k.
\end{aligned}
\end{equation*} 
Comparing the powers at $x^{k_1}y^{k_2}$, when when $k_1 p-k_2 q\neq 0$,  gives the identity
\be \label{vk1k2Def}
\begin{aligned} 
(k_1 p-k_2 q) v_{k_1,k_2}&=\sum_{\substack{(i,j)\in {\mc S},(s_1,s_2)\in\mc Q'\\ (s_1+i,s_2+j)=(k_1,k_2)} }((s_1+q)  a_{ij}- (s_2+p) b_{ij})    v_{s_1,s_2}\\
&=\sum_{(i,j)\in {\mc S}}
	((k_1-i+q)  a_{ij} - (k_2-j+p) b_{ij})    v_{k_1-i,k_2-j}.
\end{aligned}
\ee
 Otherwise, by choosing $v_{qk,pk}=0$, $k\ge 1$, and recalling $v_{k_1,k_2}=0$ if $(k_1,k_2)\notin \mc{Q}'$, we have
\begin{equation*} 
g_{qk,pk}=-\sum_{(i,j)\in {\mc S}}
	((q(k+1)-i)  a_{ij}- (p(k+1)-j) b_{ij})    v_{qk-i,pk-j}.
\end{equation*} 
We observe that \eqref{vk1k2Def} defines the sequence $v_{k_1,k_2}$ recursively (since  $i+j>0$ for $(i,j)\in S$). Setting $v_{00}=1$ gives $v_{00}\in R_{(0,0)}$.  By induction, we easily get that $v_{k_1,k_2}\in R_{(i,j)}R_{(k_1-i,k_2-j)}\subset R_{(k_1,k_2)}$ and similarly, $g_{qk,pk}\in R_{(i,j)}R_{(qk-i,pk-j)}\subset R_{(qk,pk)}$ as desired. Therefore,  $g(x,y) \in M$, $\Psi \in M$ and ${\mc D}\Psi=g\in\ker {\mc D}_1$.

\end{proof}

\begin{remark}
{\rm 
 In some aspects, the above theorem is reproving results of  \cite{RS_BMS} with more transparent  reasoning.} 
\end{remark}

To finish this section, we explain our choice of terminology. 
The series  $\Psi$ defined in Theorem \ref{th1} is sometimes called the Lyapunov function. However, it is indeed a Lyapunov function only in the case of a $1:-1$ resonant system \eqref{gs}, which is a complexification of system \eqref{uvSys} (see e.g. \cite[Theorem 6.2.3]{RS}).
Recall that 
to compute the  Poincar\'e-Dulac normal form  of a system of ODEs 
with a non-zero  diagonal matrix of the linear part 
we split the space  of formal vector fields into the kernel and 
the image of the homological operators, which is defined by the linear 
part of the system. Then the vector field  in the normal form 
is a vector field from the kernel of the homological operator.
The  procedure for computing the map in the normal form is somewhat 
similar. We use the operator ${\mc D}_1$ defined by the linear part of system \eqref{gs}  to split $M$ into the direct sum of  $\ker {\mc D}_1$ and ${\rm  im \ }{\mc D}_1$,  and then compute a power series  such that its derivative 
with respect to the vector field \eqref{gs} is in $\ker {\mc D}_1$.
In view of this observation, we believe it is reasonable to call the 
series  $\Psi$ defined in Theorem \ref{th1} the series  in the normal form. 

\section{The dual problem}

{Let 
 ${\mc V} $  be the direct sum of  
 $ R_{(k_1,k_2)}$ for all $(k_1,k_2)\in {\mc Q}'$.
 The elements of  ${\mc V} $ can be considered 
 as formal power series which  we write as 
\begin{equation*} 
V=\sum_{(i,j)\in\mc Q'}V_{ij},
\end{equation*}
where $ V_{ij}\in R_{(i,j)} $ and, for any fixed $(i,j)\in\mc Q'$,
$$ 
V_{ij}=\sum_{L(\mu,\nu)=(i,j)} \kappa(\mu,\nu) \monom, \quad  
 \kappa(\mu,\nu)   \in \mathbb{C}. 
$$ 
We next define a $\Q$-linear map
\be \label{eq:T}
\mathfrak{T}:M \to \mc {\mc V}
\ee 
by 
$$
\mathfrak{T}(\Psi(x,y))= \Psi(1,1).
$$
Clearly, $\mathfrak T$ is a vector space isomorphism   with the inverse map defined by
\begin{align*}
	\mathfrak{T}^{-1} (V)&= \sum_{(i,j)\in\mc Q'}\sum_{L(\mu,\nu)=(i,j)} \kappa(\mu,\nu) \left[\mu;\nu\right]x^{i+q} y^{j+p}\\
	&=
 \sum_{(i,j)\in\mc Q'}\sum_{L(\mu,\nu)=(i,j)}  \kappa(\mu,\nu) \left[\mu;\nu\right]x^{L_1(\mu,\nu)+q} y^{L_2(\mu,\nu)+p}.
\end{align*}
Since $\mathfrak{T}(M_{(i,j)})\subset  R_{(i,j)}$,
we  see that $\mathfrak T$ 
is isomorphism of the graded modules $M$ and $\mathcal{V}$, so it is the so-called   graded isomorphism of degree $(0,0)$.

Let 
$ P^*(a,b)= \mathfrak{T}(P(x,y)) $, $ Q^*(a,b)= \mathfrak{T}(Q(x,y)) $, 
	\be\label{Aop}
{\mc A}(V)=\sum_{(i,j)\in {  {\mc  S}} } ((P^*(a,b)  i+ Q^*(a,b) j){\mc D}_{ij} (V)
+ (q P^*(a,b)+p Q^*(a,b) )V, 
	\ee 
	where for all $(i,j)\in {\mc S}$,
	\begin{equation*}
	    \begin{aligned}
	{\mc D}_{ij}&:{\mc V}\to {\mc V},\\
	 {\mc D}_{ij}&=a_{ij}\frac{\partial}{\partial a_{ij} }+b_{ij}\frac{\partial}{\partial b_{ij}, },\ 
	\end{aligned} 
	\end{equation*}
	are  differentiation operators.

\begin{teo}\label{th2}
The following diagram is commutative.
\begin{center}
\begin{tikzcd}
M \arrow[r, "{\mc D}"] \arrow[d, "\mathfrak{T}"]
&  {M} \arrow[d, "\mathfrak{T}" ] \\
{\mc V} \arrow[r,  "{\mc A}"]  
&   {\mc V }
\end{tikzcd}
\end{center}
\end{teo}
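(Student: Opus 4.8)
The statement to prove is that $\mathfrak{T}\circ{\mc D}={\mc A}\circ\mathfrak{T}$ on $M$, equivalently (since $\mathfrak{T}$ is an isomorphism) that ${\mc A}=\mathfrak{T}\circ{\mc D}\circ\mathfrak{T}^{-1}$. The plan is to verify the identity $\mathfrak{T}({\mc D}\Psi)={\mc A}(\mathfrak{T}\Psi)$ on one homogeneous element and then invoke linearity: since $\mathfrak{T}$, ${\mc D}$ and ${\mc A}$ are $\Q$-linear and each ${\mc Q}'$-graded component of $\mathfrak{T}({\mc D}\Psi)$ and of ${\mc A}(\mathfrak{T}\Psi)$ involves only finitely many graded components of $\Psi$, it is enough to treat $\Psi=v\,x^{q+s_1}y^{p+s_2}\in M_{(s_1,s_2)}$ with $v=v(a,b)\in R_{(s_1,s_2)}$ and $(s_1,s_2)\in{\mc Q}'$, for which $\mathfrak{T}(\Psi)=v$. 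Throughout I use that $\mathfrak{T}$ is evaluation at $x=y=1$ and that $\ps=P(1,1)$, $\qs=Q(1,1)$.

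For the left-hand side: as ${\mc D}=P\partial_x+Q\partial_y$ differentiates only in $x$ and $y$, we have ${\mc D}(\Psi)=v\cdot{\mc D}(x^{q+s_1}y^{p+s_2})$. Writing $P=x(p+\widetilde P)$, $Q=y(-q+\widetilde Q)$ to absorb the negative powers of $x,y$ created by the derivatives, a short computation gives
$$
{\mc D}(x^{q+s_1}y^{p+s_2})=\Big[(s_1p-s_2q)+(s_1+q)\widetilde P(x,y)+(s_2+p)\widetilde Q(x,y)\Big]x^{q+s_1}y^{p+s_2}.
$$
Applying $\mathfrak{T}$, i.e. putting $x=y=1$, and substituting $\widetilde P(1,1)=\ps-p$, $\widetilde Q(1,1)=\qs+q$, the terms carrying no factor $\ps$ or $\qs$ cancel — indeed $(s_1p-s_2q)+\big(-(s_1+q)p+(s_2+p)q\big)=0$ — leaving
$$
\mathfrak{T}({\mc D}\Psi)=\big[(s_1+q)\ps+(s_2+p)\qs\big]\,v.
$$

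For the right-hand side the key point is the following weighted-degree identity: for all $(r,s)\in{\mc Q}$ and all $h\in R_{(r,s)}$,
$$
\sum_{(i,j)\in{\mc S}}\big(\ps\,i+\qs\,j\big)\,{\mc D}_{ij}(h)=\big(\ps\,r+\qs\,s\big)\,h ,
$$
that is, $\sum_{(i,j)\in{\mc S}}(\ps\,i+\qs\,j){\mc D}_{ij}$ acts on $R_{(r,s)}$ as multiplication by $\ps\,r+\qs\,s$. By linearity one checks this on a monomial $h=\monom$ with $L(\mu,\nu)=(r,s)$: there ${\mc D}_{ij}(\monom)$ equals $\monom$ times the exponent with which $(a_{ij},b_{ij})$ occurs in $\monom$, namely $\mu_{ij}$ when $(i,j)\in S$ together with $\nu_{ij}$ when $(j,i)\in S$. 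Splitting the sum over ${\mc S}$ into the $a$-contribution (a sum over $S$) and the $b$-contribution (a sum over $\{(j,i):(i,j)\in S\}$, which after the change of index $(i,j)\mapsto(j,i)$ is again a sum over $S$), and using the convention $a_{ij}=b_{ji}=0$ for $(i,j)\notin S$ to drop the spurious terms, the weighted sum collapses to $\ps\sum_{(i,j)\in S}(i\,\mu_{ij}+j\,\nu_{ji})+\qs\sum_{(i,j)\in S}(j\,\mu_{ij}+i\,\nu_{ji})=\ps\,L_1(\mu,\nu)+\qs\,L_2(\mu,\nu)$ by the definition \eqref{L1} of $L$. Now apply this with $(r,s)=(s_1,s_2)$, $h=v$, and add the remaining summand $(\pqs)\,v$ of \eqref{Aop}: this gives ${\mc A}(v)=\big[(s_1+q)\ps+(s_2+p)\qs\big]\,v$, which is exactly the expression for $\mathfrak{T}({\mc D}\Psi)$ found above. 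Hence the diagram commutes.

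The only genuinely delicate step is the index bookkeeping in the weighted-degree identity: one must keep straight the three sets $S$, $\{(j,i):(i,j)\in S\}$ and their union ${\mc S}$, the placement of the exponents $\mu_{ij}$ versus $\nu_{ji}$ inside $\monom$, and the zero convention, so that the $a$- and $b$-parts of $\sum(\ps\,i+\qs\,j){\mc D}_{ij}$ recombine precisely into $L_1$ and $L_2$ with the right coefficients. Everything else is a brief symbolic manipulation; the pleasant feature is the automatic cancellation, in the first computation, of all the terms that do not carry a factor $\ps$ or $\qs$.
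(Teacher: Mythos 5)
Your proof is correct and follows essentially the same route as the paper: reduce to a homogeneous element $v\,x^{q+s_1}y^{p+s_2}$ with $v\in R_{(s_1,s_2)}$ and use the Euler-type identity that $\sum_{(i,j)\in{\mc S}}(\ps\, i+\qs\, j){\mc D}_{ij}$ acts on $R_{(s_1,s_2)}$ as multiplication by $\ps\, s_1+\qs\, s_2$, which is exactly the paper's computation via \eqref{eq:Drsvij} and the definition \eqref{L1} of $L$. The only cosmetic difference is that the paper splits ${\mc D}={\mc D}_1+{\mc D}_2$ and verifies $\mathfrak{T}\circ{\mc D}_j={\mc A}_j\circ\mathfrak{T}$ separately (the semisimple part ${\mc A}_1$ being reused later), whereas you handle the combined operator in a single computation.
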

\begin{proof}
	To each of the operators $\mc {\mc D}_1$, $\mc {\mc D}_2: M\to M$ we will assign a formal differential operator ($\Q$-endomorphism) $\mc A_i$, $i=1,2$,  on $\mc V $ in  view of the isomorphism $\mathfrak{T}$, (see \eqref{eq:T}), such that
	\be  \label{eq:DA}
	\mathfrak{T}\circ {\mc D}_j = {\mc A}_j \circ \mathfrak{T},\ \ j=1,2,
	\ee 
	and, consequently 
	\be
	 {\mc A}:= {\mc A}_1+ {\mc A}_2=\mathfrak{T}\circ ({\mc D}_1+{\mc D}_2)\circ \mathfrak{T}^{-1}.
	\ee
Let  operators $\mc A_1$, $\mc A_2$ act  as follows. For any $V\in \mc V$,
	\be \label{A1}
	\mc A_1 (V) = \sum_{(i,j)\in {\mc S}} ( p i- q j){\mc D}_{ij}(V),
	\ee
	\be 
	\mc A_2 (V) = \left(\sum_{(i,j)\in {\mc S}} (\as i+ \bs j){\mc D}_{ij} + q P^*(a,b) + p Q^*(a,b) \right)V.
	\ee
	
We proceed by proving \eqref{eq:DA}. It is enough to check the equality  for every polynomial of the form $v_{ij}x^{q+i}y^{p+j}$, where  $v_{ij}\in R_{(i,j)}$. Then,
	 by \eqref{diag},
\be \label{eq:TD1}
	 \mathfrak{T}\circ {\mc D}_1(v_{ij}x^{q+i}y^{p+j})=(pi-qj)v_{ij}.
\ee
Let $(i,j)\in \mc Q'$ be fixed and write $v_{ij}=\sum_{L(\mu,\nu)=(i,j)} \kappa(\mu,\nu)\monom$. Then, for any $(r,s)\in S$,
\be
\begin{aligned} \label{eq:Drsvij}
{\mc D}_{rs}v_{ij}&=\sum_{L(\mu,\nu)=(i,j)} \kappa(\mu,\nu){\mc D}_{rs}\monom \\
				  &= \sum_{L(\mu,\nu)=(i,j)} (\mu_{rs}+\nu_{rs})\kappa(\mu,\nu)\monom.
\end{aligned}
\ee 
Hence, 
\begin{align*}
	{\mc A}_1\circ \mathfrak{T}(v_{ij}x^{q+i}y^{p+j})&={\mc A}_1v_{ij}\\
	&= \sum_{(r,s)\in {\mc S}} ( p r- q s){\mc D}_{rs}v_{ij}\\
	&= \sum_{(r,s)\in {\mc S}} \sum_{L(\mu,\nu)=(i,j)} ( p r- q s) (\mu_{rs}+\nu_{rs})\kappa(\mu,\nu)\monom.
\end{align*}
Recall that $\sum_{(r,s)\in {\mc S}}r(\mu_{rs}+\nu_{rs})=L_1(\mu,\nu)=i$ and $\sum_{(r,s)\in {\mc S}}s(\mu_{rs}+\nu_{rs})=L_2(\mu,\nu)=j$ so,
	$${\mc A}_1\circ \mathfrak{T}(v_{ij}x^{q+i}y^{p+j})=(pi-qj)v_{ij},$$
which  equals to \eqref{eq:TD1}. Thus, ${\mc A}_1$
satisfies \eqref{eq:DA}.

We repeat a similar reasoning  for ${\mc A}_2$. Applying \eqref{D2psi} we compute
\begin{align*}
	 \mathfrak{T}\circ {\mc D}_2(v_{ij}x^{q+i}y^{p+j})
	 &=\mathfrak{T}\sum_{(r,s)\in {\mc S}}(-a_{rs}(q+i)+b_{rs}(p+j))v_{ij}x^{q+r+i}y^{p+s+j}\\
	 &=\sum_{(r,s)\in {\mc S}}(-a_{rs}(q+i)+b_{rs}(p+j))v_{ij}\\
	 &=(\as (q+i)+\bs (p+j))v_{ij},
\end{align*}
while by application of \eqref{eq:Drsvij} and \eqref{L1}
\begin{align*}
	{\mc A}_2\circ \mathfrak{T}(v_{ij}x^{q+i}y^{p+j})&={\mc A}_2(v_{ij})\\
	&=  \sum_{(r,s)\in {\mc S}} (\as r+ \bs  s){\mc D}_{rs}(v_{ij}) +\\
&\phantom{=} ( q \as +p \bs )v_{ij} \\
	&= \sum_{(r,s)\in {\mc S}} ( \as r+ \bs s) 
\sum_{L(\mu,\nu)=(i,j)} (\mu_{rs}+\nu_{rs})\kappa(\mu,\nu)\monom
\\ & \phantom{=} 	+( q \as + p \bs ) v_{ij}\\
	&=\sum_{L(\mu,\nu)=(i,j)} ( \as L_1(\mu,\nu) +  \bs  L_2(\mu,\nu)) \kappa(\mu,\nu)\monom
\\ & \phantom{=}	+(q \as + p \bs ) v_{ij}\\
	&=((q+i)\as + (p+j) \bs ) v_{ij}
\end{align*}
as desired. 
\end{proof}

\begin{remark}
    {\rm In the case of $1:-1$ resonant system \eqref{gs} the dual operator $\mc A$ was introduced in \cite{RR} using a different approach. }
\end{remark}

We order the  elements of  ${\mc Q}'$  using the degree lexicographic term order, so here and below for $V \in \mc V$
the notation 
$V=V_{km}+h.o.t. $
means that $V$ is of the form
$$
V=V_{km}+\sum_{(i,j)>(k,m)} V_{ij},
$$
where $ V_{km}\in R_{(k,m)}, \  V_{ij}\in R_{(i,j)}.  $

\begin{teo}\label{propAV}
If system  \eqref{gs} has an analytic or formal  first integral, then the  equation
\be
\label{eq:AV}
\mathcal{A}(V)=0
\ee
 has a formal 
power series solution $V=1+h.o.t.\in \mc V$. 
\end{teo}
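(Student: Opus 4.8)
The plan is to combine Theorem~\ref{th1} with the commutative diagram of Theorem~\ref{th2}. First I would invoke the classical fact recalled in the Preliminaries: if system~\eqref{gs} has an analytic (or formal) first integral represented by a convergent power series, then it has a first integral of the special form~\eqref{Int}, hence a formal first integral $\Psi\in M$ with $v_{0,0}\neq 0$, which after normalization we may take with $v_{0,0}=1$. Now I would run the normalization procedure of Theorem~\ref{th1}: there is a unique series $\Psi_0\in M$ in the canonical normal form, satisfying ${\mc D}(\Psi_0)=\sum_{k\ge 1}g_{qk,pk}(x^qy^p)^{k+1}\in\ker{\mc D}_1$ with $g_{qk,pk}\in R_{(qk,pk)}$. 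The key point is that the existence of a genuine first integral of the form~\eqref{Int} forces all the obstruction polynomials $g_{qk,pk}$ to vanish, so that ${\mc D}(\Psi_0)=0$; this is essentially the content of the classical theory (e.g.\ \cite{RS_BMS}, \cite{RS}) relating the normal-form series to the focus/resonance quantities, and I would cite it rather than reprove it.

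Next I would transport this across the isomorphism $\mathfrak T$. Set $V:=\mathfrak T(\Psi_0)\in\mc V$. Since $\Psi_0 = x^qy^p(1+\text{higher-order terms in }x,y)$ and each coefficient $v_{km}$ lies in $R_{(k,m)}$, applying $\mathfrak T$ (evaluation at $x=y=1$) gives $V = v_{0,0} + \sum_{(k,m)>(0,0)}v_{km} = 1 + h.o.t.$, where the leading term sits in $R_{(0,0)}=\Q$ and equals $1$ by part~(c) of Theorem~\ref{th1}. By the commutativity of the diagram in Theorem~\ref{th2}, $\mathfrak T\circ{\mc D}={\mc A}\circ\mathfrak T$, so $\mathcal A(V) = \mathcal A(\mathfrak T(\Psi_0)) = \mathfrak T({\mc D}(\Psi_0)) = \mathfrak T(0) = 0$. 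Thus $V=1+h.o.t.$ is a formal power series solution of $\mathcal A(V)=0$, as claimed.

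The one genuine obstacle is the step asserting that an analytic first integral of \eqref{gs} forces the normal-form series $\Psi_0$ itself to satisfy ${\mc D}(\Psi_0)=0$, i.e.\ that all $g_{qk,pk}$ vanish. This is not purely formal bookkeeping: it uses the uniqueness in Theorem~\ref{th1} together with the fact that the focus quantities $g_{qk,pk}$ generate (up to radical) the ideal whose vanishing is equivalent to local integrability — precisely the statement that the integrability variety equals the variety of the $g_{qk,pk}$. I would handle this by first reducing a general first integral of the form~\eqref{Int} to the canonical normal form (subtracting off multiples of $(x^qy^p)^k$ to kill the diagonal coefficients $v_{qk,pk}$, which does not change the property of being a first integral since ${\mc D}((x^qy^p)^k)=0$), thereby obtaining a first integral that is already in canonical normal form; by uniqueness it coincides with $\Psi_0$, and being a first integral means ${\mc D}(\Psi_0)=0$, so every $g_{qk,pk}=0$. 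The only subtlety is checking that the reduction terminates grading-piece by grading-piece, which follows because on each fixed multidegree $(k,m)\in\mc Q'$ only finitely many terms are involved (Proposition~\ref{pro:mult_grad}, part~2) and the diagonal corrections act triangularly with respect to the degree-lexicographic order.
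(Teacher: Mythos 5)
Your overall architecture (produce a first integral of the form \eqref{Int} lying in $M$, then push it through $\mathfrak{T}$ and invoke Theorem \ref{th2}) matches the paper's, but the step you yourself flag as the ``one genuine obstacle'' is handled incorrectly. You reduce a first integral $\Psi$ of the form \eqref{Int} to canonical normal form by subtracting multiples of $(x^qy^p)^k$, justified by the claim ${\mc D}((x^qy^p)^k)=0$. That claim is false: only the linear part annihilates these monomials, ${\mc D}_1((x^qy^p)^k)=0$, whereas ${\mc D}(x^qy^p)=x^qy^p\,(q\widetilde P+p\widetilde Q)$, which is not zero for a general member of the family. So your subtraction destroys the first-integral property, and the appeal to uniqueness in Theorem \ref{th1} does not go through as written. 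The standard repair is to correct the diagonal coefficients with powers of $\Psi$ itself: choose $f(w)=w+c_1w^2+c_2w^3+\cdots$ inductively so that $f(\Psi)$ has vanishing coefficients at $(x^qy^p)^{k+1}$; then $f(\Psi)$ is still a first integral, its coefficients remain in the correct graded pieces $R_{(i-q,j-p)}$, and uniqueness in Theorem \ref{th1} gives ${\mc D}(\Psi_0)=0$. But in fact the whole detour through Theorem \ref{th1} is unnecessary: equation \eqref{eq:AV} does not require $V$ to come from a series in canonical normal form, so once you have a first integral $\Psi\in M$ of the form \eqref{Int} with $v_{00}=1$ you may simply set $V=\mathfrak{T}(\Psi)=1+h.o.t.$ and conclude $\mathcal{A}(V)=\mathfrak{T}({\mc D}(\Psi))=0$ directly from Theorem \ref{th2} --- which is exactly what the paper does.

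The second divergence is how the first integral of the form \eqref{Int} is obtained. You cite the fact recalled in the Preliminaries, which is stated there for a first integral given by a \emph{convergent} power series; the theorem also admits a merely formal first integral, and it is precisely for that case that the paper invokes the normal-form results of \cite{BW,Zung}: the system is conjugated to a Poincar\'e--Dulac normal form admitting a nonconstant formal integral, Bruno's Condition A then yields the integral $x^qy^p$ of the normal form, and pulling back gives an integral of the form \eqref{Int} in the original coordinates. If you want your argument to cover the formal hypothesis, and also to guarantee that the resulting $\Psi$ genuinely lies in $M$ (coefficients polynomial in $(a,b)$ and homogeneous of the right multidegree), so that $\mathfrak{T}$ is applicable, you need this or a comparable argument rather than the Preliminaries statement alone.
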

\begin{proof}
By the results of \cite{BW,Zung}  if system \eqref{gs} has a local 
 analytic or formal first integral, then it can be transformed to 
a Poincar\'e-Dulac normal form, which admits a non-constant formal  first integral.
Then, by Lemma 3 of \cite{BW} the normal form satisfies Bruno's Condition A,
so it has the first integral  $x^q y^p$.    
Going back to the original coordinates we see that system \eqref{gs} has a first integral $\Psi(x,y)$  of 
the form \eqref{Int}. Thus, by Theorem \ref{th2} $\mathfrak{T}(\Psi)$ is a solution to \eqref{eq:AV}.
\end{proof}

{\it 
From the above  theorems we see that the space $\mathcal{V}$ is a kind of dual space to $M$ and 
the operator $\mc A$ is dual to $\mc D$.}

Note that
to find a solution to  equation \eqref{eq:AV} on $\C^{2\ell}$
 one can try to use the  Lagrange-Charpit system associated with  equation \eqref{eq:AV},
\be \label{Vvf}
\begin{aligned}
\dot a_{ij}  =& a_{ij} (\ps  i+ \qs  j), \ (i,j)  \in {\mc S},\\
\dot b_{ij}  =& b_{ij} ( \ps  i+  \qs j), \ (i,j)  \in {\mc S},\\
\dot w\phantom{_{ij}}  =& -w
 (\pqs  ).
\end{aligned}
\ee
We call system \eqref{Vvf} the  (non-homogeneous) dual system and  
\be \label{Vvf_sh}
\begin{aligned}
\dot a_{ij}  =& a_{ij} (\ps  i+ \qs j), \ (i,j)  \in {\mc S},\\
\dot b_{ij}  =& b_{ij} (\ps  i+ \qs j), \ (i,j)  \in {\mc S},
\end{aligned}
\ee
the homogeneous dual system (since it corresponds 
to the homogeneous part of operator $\mc A$).

Denote by ${\mc X}$ the   differential operator 
$$ {\mc X}=\sum_{(i,j)\in S}  \frac{\partial}{\partial a_{ij}} \dot a_{ij} +  \sum_{(i,j)\in S} \frac{\partial}{\partial b_{ji}}\dot b_{ji} - \frac{\partial}{\partial w} \dot w. $$

\begin{pro}
A series  $\Phi$ of the form $\Phi=w V$, where $V\in \mathcal{V}$, is a first integral of system \eqref{Vvf} if and only if $V$ is a solution 
to \eqref{eq:AV}.
\end{pro}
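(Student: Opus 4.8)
The plan is to compute $\mathcal{X}(\Phi)$ directly and recognise it as $w$ times $\mathcal{A}(V)$; the desired equivalence then follows at once, because multiplication by the free indeterminate $w$ is injective on formal power series, so that $\mathcal{X}(\Phi)=0$ precisely when $\mathcal{A}(V)=0$. Since $\mathcal{X}$ is a derivation and $V=V(a,b)$ does not involve $w$, I would first record
\[
\mathcal{X}(\Phi)=\mathcal{X}(w)\,V+w\,\mathcal{X}(V),\qquad \mathcal{X}(w)=-\dot w=w(\pqs),
\]
the last identity being immediate from the definition of $\mathcal{X}$ and from the third equation of \eqref{Vvf}.

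The main — and essentially only — nontrivial point is the identification
\[
\mathcal{X}(V)=\sum_{(i,j)\in\mathcal{S}}(\ps\,i+\qs\,j)\,\mathcal{D}_{ij}(V)
\]
for $V=V(a,b)$. Expanding $\mathcal{X}(V)$ from its definition gives $\sum_{(i,j)\in S}a_{ij}(\ps i+\qs j)\frac{\partial V}{\partial a_{ij}}+\sum_{(i,j)\in S}b_{ji}(\ps j+\qs i)\frac{\partial V}{\partial b_{ji}}$. I would then reindex the second sum by $(i,j)\mapsto(j,i)$ so that it runs over $\{(j,i):(i,j)\in S\}$, and recombine the two sums, using $\mathcal{D}_{ij}=a_{ij}\frac{\partial}{\partial a_{ij}}+b_{ij}\frac{\partial}{\partial b_{ij}}$ together with the conventions $a_{ij}=b_{ji}=0$ for $(i,j)\notin S$ and $\mathcal{S}=S\cup\{(j,i):(i,j)\in S\}$, into a single sum over $\mathcal{S}$. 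This is the same bookkeeping with the $a$- and $b$-indices already carried out in the proof of Theorem~\ref{th2}, so I would invoke it rather than repeat it; the care needed is exactly that the $a$-variables are indexed by $S$ while the $b$-variables are indexed by its mirror set.

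Combining the two computations and using the definition \eqref{Aop} of $\mathcal{A}$ should then give
\[
\mathcal{X}(\Phi)=w\left(\sum_{(i,j)\in\mathcal{S}}(\ps\,i+\qs\,j)\,\mathcal{D}_{ij}(V)+(\pqs)V\right)=w\,\mathcal{A}(V).
\]
Therefore $\Phi=wV$ is a first integral of \eqref{Vvf}, i.e.\ $\mathcal{X}(\Phi)=0$, if and only if $w\,\mathcal{A}(V)=0$, and since $w$ is not a zero divisor this is equivalent to $\mathcal{A}(V)=0$, that is, to $V$ being a solution of \eqref{eq:AV}. Both implications of the proposition drop out of this one identity, so there is no need to argue the two directions separately.
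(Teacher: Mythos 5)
Your proposal is correct and follows essentially the same route as the paper's proof: apply the product rule to $\Phi=wV$, use $\mathcal{X}(w)=w(\pqs)$, identify $\mathcal{X}(V)$ with the homogeneous part $\sum_{(i,j)\in\mathcal{S}}(\ps i+\qs j)\mathcal{D}_{ij}(V)$, and conclude $\mathcal{X}(\Phi)=w\,\mathcal{A}(V)$ with $w$ a non-zero-divisor. The only difference is that you spell out the reindexing of the $b_{ji}$-sum over the mirror set, which the paper leaves implicit in the identity $\mathcal{X}(V)+(\pqs)V=\mathcal{A}(V)$.
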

\begin{proof}
Taking into account that ${\mc X}(w)=  
 w (\pqs )$
 we have ${\mc X}(\Phi)= w {\mc X}(V)+ V {\mc X}(w)=w ({\mc X}(V)+(\pqs )V)$.
Since $  {\mc X}(V)+(\pqs ) V={\mc A}(V)$, the claim holds. 
\end{proof}

Denote by $\widehat{\mc A}$ the differential operator 
which appears in \eqref{Aop},  that is, 
\be\label{Aophat}
\widehat {\mc A}(V)=\sum_{(i,j)\in {  {\mc  S}} } ((P^*(a,b)  i+ Q^*(a,b) j){\mc D}_{ij} (V).	\ee 

To finish this section, we give another expression for 
operator \eqref{Aophat}. For 
$$
V=\sum_{\substack{(\mu,\nu): L(\mu,\nu)\in {\mc Q}'}}  
\kappa(\mu,\nu) \left[\mu;\nu\right], \quad  
 \kappa(\mu,\nu)  \in \C,
$$
let 
$$
\mathfrak{D}:{\mc V} \to {\mc V}
$$
be defined by 
$$
\mathfrak{D}({V})=  \sum_{(\mu,\nu):L(\mu,\nu)\in {\mc Q}'}  \kappa(\mu,\nu) 
L(\mu, \nu )  \left[\mu;\nu\right]. 
$$
\begin{remark}
The operator $\mathfrak{D}$ is a particular case of the one introduced in \cite[p.17]{PR}.   
\end{remark}

For $u={u_1 \ch u_2}$, $v=(v_1,v_2)$, we define
$$
\la u,v \ra:= v\cdot u=u_1v_1+u_2v_2.  
$$
\begin{pro} For any $V\in {\mc V}$
 $$ 
 \widehat {A}(V)=\la \mathfrak{D}(V), (\ps,\qs)\ra. $$
\end{pro}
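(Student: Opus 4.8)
The plan is to unwind both sides directly on a basis monomial and compare. First I would write an arbitrary $V\in{\mc V}$ as $V=\sum_{(\mu,\nu)}\kappa(\mu,\nu)\monom$, where the sum runs over pairs $(\mu,\nu)$ with $L(\mu,\nu)\in{\mc Q}'$, and by $\Q$-linearity of all operators involved it suffices to verify the identity for a single monomial $\monom$ with $L(\mu,\nu)=(i,j)$. On that monomial, formula \eqref{eq:Drsvij} gives ${\mc D}_{rs}\monom=(\mu_{rs}+\nu_{rs})\monom$ for each $(r,s)\in{\mc S}$, so by the definition \eqref{Aophat} of $\widehat{\mc A}$ we get
$$
\widehat{\mc A}(\monom)=\sum_{(r,s)\in{\mc S}}\bigl(\ps\, r+\qs\, s\bigr)(\mu_{rs}+\nu_{rs})\,\monom.
$$

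The second step is to rearrange this sum by pulling the factors $\ps$ and $\qs$ outside. Splitting the bracket and swapping the order of summation,
$$
\widehat{\mc A}(\monom)=\left(\ps\sum_{(r,s)\in{\mc S}}r(\mu_{rs}+\nu_{rs})+\qs\sum_{(r,s)\in{\mc S}}s(\mu_{rs}+\nu_{rs})\right)\monom.
$$
By the very definition \eqref{L1} of the degree map, $\sum_{(r,s)\in{\mc S}}r(\mu_{rs}+\nu_{rs})=L_1(\mu,\nu)$ and $\sum_{(r,s)\in{\mc S}}s(\mu_{rs}+\nu_{rs})=L_2(\mu,\nu)$ — this is exactly the manipulation already used in the proof of Theorem \ref{th2} for ${\mc A}_1$. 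Hence
$$
\widehat{\mc A}(\monom)=\bigl(\ps\,L_1(\mu,\nu)+\qs\,L_2(\mu,\nu)\bigr)\monom=\la L(\mu,\nu),(\ps,\qs)\ra\,\monom,
$$
using the pairing $\la u,v\ra=u_1v_1+u_2v_2$ from the statement and the column/row conventions for $L(\mu,\nu)$ and $(\ps,\qs)$.

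The final step is to recognize the right-hand side: by the definition of $\mathfrak{D}$ we have $\mathfrak{D}(\monom)=L(\mu,\nu)\monom$, so $\la\mathfrak{D}(\monom),(\ps,\qs)\ra=\la L(\mu,\nu),(\ps,\qs)\ra\monom$, which matches what we computed. Summing back over $(\mu,\nu)$ with coefficients $\kappa(\mu,\nu)$ and invoking linearity of $\la\,\cdot\,,(\ps,\qs)\ra$ in the first argument yields $\widehat{A}(V)=\la\mathfrak{D}(V),(\ps,\qs)\ra$ for all $V\in{\mc V}$. I do not expect a genuine obstacle here; the only point requiring a little care is bookkeeping with the index set ${\mc S}$ versus $S$ and making sure the pairing's argument ordering (which component of $(\ps,\qs)$ multiplies $L_1$ versus $L_2$) is consistent with the transpose convention in \eqref{L1}, but this is purely notational and is already implicitly settled by the parallel computation in Theorem \ref{th2}.
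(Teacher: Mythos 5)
Your proof is correct and follows essentially the same route as the paper's: both reduce to a single monomial, use ${\mc D}_{rs}\monom=(\mu_{rs}+\nu_{rs})\monom$, and identify the resulting sums $\sum_{(r,s)}r(\mu_{rs}+\nu_{rs})$ and $\sum_{(r,s)}s(\mu_{rs}+\nu_{rs})$ with $L_1(\mu,\nu)$ and $L_2(\mu,\nu)$, the only difference being that the paper pulls the pairing outside the sum before evaluating on monomials, which is a cosmetic reordering of the same computation.
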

\begin{proof}
 From \eqref{Aophat} we have 
 $$
\begin{aligned}
  \widehat {\mc A}(V)&=\sum_{(i,j)\in S} 
  \la  {i \choose j} , (\ps,\qs) \ra {\mc D}_{i,j}(V)\\
  &=\la   \sum_{(i,j)\in S}  
  {i \choose j}  {\mc D}_{i,j}(V) , (\ps,\qs)  \ra.
\end{aligned}
 $$
 Observe that 
 $$
 {i \choose j}
 {\mc D}_{ij}( \monom )= \left[ {i\choose j} \mu_{ij} +  {i \ch j } \nu_{ij}\right]\monom  .
 $$
 Thus, $$
 \begin{aligned}
   \widehat {\mc A}(V)= &\la    \sum_{(\mu,\nu):L(\mu,\nu)\in {\mc Q}'}  \sum_{(i,j)\in S} \kappa(\mu,\nu) \monom  \left[ {i \choose j} \mu_{ij} +   {i \choose j} \nu_{ij} \right], (\ps, \qs)   \ra \\ 
   = & \la    \sum_{(\mu,\nu):L(\mu,\nu)\in {\mc Q}'}   \kappa(\mu,\nu)  L(\mu,\nu) \monom  , (\ps, \qs)   \ra \\ =  &\la  \mathfrak{D}(V), (\ps, \qs) \ra. 
   \end{aligned} 
$$
\end{proof}

\section{The Bautin ideal and invariants of a Lie group}
\label{BauId}

Let the map $ \Psi(x,y)$ be in the canonical normal form. 
Then $ V= \Psi(1,1)$ has the property
\be\label{Vprop}
{\mc A}( V)=\sum_{k=1}^\infty g_{qk,pk}(a,b), \quad {\rm where } \quad   g_{kq,kp}(a,b)\in R_{qk,pk}, \quad  v_{kq,kp}=0,
\ee
equivalently,
$$
{\mc D}( \Psi(x,y))=\sum_{k=1}^\infty   g_{qk,pk}(a,b) x^{qk} y^{pk}. $$

 Note that equating the coefficients
 of the same monomials on both sides of \eqref{Vprop} we obtain a simple recurrent 
 formula for computing the function $V(a,b)$,
 which 
 yields  an  efficient algorithm for computing 
 the polynomials $g_{qk,pk}$ given in Appendix of \cite{RS_BMS}.

It is not difficult to see that there is only one series $ V(a,b)\in \mc V$  of the form $ V=1+h.o.t.$ satisfying   \eqref{Vprop}, so the 
polynomials $ g_{qk,pk}$, $k\in \N$ are uniquely defined. The ideal 
$$
B=\la g_{qk,pk} \ : k \ \in \N \ra
$$  
is called the {\it Bautin ideal} of system \eqref{gs}. The variety $\vv(B)$ of $B$ is called the {\it  integrability  variety}  (or
the  center variety) of system \eqref{gs} (\cite{RS,RXZ}).
If the values of parameters of system \eqref{gs} belong to $\vv(B)$, then the corresponding system \eqref{gs} admits 
a local first integral of the
 form \eqref{e:Psi}.

In this section we discuss  some properties of the Bautin ideal and the integrability variety
of system \eqref{gs}.

\subsection{Integrability quantities and the Bautin ideal}


It is clear that 
there are infinitely many series 
$\wt  V=1+h.o.t.  \in \mathcal{V}$
and infinitely many corresponding  polynomials  $ \wt g_{kq,kp}(a,b)\in R_{qk,pk}$
such that 
 \be \label{eq:vgt}
{\mc A}(\wt V)=\sum_{k=1}^\infty \wt g_{qk,pk}(a,b).
\ee
The next proposition shows that for any choice of  $\wt  V=1+h.o.t.  \in \mathcal{V}$
and  $\wt g_{kq,kp}(a,b)\in R_{qk,pk}$ satisfying \eqref{eq:vgt} the 
polynomials  $\wt  g_{kq,kp}(a,b)$ define the same variety.

\begin{pro}\label{pro:4}
  Assume that $\wt  V=1+h.o.t.  \in \mathcal{V}$
  and  
$\wt g_{qk,pk}(a,b)\in R_{qk,pk}
$ satisfy \eqref{eq:vgt}.
Let $$
\wt B=\la\wt  g_{qk,pk} \ : k \ \in \N \ra.
$$  
Then $\vv(B)=\vv(\wt B)$.
\end{pro}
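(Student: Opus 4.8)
The plan is to show the two varieties $\vv(B)$ and $\vv(\wt B)$ coincide by establishing that the two families of polynomials $\{g_{qk,pk}\}$ and $\{\wt g_{qk,pk}\}$ generate the same radical, or at least define the same zero set, by an inductive comparison indexed by the degree-lexicographic order on $\mc Q'$. The crucial structural input is that both $V$ and $\wt V$ start with the constant term $1$, and that the equation $\mc A(W)=\sum_k h_{qk,pk}$ with $W=1+h.o.t.$ and $h_{qk,pk}\in R_{qk,pk}$, read off degree by degree in the $\mc Q$-grading, recursively determines the homogeneous components $W_{k_1,k_2}$ for $(k_1,k_2)$ not of the form $(qk,pk)$, while on the diagonal degrees $(qk,pk)$ it leaves a free choice that is absorbed into $h_{qk,pk}$. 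Thus the difference between the two solutions is entirely governed by the diagonal components.

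First I would set up the recursion: writing $\mc A = \widehat{\mc A} + (\pqs)\cdot(\,\cdot\,)$ and using that $\widehat{\mc A}$ raises degree while preserving the $\mc Q$-grading in the sense made precise by the proof of Theorem~\ref{th2} (the component of $\mc A(W)$ in degree $(k_1,k_2)$ involves $W_{k_1,k_2}$ only through the semisimple part $\mc A_1$, i.e. with the coefficient $pk_1-qk_2$), I would observe that for $(k_1,k_2)$ off the diagonal the component $W_{k_1,k_2}$ is uniquely determined by the lower components, exactly as in the proof of Theorem~\ref{th1}. On the diagonal degrees the coefficient $pk_1-qk_2$ vanishes, so $W_{qk,pk}$ is free and the corresponding degree-$(qk,pk)$ component of $\mc A(W)$ becomes $h_{qk,pk}$ plus a polynomial expression in the already-determined lower components and in $W_{qk,pk}$ itself. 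Choosing $W_{qk,pk}$ in $R_{qk,pk}$ shifts $h_{qk,pk}$ by an element of the ideal generated by $\{h_{q\ell,p\ell}: \ell<k\}$ together with the lower off-diagonal data, which in turn lies in the ideal generated by the earlier $g$'s.

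The key step is then an induction on $k$ showing that for every $k$, $\wt g_{qk,pk} \in \la g_{q1,p1},\dots,g_{qk,pk}\ra$ modulo the radical — more precisely, that on the set $\vv(\la g_{q1,p1},\dots,g_{qk-1,pk-1}\ra)$ the two polynomials $g_{qk,pk}$ and $\wt g_{qk,pk}$ have the same zero set — and symmetrically with the roles reversed. The point is: if $(a,b)$ lies in $\vv(B)$, then the unique off-diagonal components of $\Psi$ (equivalently $V$) vanish in the appropriate sense beyond what the diagonal forces, and one can show $\wt V$ restricted to that parameter value also satisfies $\mc A(\wt V)=0$, hence $\wt g_{qk,pk}(a,b)=0$ for all $k$; conversely. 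Concretely, if $(a,b)\in\vv(B)$ then $\Psi$ is a genuine first integral of the form \eqref{Int}; any other series $\wt V=1+h.o.t.$ differs from $V$ by a series in $\ker$-type components, and since $\mc A$ applied to a first integral of the form \eqref{Int} lands in $\ker\mc D_1$ with vanishing diagonal, the diagonal coefficients $\wt g_{qk,pk}(a,b)$ are forced to vanish too; the converse is symmetric.

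The main obstacle is handling the non-uniqueness on the diagonal cleanly: one must argue that the extra freedom in choosing $\wt V_{qk,pk}$ only modifies $\wt g_{qk,pk}$ by combinations (with polynomial, or even Laurent-polynomial, coefficients) of the $\wt g_{q\ell,p\ell}$ with $\ell\le k$, so that the ideal $\wt B$ is unchanged under such modifications, and then to connect $\wt B$ back to $B$. I expect the cleanest route is to prove the inclusion of varieties in one direction by the first-integral argument above — a point of $\vv(B)$ gives an actual first integral, which via Theorem~\ref{th2} and the uniqueness discussion forces $\mc A(\wt V)$ to vanish at that point for a suitable adjustment of $\wt V$, whence $\wt g_{qk,pk}$ all vanish there — and then invoke symmetry (the argument is insensitive to which solution one labels $V$ versus $\wt V$) to get the reverse inclusion, yielding $\vv(B)=\vv(\wt B)$.
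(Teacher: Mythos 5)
Your plan — to show that $\vv(B)$ and $\vv(\wt B)$ both coincide with the set of parameter values admitting a first integral of the form \eqref{Int} — is a reasonable route and differs from the paper's, which is a two-line reduction: pull $\wt V$ back to $\wt \Psi=\mathfrak{T}^{-1}(\wt V)$, use Theorem \ref{th2} to get ${\mc D}(\wt\Psi)=\sum_k \wt g_{qk,pk}(a,b)\,x^{qk}y^{pk}$, and then invoke Theorem 3 of \cite{RS_BMS}, which is precisely the statement that all such collections of quantities cut out the same variety. The whole difficulty of the proposition is concentrated in that cited theorem, and that is exactly where your argument has a gap.

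The problematic step is the assertion that at a point $(a,b)\in\vv(B)$ ``any other series $\wt V=1+h.o.t.$ differs from $V$ by a series in $\ker$-type components,'' from which you conclude $\wt g_{qk,pk}(a,b)=0$ for all $k$. This is not true as stated: as soon as some diagonal coefficient $\wt v_{qk,pk}$ is chosen nonzero, the recursion propagates it into the \emph{off-diagonal} components of $\wt V$ in all higher degrees, so $\wt V-V$ is not supported on $\ker{\mc A}_1$. What you actually know at such a point is only that $\Delta=\wt\Psi-\Psi$ (coefficients evaluated at $(a,b)$) has the form $x^qy^p\cdot O(x,y)$ and that ${\mc D}(\Delta)$ is a combination of the monomials $x^{qk}y^{pk}$; it does not follow formally that this combination is zero. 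A genuine argument is needed, e.g. the standard one: assume a minimal $k_0$ with $\wt g_{qk_0,pk_0}(a,b)\neq 0$; at each stage the lowest-order homogeneous part of $\Delta$ must lie in $\ker{\mc D}_1$ (because ${\rm im\ }{\mc D}_1\cap\ker{\mc D}_1=0$ degree by degree), hence is a multiple of $(x^qy^p)^{j}$, and can be removed by subtracting the corresponding multiple of $\Phi^{j}$, where $\Phi$ is the first integral guaranteed at the point; pushing the lowest order of $\Delta$ past the order of $\wt g_{qk_0,pk_0}x^{qk_0}y^{pk_0}$ gives the contradiction. This is, in substance, the content of the theorem of \cite{RS_BMS} that the paper cites. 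Two further soft spots: the claim in your second paragraph that altering the diagonal choices shifts $\wt g_{qk,pk}$ only by elements of the ideal generated by the earlier quantities is asserted, not proved, and is stronger than what is needed (the proposition claims equality of varieties, not of ideals); and since the identification of $\vv(B)$ with the integrability locus itself rests on the same nontrivial ingredient, the closing ``the converse is symmetric'' does not come for free either.
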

\begin{proof}
Let $\wt \Psi(x,y)= \mathfrak{T}^{-1}(\wt V)$.
By Theorem \ref{th2} 
$$
{\mc D}(\wt \Psi)=\sum_{k=1}^\infty \wt  g_{qk,pk}(a,b) x^{qk} y^{pk}. $$
Then by Theorem 3 of \cite{RS_BMS}  $\vv(B)=\vv(\wt B)$.
\end{proof}

The multigrading by $\mathcal{Q}$  depends only on the nonlinear part of system \eqref{gs},
but the structure of the integrability quantities $g_{kq,kp}$ depends also 
on $p$ and $q$, since by Theorem \ref{th1} the polynomial  $g_{kq,kp}$ is homogeneous of degree 
$(kq,kp)$.

For instance, for   the $p:-q$ resonant 
  quadratic system  
 \be\label{vg_pq}
\begin{aligned}
\dot x &= \phantom{-(} 
         p x - a_{10} x^2 - a_{01} xy - a_{-12} y^2  \\
\dot y &=           
          - q y + b_{2,-1} x^2 + b_{10} xy + b_{01} y^2 
\end{aligned}
\ee
the degree map \eqref{L1}  is 
\begin{equation} \label{Lpq}
L(\mu,\nu) = { L^1(\mu,\nu) \choose L^2(\mu, \nu)}
       = {1 \choose 0} \mu_1 
       + {0 
\choose 1} \mu_2
       + {-1 \choose 2} \mu_3
       + {2 \choose -1} \nu_1  + {1 \choose 0} \nu_2
       +  {0 
\choose 1} 
\nu_3.
\end{equation}
The corresponding group $\mc Q$ is generated by the column vectors in display \eqref{Lpq} and, in this case, coincides with $\Z^2$.

By Proposition \ref{pro:mult_grad} for  all {$k \in L(\N_0^{2\ell})$}   the $\Q$-vector space $ R_{kq,kp}$ is 
finite-dimensional.  We can describe the structure of $ R_{kq,kp}$ and of  the integrability quantities 
$g_{kq,kp}$ more precisely using the invariants of a certain group  related to \eqref{gs}.

\subsection{Invariants of a Lie group }

Let $k$ be a field and $G$ a subgroup of the multiplicative group of invertible  $n \times n$ matrices with elements in $k$. 
For a matrix $A \in G$, 
let $A {\bf x}$ denote the usual action of $G$ on  $k^n$. 
A polynomial $f \in \kxn$ is  \emph{invariant under the action of the group $G$} (or simply \emph{an invariant of $G$}) if $f({\bf x}) = f(A {\bf x})$  for every ${\bf x} \in k^n$ and every $A \in G$.

The matrix
${\rm diag}(p,-q)$
of the  linear part of \eqref{gs}  
is an infinitesimal generator of the one-parameter  group $G$  of transformations 
\be \label{xyT}
x\mapsto e^{p \phi} x, \quad y\mapsto e^{-q \phi} y.
\ee 
The action of this group on system \eqref{gs} yields the change 
of parameters 
\be \label{abT}
a_{ij}\mapsto e^{(-p i+q j)\phi  } a_{ij}, \quad b_{ji}\mapsto e^{(-p j+q i)\phi  } b_{ji},
\ee
so \eqref{abT} is the representation of \eqref{xyT} in the space of parameters of system \eqref{abT}
(in the case $p=q=1$ polynomial invariants of  \eqref{abT} were studied in \cite{Liu2,Sib1,Sib2}).
We denote by $\C[a,b]^G$ the  {ring of invariants of $G$, also called the invariant ring.}

The next statement gives   intrinsic relations 
between  the  invariants of group \eqref{abT}, 
the degree of invariants and the first integrals of the 
linearization of \eqref{Vvf}.

\begin{teo}\label{th4}
1)
The monomial $\monom$ is an invariant of the group defined in \eqref{abT}
 if and 
only if it is of  degree
 $(qk,pk)$ for some $k\in \N_0$.  \\
2)
If a monomial $\monom$ is an invariant of group \eqref{abT},
 then   $\monom$ is a first integral of system
\be \label{Vvf_L}
\begin{aligned}
\dot a_{ij}  =&  (p  i -q j)   a_{ij}  , \ (i,j)  \in S,\\
\dot b_{ji}  =& (p  j -q j) b_{ji} , \ (i,j)  \in S
\end{aligned}
\ee
(which is the linearization of the dual homogeneous system \eqref{Vvf_sh}).
\end{teo}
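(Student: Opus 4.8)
The plan is to reduce both parts to a single computation about the exponent of the monomial $\monom=\prod_{(i,j)\in S} a_{ij}^{\mu_{ij}}b_{ji}^{\nu_{ji}}$ under the action \eqref{abT}. First I would substitute \eqref{abT} into $\monom$ and collect the exponent of $e^\phi$: it equals
\be
-p\left(\sum_{(i,j)\in S} i\,\mu_{ij} + \sum_{(i,j)\in S} j\,\nu_{ji}\right) + q\left(\sum_{(i,j)\in S} j\,\mu_{ij} + \sum_{(i,j)\in S} i\,\nu_{ji}\right).
\ee
Comparing with the definition \eqref{L1} of $L$, the first parenthesis is exactly $L_1(\mu,\nu)$ and the second is $L_2(\mu,\nu)$, so the transformed monomial is $e^{(-pL_1(\mu,\nu)+qL_2(\mu,\nu))\phi}\monom$. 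Hence $\monom$ is an invariant of $G$ if and only if $-pL_1(\mu,\nu)+qL_2(\mu,\nu)=0$, i.e. $pL_1=qL_2$.

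For part 1), I would then argue that since $p$ and $q$ are mutually prime, $pL_1(\mu,\nu)=qL_2(\mu,\nu)$ forces $q\mid L_1(\mu,\nu)$ and $p\mid L_2(\mu,\nu)$; writing $L_1(\mu,\nu)=qk$ gives $L_2(\mu,\nu)=pk$ for the same integer $k$. That $k\in\N_0$ (rather than just $k\in\Z$) follows from Proposition \ref{pro:mult_grad}(3): the lattice vectors appearing here come from $(\mu,\nu)\in\N_0^{2\ell}$, and $L_1(\mu,\nu)+L_2(\mu,\nu)=\sum_{(i,j)\in S}(i+j)(\mu_{ij}+\nu_{ji})\ge 0$ since $i+j>0$ on $S$ and the exponents are nonnegative; combined with $L_1=qk$, $L_2=pk$ and $p,q>0$ this yields $k\ge 0$. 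Conversely, if $\deg\monom=(qk,pk)$ then $pL_1=p\cdot qk=q\cdot pk=qL_2$, so the $e^\phi$-exponent vanishes and $\monom$ is invariant.

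For part 2), I would simply observe that $\monom$ is invariant under \eqref{abT} for all $\phi$, so differentiating the identity $\monom(e^{(-pi+qj)\phi}a_{ij},e^{(-pj+qi)\phi}b_{ji})=\monom(a,b)$ with respect to $\phi$ at $\phi=0$, or equivalently applying the infinitesimal generator of \eqref{abT} — which is precisely the vector field $\mathcal X$ underlying system \eqref{Vvf_L} — annihilates $\monom$; thus $\monom$ is a first integral of \eqref{Vvf_L}. Alternatively, using \eqref{eq:Drsvij} one computes directly $\sum_{(r,s)\in\mathcal S}(pr-qs)\mathcal D_{rs}(\monom)=(pL_1(\mu,\nu)-qL_2(\mu,\nu))\monom$, which is the derivative of $\monom$ along \eqref{Vvf_L} and vanishes by part 1). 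I do not expect a real obstacle here; the only point requiring a little care is the sign bookkeeping in the exponents of \eqref{abT} and matching it against the convention in \eqref{L1}, and the use of Proposition \ref{pro:mult_grad}(3) to pin down $k\ge 0$ rather than merely $k\in\Z$.
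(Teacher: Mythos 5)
Your proposal is correct and follows essentially the same route as the paper: transform $\monom$ under \eqref{abT}, read off the exponent $(-pL_1+qL_2)\phi$, conclude invariance iff $pL_1(\mu,\nu)=qL_2(\mu,\nu)$, and use coprimality of $p,q$ together with positivity ($i+j>0$ on $S$ and nonnegative exponents) to get degree $(qk,pk)$ with $k\in\N_0$; part 2), which the paper dismisses as a straightforward computation, you supply correctly via the Euler-type identity $\sum(pr-qs)\mathcal D_{rs}(\monom)=(pL_1-qL_2)\monom$. The only trifles are that the infinitesimal generator of \eqref{abT} is the \emph{negative} of the vector field \eqref{Vvf_L} (harmless for the first-integral conclusion) and that \eqref{Vvf_L} as printed has a typo ($pj-qj$ should be $pj-qi$), which your computation implicitly corrects.
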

\begin{proof}
1) Under the action of  \eqref{abT} the monomial 
$\monom$ is changed to 
$$ e^{-pL_1(\mu,\nu)+qL_2(\mu,\nu)}
[\monom.$$
Thus,  $\monom$ is invariant if and only if  
\be\label{eq:conL_1}
 pL_1(\mu,\nu)-qL_2(\mu,\nu)=0.
\ee
Since $i+j > 0$ for all $(i,j)\in S$,
the latter equality is equivalent to 
\be \label{eq:conL_2}
L_1(\mu,\nu)=qk, \ L_2(\mu,\nu)=pk 
\ee 
 for some $k\in \N_0$.

Statement 2) can be verified by a straightforward computation and its proof is omitted.
\end{proof}


Denote by $\mathfrak{M}$ the $1\times 2\ell$  matrix
of equation \eqref{eq:conL_1}, 
\begin{equation} \label{lsib_M}
\mathfrak{M}=(\mathfrak{M}_1 \quad  \mathfrak{M}_2 ),
\end{equation}
where
$$ \mathfrak{M}_1=
\left( (p i_1 -q j_1) \ 
\dots \ (p i_\ell-q j_\ell)\right), \quad \mathfrak{M}_2=\left(  
 (p j_\ell- q i_\ell) \ 
 \dots\ (p j_1-q i_1)\right).
$$
Since \eqref{eq:conL_1} and \eqref{eq:conL_2} are equivalent,  the monomial $\monom$ is of degree  $(kq,kp)$ for some $k\in \N_0$ if and only if  $\mathfrak{M} \cdot (\mu,\nu)^\top=0$, that is $ (\mu,\nu) $ is an element of the monoid 
\be \label{monoidM}
{\mc M}^{(p,q)}=\{ (\mu,\nu) \ : \ \mathfrak{M} \cdot (\mu, \nu)^\top=0  \}.
\ee}


Recall that for a given $(d\times n)$-matrix $\mathfrak{A}=[\mathfrak{a}_1, \dots, \mathfrak{a}_n ]$
the matrix group $$
\Gamma_{\mathfrak{A}}={\rm diag}[  t^{\mathfrak{a}_1}  \ t^{\mathfrak{a}_2} \  
  \ 
 \dots\ t^{\mathfrak{a}_n}],  \quad t\in (\mathbb{C}^*)^d , 
$$
 is isomorphic to the group  $ (\mathbb{C}^*)^d  $ of invertible diagonal
$d \times d$-matrices, which is the $d$-dimensional algebraic torus. Now $\Gamma_{\mathfrak{A}}$
is said to be
the torus defined by $\mathfrak{A}.$ 

With the matrix $\mathfrak{M}$  we associate the  group 
$$
\Gamma_{\mathfrak{M}}={\rm diag}[  t^{p i_1 -q j_1} \ 
\dots \ t^{p i_\ell-q j_\ell} \  
 t^{p j_\ell- q i_\ell} \ 
 \dots\ t^{p j_1-q i_1}], \ \ t\in \C^*,
$$
where $\C[a,b]^{  \Gamma_{\mathfrak{M}} }$ is the invariant ring of $\Gamma_{\mathfrak{M}}$.

\begin{pro}\label{pro:5}
1)  $\C[a,b]^G= \C[a,b]^{\Gamma_{\mathfrak{M}}} $ 
and  a finite set 
$(\mu_1,\nu_1), \dots, (\mu_m,\nu_m) $ is a Hilbert basis of $\mathcal{M}^{(p,q)}$ if and only if 
  $\C[a,b]^{  G }$ 
is minimally generated as a $\C$-algebra by $[\mu_1;\nu_1], \dots, [\mu_m;\nu_m] $.\\
2)
The basis of  $ R_{kq,kp}$  as a $\Q$-vector space  is 
$$
    \beta_{kq,kp}=  \{ \monom \ : \ (\mu,\nu) \in \mathcal{M}^{(p,q)} \ {\rm and} \ L(\mu,\nu)=  (kq,kp) \}.
$$
3) For any $k\in \N_0$ we have that  $g_{kq,kp}\in \Q[a,b]^G$ and are spanned as vectors by elements of $\beta_{kq,kp}$. 
\end{pro}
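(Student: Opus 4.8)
The plan is to treat the three statements in order, deriving each from the characterization of invariants in Theorem \ref{th4} together with the recursive description of $g_{kq,kp}$ furnished by Theorem \ref{th1}.

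For part (1), I would first observe that the one-parameter group $G$ acting by \eqref{abT} and the algebraic torus $\Gamma_{\mathfrak{M}}$ act on $\C[a,b]$ by exactly the same rule on monomials: under $G$ the monomial $\monom$ is scaled by $e^{(-pL_1(\mu,\nu)+qL_2(\mu,\nu))\phi}$, and under $\Gamma_{\mathfrak{M}}$ it is scaled by $t^{\,pL_1(\mu,\nu)-qL_2(\mu,\nu)}$ (sum the exponents $p i_k - q j_k$ and $p j_k - q i_k$ against the multiplicities $\mu$, $\nu$). A monomial is therefore invariant under $G$ precisely when $pL_1(\mu,\nu)-qL_2(\mu,\nu)=0$, which is the defining relation of $\Gamma_{\mathfrak{M}}$-invariance and also, by \eqref{monoidM}, the membership condition $(\mu,\nu)\in\mathcal{M}^{(p,q)}$. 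Since both invariant rings are spanned by the invariant monomials, $\C[a,b]^G=\C[a,b]^{\Gamma_{\mathfrak{M}}}$. The torus $\Gamma_{\mathfrak{M}}$ is a diagonalizable group acting linearly on $\C^{2\ell}$, so its invariant ring is the semigroup algebra $\C[\mathcal{M}^{(p,q)}]$; a standard fact (see, e.g., \cite{MS}) identifies a minimal algebra generating set of a normal affine semigroup algebra with a Hilbert basis of the underlying monoid, and the equivalence in the statement follows. (Minimality is exactly the irreducibility of the Hilbert basis elements in $\mathcal{M}^{(p,q)}$, matched against the impossibility of factoring the corresponding monomial in the invariant ring.)

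For part (2), Theorem \ref{th4}(1) says that $\monom$ is a $G$-invariant iff it has $\mathcal{Q}$-degree $(kq,kp)$ for some $k\in\N_0$, and this degree condition is equivalent to $L(\mu,\nu)=(kq,kp)$ together with $(\mu,\nu)\in\mathcal{M}^{(p,q)}$. By definition $R_{kq,kp}$ is the $\Q$-span of all monomials of degree exactly $(kq,kp)$, so its monomial basis is precisely $\beta_{kq,kp}$; finite-dimensionality was already recorded via Proposition \ref{pro:mult_grad}(2).

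For part (3), recall from Theorem \ref{th1}(a) that $g_{qk,pk}\in R_{(qk,pk)}$. Combining this with part (2) immediately gives that $g_{qk,pk}$ is a $\Q$-linear combination of the monomials in $\beta_{kq,kp}$, each of which is a $G$-invariant by part (1)–(2); hence $g_{qk,pk}\in\Q[a,b]^G$. The only point needing care is that ``$g$ lies in the span of $G$-invariant monomials'' really does imply ``$g$ is $G$-invariant'' — this is true because each monomial in $\beta_{kq,kp}$ is individually fixed by $G$ (being invariant of a graded, in fact $(kq,kp)$-homogeneous, type), so any linear combination is fixed as well.

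The main obstacle I anticipate is part (1): not the equality $\C[a,b]^G=\C[a,b]^{\Gamma_{\mathfrak{M}}}$, which is a direct exponent computation, but the precise statement relating Hilbert bases of the monoid to \emph{minimal} algebra generators of the invariant ring. One must invoke that $\mathcal{M}^{(p,q)}$ is a saturated (normal) affine subsemigroup of $\Z^{2\ell}_{\ge 0}$ — it is the intersection of a sublattice-type condition $\mathfrak{M}\cdot(\mu,\nu)^\top = 0$ with the nonnegative orthant — so that $\C[\mathcal{M}^{(p,q)}]$ is a normal semigroup ring whose minimal generators are exactly the irreducible (Hilbert-basis) elements; this is where citing Theorem 8.6 of \cite{MS} or the analogous structural result is essential, and where one should be explicit that ``minimally generated as a $\C$-algebra'' corresponds bijectively to the uniqueness of the Hilbert basis.
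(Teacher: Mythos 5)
Your proposal is correct and takes essentially the same approach as the paper, whose proof of part 1 consists of observing that $\Gamma_{\mathfrak{M}}$ is the same group as \eqref{abT} and citing Lemma 1.4.2 of Sturmfels' \emph{Algorithms in Invariant Theory}, with parts 2 and 3 declared obvious. Your exponent computation, the semigroup-algebra argument (quoting Miller--Sturmfels instead of Sturmfels' lemma; note the property actually needed for uniqueness of the minimal generating set is pointedness of $\mathcal{M}^{(p,q)}$, though normality also holds), and the homogeneity argument via Theorem \ref{th1}(a) simply make explicit the details the paper leaves to the reader.
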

\begin{proof}
Since the group $\Gamma_{\mathfrak{M}}$ is the same as group \eqref{abT}, the first statement follows 
from Lemma 1.4.2 of \cite{St-AIT}.

The second and third statements are obvious. 
\end{proof}

The Hilbert basis of ${\mc M}^{(p,q)} $ 
can be easily obtained   using Algorithm 1.4.5 of
 \cite{St-AIT} from the toric ideal  $I_ {\Lambda(\mathfrak{M})}$  of 
 the Lawrence lifting of $\mathfrak{M}$ defined by  
$$
\Lambda(\mathfrak{M})= \begin{pmatrix} \mathfrak{M} & 0\\ E_{2\ell} & E_{2\ell} \end{pmatrix},
$$
where  $E_{2 \ell}$ stands for  the  $2 \ell \times 2\ell $ identity matrix. 

\begin{exstar} {\rm Consider the quadratic system \eqref{vg_pq} with $p=1$ and $q=2$,
 \be\label{qv_12}
\begin{aligned}
\dot x &= \phantom{-} 
          x -a_{10} x^2 - a_{01} xy - a_{-12} y^2  \\
\dot y &=   
         -2 y + b_{2,-1} x^2 + b_{10} xy + b_{01} y^2 \,.
\end{aligned}
\ee
In this case the matrix $\mathfrak{M}$ of \eqref{lsib_M} is
$$ 
\mathfrak{M}=[ 1\ -2 \ -5\ 4\ 1 \ -2 ]. 
$$
Following   Algorithm 1.4.5 of
 \cite{St-AIT} 
we compute the Groebner basis of the ideal 
$$ 
 \la   a_{10} - y_1 t,  a_{01}    - y_2 t^{-2}  ,  a_{-12} - y_3  t^{-5}, 
  b_{2,-1} - y_4 t^4,  b_{10} - y_5 t,  b_{01} - y_6 t^{-2} \ra  
 $$
 using an elimination term order with 
 $ \{t\}\succ  \{ a_{10}, \dots, b_{01}  \}  \succ \{ y_1, \dots, y_6\} $
 obtaining the list 
of binomials 
 \begin{multline*}
 \begin{gathered}
      b_{01}  b_{10}^2 - y_5^2 y_6, 
 b_{01}^3  b_{2,-1} - y_4 y_6^3, - a_{-12}  b_{10}^5 + y_3 y_5^5, - a_{-12}  b_{01}  b_{10}  b_{2,-1} + 
 y_3 y_4 y_5 y_6,\\ - a_{-12}^2  b_{10}^4  b_{2,-1} + y_3^2 y_4 y_5^4, - a_{-12}^2  b_{01}  b_{2,-1}^2 + 
 y_3^2 y_4^2 y_6, - a_{-12}^3  b_{10}^3  b_{2,-1}^2 + 
 y_3^3 y_4^2 y_5^3, - a_{-12}^4  b_{10}^2  b_{2,-1}^3 + \\
 y_3^4 y_4^3 y_5^2,  - a_{-12}^5  b_{10}  b_{2,-1}^4 + y_3^5 y_4^4 y_5, - a_{-12}^6  b_{2,-1}^5 + 
 y_3^6 y_4^5, - a_{01}  b_{10}^2 + y_2 y_5^2,  - a_{01}  b_{01}^2  b_{2,-1} + 
 y_2 y_4 y_6^2,\\ - a_{01}  a_{-12}  b_{10}  b_{2,-1} + y_2 y_3 y_4 y_5, - a_{01}  a_{-12}^2  b_{2,-1}^2 + 
 y_2 y_3^2 y_4^2, - a_{01}^2  b_{01}  b_{2,-1} + y_2^2 y_4 y_6, \\  - a_{01}^3  b_{2,-1} + 
 y_2^3 y_4,  - a_{10}  b_{01}  b_{10} + y_1 y_5 y_6,
- a_{10}  a_{-12}  b_{10}^4 + y_1 y_3 y_5^4, - a_{10}  a_{-12}  b_{01}  b_{2,-1} +\\ 
 y_1 y_3 y_4 y_6,- a_{10}  a_{-12}^2  b_{10}^3  b_{2,-1} + 
 y_1 y_3^2 y_4 y_5^3, - a_{10}  a_{-12}^3  b_{10}^2  b_{2,-1}^2 + 
 y_1 y_3^3 y_4^2 y_5^2, - a_{10}  a_{-12}^4  b_{10}  b_{2,-1}^3 + \\
 y_1 y_3^4 y_4^3 y_5, - a_{10}  a_{-12}^5  b_{2,-1}^4 + y_1 y_3^5 y_4^4, - a_{01}  a_{10}  b_{10} + 
 y_1 y_2 y_5, - a_{01}  a_{10}  a_{-12}  b_{2,-1} + y_1 y_2 y_3 y_4,\\ - a_{10}^2  b_{01} + 
 y_1^2 y_6, - a_{10}^2  a_{-12}  b_{10}^3 + y_1^2 y_3 y_5^3, - a_{10}^2  a_{-12}^2  b_{10}^2  b_{2,-1} + \\
 y_1^2 y_3^2 y_4 y_5^2, - a_{10}^2  a_{-12}^3  b_{10}  b_{2,-1}^2 + 
 y_1^2 y_3^3 y_4^2 y_5, - a_{10}^2  a_{-12}^4  b_{2,-1}^3 + y_1^2 y_3^4 y_4^3, - a_{01}  a_{10}^2 + \\
 y_1^2 y_2, - a_{10}^3  a_{-12}  b_{10}^2 + y_1^3 y_3 y_5^2, - a_{10}^3  a_{-12}^2  b_{10}  b_{2,-1} + 
 y_1^3 y_3^2 y_4 y_5, - a_{10}^3  a_{-12}^3  b_{2,-1}^2 + 
 y_1^3 y_3^3 y_4^2,\\ - a_{10}^4  a_{-12}  b_{10} + y_1^4 y_3 y_5, - a_{10}^4  a_{-12}^2  b_{2,-1} + 
 y_1^4 y_3^2 y_4, - a_{10}^5  a_{-12} + y_1^5 y_3.
\end{gathered}
 \end{multline*}
 From the obtained list we keep the monomials depending only on the parameters 
 of  system \eqref{qv_12}.
According to the algorithm 
these  monomials (the first monomials of the binomials) form 
 a basis of the $\Q$-algebra $\Q[a,b]^G$ and the exponents of the monomials 
  form the Hilbert basis of the monoid ${\mc M}^{(p,q)} $. 
By  Theorem \ref{th4} they are also  the first integrals of system \eqref{Vvf_L} (corresponding to system \eqref{qv_12}). 
}
\end{exstar}

As we have mentioned in the Introduction,
the most interesting and important systems in the family \eqref{gs} are $1:-1$ resonant 
systems, so now we limit the consideration to  systems 
\begin{equation} \label{gsM}
	\begin{aligned}
		\dot x &= \phantom{-} x ( 1   - \sum_{(i,j) \in S}
		a_{ij}x^{i}{y}^{j}), \\
		\dot y &=         -  y (1-\sum_{(i,j) \in S}
		b_{ji}x^{j}{y}^{i}).
	\end{aligned}
\end{equation}
  Now, the  corresponding integrability 
quantities $g_{k,k}$ have an additional structure.  For a given 
$(\mu,\nu)$ denote by $(\widehat{\mu,\nu})$ its involution under the map $\mu_{ij}\leftrightarrow\nu_{ji}$ and let $\mathcal{M}$ be  monoid \eqref{monoidM}
defined by  matrix \eqref{lsib_M} with $p=q=1$. 
We can write
$$
\mathcal{M}=\bigoplus_{k\in \N_0} \mathcal{M}_k,
$$
{
where $  \mathcal{M}_k=\{ (\mu,\nu)\in \mathcal{M} : L(\mu,\nu)=(k,k) \}$ due to the  equivalence of \eqref{eq:conL_1} and \eqref{eq:conL_2}.
}

By Proposition \ref{pro:5} 
 polynomials  $g_{k,k}$ are polynomials of the monoid ring $\Q[\mathcal{M}]$. 
Moreover, in this case, 
  {$(\mu,\nu)\in \mathcal{M}$ implies  $(\widehat{\mu,\nu})\in 
\mathcal{M}$} and by  Theorem 3.4.5 of \cite{RS} (see also \cite{Sib1,Liu2,JLR}),  the integrability quantities $g_{kk} $ have the form  
\be
\label{gkk}
 g_{k,k} = \sum_{(\mu,\nu)\in {\mc M}_k} \gamma(\mu,\nu) ( \monom - [\widehat{\mu;\nu}]).
\ee

 The ideal 
$$
I_S=\la \monom-[\widehat{\mu; \nu} ] \ : (\mu,\nu)\in \mathcal{M}  \ra
$$
is called the Sibirsky ideal of system \eqref{gs} \cite{JLR}.
From \eqref{gkk} we see that 
   $\vv(I_S)$ is a subvariety of the variety $\vv(B)$ of the Bautin ideal $B$ 
   of system \eqref{gs}, that is, we have 
the following statement.
\begin{pro}\label{pro:6}
    If $(a,b)\in \vv(I_S) $, then the corresponding system \eqref{gsM}
    has an analytic first integral in a neighborhood of the origin.
\end{pro}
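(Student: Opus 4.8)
The plan is to deduce the statement from the ideal inclusion $B\subseteq I_S$, which gives the inclusion of varieties $\vv(I_S)\subseteq\vv(B)$, and then to invoke the fact, already recorded above, that points of $\vv(B)$ correspond to systems possessing a local analytic first integral.

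First I would record that, by \eqref{gkk}, each integrability quantity $g_{k,k}$ is a $\Q$-linear combination, with \emph{constant} coefficients $\gamma(\mu,\nu)$, of the binomials $\monom-[\widehat{\mu;\nu}]$ with $(\mu,\nu)$ ranging over $\mathcal{M}_k$. Since $\mathcal{M}_k\subseteq\mathcal{M}$, every such binomial is one of the defining generators of the Sibirsky ideal $I_S$, so $g_{k,k}\in I_S$ for every $k\in\N$. Hence $B=\la g_{k,k}:k\in\N\ra\subseteq I_S$, and passing to varieties reverses the inclusion: $\vv(I_S)\subseteq\vv(B)$.

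Next, I would fix $(a,b)\in\vv(I_S)$. By the previous step $(a,b)\in\vv(B)$, i.e.\ $g_{k,k}(a,b)=0$ for all $k\in\N$. Specializing the parameters to $(a,b)$ in the identity of Theorem \ref{th1}(a), the right-hand side $\sum_{k=1}^\infty g_{k,k}(a,b)(xy)^{k+1}$ vanishes, so the corresponding series $\Psi$ (still nonconstant, since $v_{00}=1$) is a formal first integral of system \eqref{gsM} of the form \eqref{e:Psi}. Finally, the convergence statement recalled in the Preliminaries --- a formal first integral of the shape \eqref{Int} forces an analytic first integral of the same shape, cf.\ \cite{RS_BMS} --- turns $\Psi$ into an analytic first integral of \eqref{gsM} in a neighborhood of the origin, which is the assertion.

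There is no serious obstacle; the argument is a short bookkeeping chain. The only point that deserves a line of care is the first step: one must check that the coefficients $\gamma(\mu,\nu)$ in \eqref{gkk} are genuine scalars, independent of $(a,b)$ --- which they are, because $g_{k,k}\in R_{(k,k)}$ is an ordinary polynomial and $\monom$, $[\widehat{\mu;\nu}]$ are monomials --- so that the vanishing of all the binomials at $(a,b)$ indeed forces $g_{k,k}(a,b)=0$. Everything else is a direct appeal to results already established above.
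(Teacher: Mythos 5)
Your argument is correct and is essentially the paper's own: the form \eqref{gkk} shows each $g_{k,k}$ lies in $I_S$, so $B\subseteq I_S$ and $\vv(I_S)\subseteq\vv(B)$, and then the previously recorded facts about $\vv(B)$ (the canonical series $\Psi$ becomes a formal first integral there, and a formal integral of the form \eqref{Int} yields an analytic one by \cite{RS_BMS}) finish the proof. You merely spell out the bookkeeping that the paper compresses into the sentence preceding the proposition.
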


\begin{conj}[\cite{JLR}]
The set  $\vv(I_S)$ is a component of the variety $\vv(B)$, that is, it is  a proper irreducible subvariety of $\vv(B)$.
\end{conj}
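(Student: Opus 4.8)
We outline a strategy toward this conjecture, which we would pursue by treating separately its two ingredients — irreducibility of $\vv(I_S)$, and maximality of $\vv(I_S)$ among the irreducible subsets of $\vv(B)$ — the second being the genuinely open part. Throughout, the inclusion $\vv(I_S)\subseteq\vv(B)$ is already available from Proposition~\ref{pro:6} (via \eqref{gkk}).

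For irreducibility the plan is to identify $\vv(I_S)$ with the Zariski closure of the set $R$ of time-reversible systems in the family \eqref{gsM}. The set $R$ is the image of the monomial morphism $\pi:\C^*\times\C^{\ell}\to\C^{2\ell}$ sending $(\alpha;a_{u_1,v_1},\dots,a_{u_\ell,v_\ell})$ to $(a_{u_1,v_1},\dots,a_{u_\ell,v_\ell},\alpha^{c_\ell}a_{u_\ell,v_\ell},\dots,\alpha^{c_1}a_{u_1,v_1})$ for suitable integers $c_k$ fixed by the resonance; since the source is irreducible, so is $\overline R$. The equality $\vv(I_S)=\overline R$ belongs to the standard theory of the Sibirsky ideal (\cite{RS,JLR}); alternatively one sees it intrinsically by dividing each generator $\monom-[\widehat{\mu;\nu}]$ of $I_S$ by $\gcd(\monom,[\widehat{\mu;\nu}])$, which presents the zero set of $I_S$ inside the torus $(\C^*)^{2\ell}$ as a coset of the subgroup defined by the characters $(\mu,\nu)-(\widehat{\mu,\nu})$, hence irreducible, and whose closure is $\vv(I_S)$. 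Together with the existence of at least one integrable system in the family that is not time-reversible (for instance a Darboux-integrable one, which also yields $\vv(I_S)\neq\vv(B)$), this settles everything but maximality.

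For the component statement, put $d=\dim\vv(I_S)$; it suffices to show that $\vv(I_S)$ is not properly contained in any irreducible subset of $\vv(B)$. The route would be the Jacobian criterion at a \emph{generic} time-reversible system. Pick a generic $P\in\vv(I_S)$, so that $P$ is a smooth point of $\vv(I_S)$ with $\dim T_P\vv(I_S)=d$. Since $\vv(I_S)\subseteq\vv(B)$ we automatically have $T_P\vv(I_S)\subseteq T_P\vv(B)$, so it remains to prove $\dim T_P\vv(B)\le d$, that is, that some finite collection of the integrability quantities $g_{k,k}$ has Jacobian of rank $2\ell-d=\operatorname{codim}\vv(I_S)$ at $P$. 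Granting this, $P$ is a smooth point of $\vv(B)$ at which $\vv(B)$ has local dimension $d$; hence the unique component of $\vv(B)$ through $P$ is a $d$-dimensional irreducible set containing the $d$-dimensional irreducible set $\vv(I_S)$, so the two coincide and $\vv(I_S)$ is a component.

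The rank computation is the crux, and it is exactly what keeps the statement conjectural. Using the $\mathcal Q$-grading, write each $g_{k,k}$ in the form \eqref{gkk} in the basis $\beta_{k,k}$ of $R_{(k,k)}$, restrict to the chart supplied by $\pi$, and observe that on the torus the differential of the binomial $\monom-[\widehat{\mu;\nu}]$ at a generic point is a nonzero multiple of the lattice vector $(\mu,\nu)-(\widehat{\mu,\nu})$ (in logarithmic coordinates). The problem then reduces to showing that, as $k$ ranges over $\N$, the vectors $(\mu,\nu)-(\widehat{\mu,\nu})$ occurring with $\gamma(\mu,\nu)\neq0$ in some $g_{k,k}$ span (over the function field of $\vv(I_S)$) the conormal space of $\vv(I_S)$. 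The obstacle is that the $g_{k,k}$ form a very constrained subset of $I_S$: only binomials of $L$-degree $(k,k)$ occur, and only with the particular coefficients produced by the normal-form recursion of Theorem~\ref{th1}, so one must rule out systematic cancellations in \eqref{vk1k2Def} that would leave these differentials spanning a proper subspace. I would attack this either by a combinatorial analysis of that recursion — proving that the vectors $(\mu,\nu)-(\widehat{\mu,\nu})$ occurring in the $g_{k,k}$ generate a finite-index sublattice of the lattice spanned by all $(\mu,\nu)-(\widehat{\mu,\nu})$, $(\mu,\nu)\in\mathcal M$ — or, for concrete families where generators of $\sqrt B$ are known, by directly verifying a decomposition $\sqrt B=\mathcal I(\vv(I_S))\cap\mathfrak a$, the ideal $\mathfrak a$ accounting for the other minimal primes. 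A reasonable intermediate target is to prove the conjecture under a genericity hypothesis on $S$; a fully general argument seems to need uniform control of the normal-form algorithm that is not yet available.
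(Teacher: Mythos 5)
The statement you are addressing is not proved in the paper at all: it is stated (following \cite{JLR}) as an open conjecture, so there is no paper proof to compare against, and your text must therefore be judged as a free-standing argument. As such it is not a proof but a programme, and you say so yourself. The parts you actually establish are already available in the paper: the inclusion $\vv(I_S)\subseteq\vv(B)$ is exactly the content of \eqref{gkk} and Proposition~\ref{pro:6}, and irreducibility of $\vv(I_S)$ is immediate from the paper's observation that $I_S$ is a prime binomial, hence toric, ideal (equivalently from the monomial parametrization \eqref{par_tor} attached to the matrix \eqref{Mfr}, which is essentially your map $\pi$). Your reduction of the component statement to a tangent-space bound at a generic point of $\vv(I_S)$ is logically sound as a skeleton: if finitely many of the $g_{k,k}$ had Jacobian of rank $2\ell-d$ at a generic reversible point $P$, then $\dim T_P\vv(B)\le d$, the unique component of $\vv(B)$ containing the $d$-dimensional irreducible set $\vv(I_S)$ would have dimension $d$, and the two would coincide.

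The genuine gap is the step you flag as ``the crux'': nothing in your argument shows that the differentials of the $g_{k,k}$, restricted to $\vv(I_S)$, span the conormal space, i.e.\ that the lattice vectors $(\mu,\nu)-(\widehat{\mu,\nu})$ occurring with nonzero coefficients $\gamma(\mu,\nu)$ in \eqref{gkk} are rich enough. The coefficients $\gamma(\mu,\nu)$ are produced by the recursion \eqref{vk1k2Def} of Theorem~\ref{th1}, and ruling out systematic vanishing or cancellation among them is precisely the unresolved content of the conjecture; your proposed remedies (a combinatorial analysis of the recursion, a finite-index sublattice statement, or case-by-case verification when $\sqrt B$ is known) are plausible directions but are not carried out. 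Note also that even the ``proper'' part of the statement is not free in general: exhibiting a Darboux-integrable, non--time-reversible system shows $R\ne\vv(B)$ set-theoretically only after you know such a system exists in the given family \eqref{gsM} for arbitrary $S$, which again requires an argument. So the proposal correctly organizes the problem and reproduces the known ingredients, but it does not close the conjecture, and the decisive rank computation remains exactly as open as in the paper.
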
   

\begin{remark} {\rm 
The meaning of the Sibirsky ideal  from the point of view of the geometry of vector fields \eqref{gsM}
is  that  $\vv(I_S)$ is  the Zariski closure in the space of parameters  of 
  the set of time-reversible systems in family  \eqref{gsM} (\cite{JLR,RS}).  
  }
  \end{remark}

Since the ideal $I_S$ is prime and binomial, it is a toric ideal.
By Theorem 3.8  of \cite{GJR}
$I_S$ is  the toric ideal of the matrix 
\be \label{Mfr}
\widetilde{\mathfrak{M}}= \left( \begin{array}{cc} \mathfrak{M}_1 & \mathfrak{M}_2  \\ 
E_{\ell} & \widehat E_{\ell} 
\end{array} \right),
\ee 
where $ \widehat E_{\ell}$
is the $({\ell}\times{\ell})$-matrix having 1 on the secondary diagonal and the other elements equal to 0 \cite{JLR}, and  $ (\mathfrak{M}_1 \  \mathfrak{M}_2) $ is defined by \eqref{lsib_M}. 
The torus defined by the above matrix has dimension $\ell$ if system \eqref{gsM} is 
in the Poincar\'e-Dulac normal form (so $\mathfrak{M}=0$) and $\ell+1$ otherwise.

 
 Let  $\mathfrak{M}\ne 0$. Then from \eqref{Mfr} using  
 Theorem 1.1.17 of \cite{CLS} we see that the variety $\vv(I_S) $ contains the torus $(\mathbb{C}_{\ell+1})^\ell$ as Zariski open subset via the map 
 \be \label{par_tor}
 (t,y_1,\dots y_\ell) \to   ( y_1 t^{p i_1 -q j_1},  \ 
\dots , \ y_\ell  t^{p i_\ell-q j_\ell}, \  
 y_\ell t^{p j_\ell- q i_\ell}, \ 
 \dots , \ y_1 t^{p j_1-q i_1}).
 \ee

\begin{pro}\label{pro:ap}
All associated primes of the Bautin ideal of system \eqref{gs} are multigraded by the group $\mathcal{Q}$.
\end{pro}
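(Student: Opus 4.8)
The plan is to exploit the fact that the Bautin ideal $B$ is generated by homogeneous elements with respect to the $\mathcal{Q}$-grading, and then invoke the standard commutative-algebra fact that associated primes of a graded ideal are themselves graded. First I would recall from Theorem \ref{th1}(a) that for every $k \ge 1$ the integrability quantity $g_{qk,pk}$ lies in $R_{(qk,pk)}$, i.e. it is a homogeneous element of the multigraded ring $\Q[a,b]$ of degree $(qk,pk) \in \mathcal{Q}$. Hence $B = \langle g_{qk,pk} : k \in \N \rangle$ is a homogeneous (= $\mathcal{Q}$-graded) ideal of $\Q[a,b]$.

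Next I would appeal to the general principle, valid for any finitely generated grading group $A$ and any $A$-graded Noetherian ring, that every associated prime of an $A$-graded ideal is again $A$-graded. The cleanest way to get this is via the characterization of associated primes as annihilators of elements: if $\mathfrak{p} = \operatorname{Ann}(f + B)$ for some $f \in \Q[a,b]$, one shows that $\mathfrak{p}$ is generated by its homogeneous elements. Concretely, given any $h \in \mathfrak{p}$, decompose $h = \sum_{\bf a} h_{\bf a}$ into its $\mathcal{Q}$-homogeneous components; using that $B$ is graded and a "highest/lowest degree" argument (picking the extreme term in a term order on $\mathcal{Q}$, exactly as the paper already orders $\mathcal{Q}'$ by degree lexicographic order in Section 4), one deduces that each $h_{\bf a} f \in B$, so each $h_{\bf a} \in \mathfrak{p}$. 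This is the content, e.g., of standard references on graded rings (Bruns--Herzog, or Matsumura); since the paper already cites \cite{MS} for facts about multigraded rings, I would cite that (or an analogous source) for this statement rather than reproving it.

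The only genuine point requiring care is that the grading group $\mathcal{Q} \subseteq \Z^2$ need not be torsion-free-with-a-compatible-monoid-order in the naive sense — but $\mathcal{Q}$ is a subgroup of $\Z^2$, hence free abelian of rank $\le 2$, and in Section 4 the authors have already fixed a degree lexicographic term order on $\mathcal{Q}'$; this order extends to a total order on the group $\mathcal{Q}$ compatible with addition, which is exactly what the extreme-term argument needs. So there is no real obstacle here. The expected main (and essentially only) step is simply to state the reduction: $B$ is $\mathcal{Q}$-homogeneous by Theorem \ref{th1}, therefore so are all its associated primes. I would phrase the proof in two or three sentences: observe $B$ is generated by the $\mathcal{Q}$-homogeneous elements $g_{qk,pk} \in R_{(qk,pk)}$; conclude $B$ is a graded ideal; and cite the standard result that associated primes of a graded ideal over a Noetherian multigraded ring are graded.

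Thus the proof reads essentially as follows.

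\begin{proof}
By Theorem \ref{th1}, for every $k \in \N$ the integrability quantity $g_{qk,pk}$ is a homogeneous element of the $\mathcal{Q}$-multigraded ring $\Q[a,b]$, namely $g_{qk,pk} \in R_{(qk,pk)}$. Hence the Bautin ideal $B = \langle g_{qk,pk} : k \in \N\rangle$ is generated by $\mathcal{Q}$-homogeneous elements, so it is a $\mathcal{Q}$-graded ideal of the Noetherian multigraded ring $\Q[a,b]$. It is a standard fact that every associated prime of a graded ideal in a Noetherian multigraded ring is itself graded (see e.g.\ \cite{MS}): if $\mathfrak{p} = \operatorname{Ann}(f + B)$ for some $f$, then for any $h \in \mathfrak{p}$, decomposing $h$ into its $\mathcal{Q}$-homogeneous components and using that $B$ is graded together with the total order on $\mathcal{Q}$ fixed in Section 4, one obtains that each homogeneous component of $h$ again lies in $\mathfrak{p}$. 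Therefore all associated primes of $B$ are multigraded by $\mathcal{Q}$.
\end{proof}
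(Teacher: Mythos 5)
Your proposal is correct and follows essentially the same route as the paper: the paper simply observes that $\mathcal{Q}$ is a torsion-free abelian group and invokes Proposition 8.11 of \cite{MS} (associated primes of a module graded by a torsion-free group are graded), the homogeneity of the generators $g_{qk,pk}\in R_{(qk,pk)}$ from Theorem \ref{th1} being implicit. Your additional sketch of the annihilator/extreme-component argument and the remark that $\mathcal{Q}\subseteq\Z^2$ is free abelian just spell out why that cited result applies, so there is no substantive difference.
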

\begin{proof}
Obviously, the group $\mathcal{Q}$ is a torsion-free abelian group.
Thus by Proposition 8.11 of \cite{MS} 
 all associated primes of $B$  are multigraded.
\end{proof}

In the other words, Proposition \ref{pro:ap} 
states that the associated primes of the Bautin ideal 
are generated by homogeneous polynomials in the multigraded ring $\C[a,b]$.

We recall the structure of the variety  $\vv(B)$ of the Bautin ideal  in the 
 case of  the quadratic  system
 \be\label{vgExfamily}
\begin{aligned}
\dot x &= \phantom{-(} 
          x - a_{10} x^2 - a_{01} xy - a_{-12} y^2  \\
\dot y &=           
          -(y - b_{2,-1} x^2 - b_{10} xy - b_{01} y^2) \,.
\end{aligned}
\ee
For this system one can   compute three polynomials $g_{11}, g_{22}, g_{33}$ satisfying  \eqref{Vprop} and find that in this case  the variety of
the  ideal  $B= \la  g_{11}, g_{22}, g_{33} \ra $ 
consists of the following  four  irreducible components:
\begin{itemize}
\item $\vv_1 = \vv (J_1)$, \textrm{where}
      $J_1 = \la 2a_{10}-b_{10}, \ 2 b_{01}-a_{01}\ra;$

\item $\vv_2 = \vv (J_2)$, \textrm{where}
      $J_2 = \la a_{01}, \ b_{10}\ra;$

\item $\vv_3 = \vv (J_3)$, \textrm{where}
      $J_3= \la 2 a_{01}+b_{01}, \ a_{10}+2 b_{10},
                                 \ a_{01} b_{10} - a_{-12} b_{2,-1}\ra;$

\item $\vv_4 = \vv (J_4)$, \textrm{where}
      $J_4 =  \la f_1, f_2, f_3, f_4, f_5 \ra$\,,
      \textrm{where}
      \\
       $f_1 = a_{01}^3 b_{2,-1}-a_{-12} b_{10}^3$,
       $f_2 = a_{10} a_{01} - b_{01} b_{10}$,
      $f_3 = a_{10}^3 a_{-12}- b_{2,-1} b_{01}^3$,
        $f_4 = a_{10} a_{-12} b_{10}^2- a_{01}^2 b_{2,-1} b_{01}$,
         $f_5 = a_{10}^2 a_{-12} b_{10} - a_{01} b_{2,-1} b_{01}^2.$
\end{itemize}

   It is known (see e.g. \cite{ChR, RS})
that $B$ is the Bautin ideal of system \eqref{vgExfamily} and $J_1, \dots, J_4$ are its associated primes.
We see that in the agreement with Proposition \ref{pro:ap} all associated primes of $B$
are multigraded by $\mc Q$. 

The component $\vv_1$ consists of Hamiltonian systems, $\vv_4$
of time-reversible systems, and systems from $\vv_2$ and $\vv_3$ are Darboux integrable (see e.g. \cite{Mal,RS,S1,Zqv}). 
We will consider the restrictions of the operator 
\eqref{Aop} to these components in Section 7.




\section{Integrability and the dual operator}

Since 
$$
{\mc A}_1: \mc V \to \mc V
$$
is semisimple (see \eqref{eq:DA}), we have 
$$
{\mc V}= {\rm im\ } {\mc A}_1 \oplus \ker {\mc A}_1.
$$
Clearly,  $  \ker {\mc A}_1 $ is equal to the direct  sum of $R_{qk,pk},\  k\in \N$.

For a set ${\bf W}\subset \C^{2 \ell}$ we denote by 
${\bf I}({\bf W})$  the ideal of the set ${\bf W}$.
Let  $J$ be  a homogeneous ideal in the multigraded ring $\C[a,b]$.
For a $V=\sum_{(i,j)\in {\mathcal{Q}'}} V_{ij} \in \mc V $
we   define
$$   
V/J := \sum_{(i,j)\in {\mathcal{Q}'}} V_{ij}/J,  
$$
where $V_{ij}/J $ is the quotient of the 
polynomial $V_{ij}$ by $J$. 

\begin{defin} \label{def:61}
Let ${\bf W}$ be an irreducible variety in $\C^{2\ell}$
and  ${\bf I}({\bf W})$ a  homogeneous ideal. 
We say that a  series  $ V= 1 +h.o.t. \in {\mc V}$,
such that  ${\mc A}(V)\in \ker {\mc A}_1$, is  \emph{a  solution to \eqref{eq:AV} on ${\bf W}$} if  
  \be \label{AoW}
  {\mc A}(V)/{\bf I}({\bf W})=0. 
  \ee
\end{defin} 

 We use the notation  $$
{\mc A}(V)|_{\bf W}:=  {\mc A}(V)/{\bf I}({\bf W}).
$$
Note, that for $V\in \mc V$ it can be difficult  to compute $V/{\bf I}({\bf W})$. But if $V$ is a convergent series and ${\bf W}$ is an irreducible 
variety, then $V|_{\bf W }=0 $
yields  $V/{\bf I}({\bf W })=0$.  

\begin{pro} Let ${\bf W}$ be an irreducible variety in $\C^{2\ell}$ 
and    $V= 1 +h.o.t. \in {\mc V}$,
be a  solution to \eqref{eq:AV} on ${\bf W}$.
Then  for any  $(a,b)\in {\bf  W}$ the corresponding system 
\eqref{gs} admits an analytic  first integral  in a neighborhood of the origin. 
\end{pro}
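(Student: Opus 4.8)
The plan is to reduce this statement to Theorem~\ref{propAV} (or rather its proof) by transferring the equation from $\mathbf{W}$ back to the module $M$ and then specializing the parameters. First I would set $\Psi = \mathfrak{T}^{-1}(V) \in M$, so $\Psi = x^q y^p(1 + h.o.t.)$ is a formal series of the form \eqref{Int} with coefficients $v_{ij} \in R_{(i,j)}$. By Theorem~\ref{th2} the commuting diagram gives $\mathfrak{T}({\mc D}(\Psi)) = {\mc A}(V)$, and since ${\mc A}(V) \in \ker {\mc A}_1$ is a sum of homogeneous pieces in $R_{(qk,pk)}$, applying $\mathfrak{T}^{-1}$ shows ${\mc D}(\Psi) = \sum_{k\ge 1} g_{qk,pk}(a,b)\, (x^q y^p)^{k+1}$ for suitable $g_{qk,pk} \in R_{(qk,pk)}$. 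The hypothesis \eqref{AoW}, namely ${\mc A}(V)/{\bf I}({\bf W}) = 0$, says precisely that each polynomial $g_{qk,pk}$ lies in the ideal ${\bf I}({\bf W})$, hence vanishes identically on $\mathbf{W}$.

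Next I would fix an arbitrary point $(a^0,b^0) \in \mathbf{W}$ and substitute it into the series $\Psi$. Write $\Psi_0(x,y) := \Psi(x,y)\big|_{(a,b)=(a^0,b^0)}$; this is a formal power series in $x,y$ with complex coefficients, of the form $x^q y^p(1 + h.o.t.)$. Applying the differential operator ${\mc D}$ (whose coefficients now are the concrete polynomials $P(x,y), Q(x,y)$ evaluated at $(a^0,b^0)$) and using that substitution commutes with differentiation in $x,y$, we get ${\mc D}\Psi_0 = \sum_{k\ge 1} g_{qk,pk}(a^0,b^0)(x^q y^p)^{k+1} = 0$ because every $g_{qk,pk}$ vanishes on $\mathbf{W}$. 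Thus $\Psi_0$ is a formal first integral of the corresponding system \eqref{gs}.

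Finally I would invoke the standard fact, quoted in the Preliminaries (see \cite{RS_BMS}), that if a system \eqref{gs} possesses a formal first integral of the form \eqref{Int} then it possesses an \emph{analytic} first integral of the same form in a neighborhood of the origin. Applying this to $\Psi_0$ yields the desired analytic first integral for the system at $(a^0,b^0)$, and since $(a^0,b^0) \in \mathbf{W}$ was arbitrary the proposition follows.

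I do not anticipate a serious obstacle here; the one point requiring a little care is the legitimacy of the substitution step, i.e.\ that evaluating the parameters at $(a^0,b^0)$ commutes with the recursion defining $\Psi$ and with ${\mc D}$, and that the resulting $\Psi_0$ is genuinely nonzero (which is immediate since its leading term is $x^q y^p$). One should also note that the hypothesis that ${\bf I}({\bf W})$ is homogeneous in the $\mathcal{Q}$-grading is what makes ``$g_{qk,pk}/{\bf I}({\bf W}) = 0$'' equivalent to ``$g_{qk,pk} \in {\bf I}({\bf W})$'' componentwise, but since each $g_{qk,pk}$ is already $\mathcal{Q}$-homogeneous this is automatic. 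Everything else is bookkeeping with the isomorphism $\mathfrak{T}$ and Theorem~\ref{th2}.
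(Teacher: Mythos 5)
Your proposal is correct and follows essentially the same route as the paper: pull $V$ back to $\Psi=\mathfrak{T}^{-1}(V)$, use Theorem~\ref{th2} to get ${\mc D}(\Psi)=\sum_k g_{qk,pk}\,(x^qy^p)^{k+1}$ with each $g_{qk,pk}$ vanishing on ${\bf W}$ by \eqref{AoW}, specialize at a point of ${\bf W}$ to obtain a formal first integral, and invoke \cite{RS_BMS} to upgrade it to an analytic one. The extra care you take with the substitution step and the componentwise reading of ${\mc A}(V)/{\bf I}({\bf W})=0$ is consistent with the paper's (more terse) argument.
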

\begin{proof}
    Let 
     $$
{\mc A}( V)=\sum_{k=1}^\infty g_{qk,pk}(a,b), \quad {\rm where } \quad   g_{kq,kp}(a,b)\in R_{qk,pk}.
$$
By \eqref{AoW} 
\be \label{gpw}
 g_{kq,kp}(a,b)|_{{\bf W}}\equiv 0.
\ee
Let $\Psi(x,y)= \mathfrak{T}^{-1}(V)$. By Theorem \ref{th2}
$$
D(\Psi)= \sum_{k=1}^\infty g_{qk,pk}(a,b) x^{qk} y^{pk}.
$$
In view of \eqref{gpw} for any $(a,b)\in {\bf  W}$ the series $\Psi(x,y) $ is a first integral of the corresponding system \eqref{gs}. But then the system also admits an analytic first integral in a neighborhood of the origin \cite{RS_BMS}. 
\end{proof}


Assume that the  variety ${\bf W}$ is an invariant set of the  homogeneous dual  system \eqref{Vvf_sh},
 it has dimension $m$,  and 
admits the   parametrization
\be \label{Wpar}
a_{ij}= f_{ij}(t_1, \dots, t_m), \quad b_{ji}= g_{ij}(t_1,\dots, t_m).
\ee
Let $t=  (t_1,\dots, t_m)$ and 
$$
{\mc H}: \C^m  \to \C^{2 \ell}
$$
be the map defined by \eqref{Wpar}.  We  write \eqref{Wpar} in a short form as 
$$
a=f(t), \quad b=g(t).
$$

Then 
$$
\frac{d  {\mc H}}{d \tau }=J(t) \frac{dt}{d\tau},
$$
where 
$
J(t)=
\frac{\partial { (f, g)}}{\partial t } 
$
is the Jacobi matrix of the map ${\mc H}$.
Since both $ (\dot a, \dot b):= (\dot a_{ij},  \dot b_{ji})\in \C^\ell \times  \C^\ell$, $(i,j)\in S$,  defined by \eqref{Vvf_sh} and $J(t) \frac{dt}{d\tau}$ are vectors from the tangent space of ${\bf W}$, which has dimension $m$, and the rank of $J(t)$ in the generic points of the variety ${\bf W}$  equals $m$,  we can solve the equation 
\be \label{abt}
 (\dot a, \dot b)^\top 
 =J(t) \frac{dt}{d\tau},
\ee
for $ \frac{dt}{d\tau}$ obtaining the vector field 
$$\frac{dt}{d\tau}:= (\dot t_1,\dots, \dot t_m)^\top.$$  
Note that the solution of \eqref{abt} is unique as $J(t)$ has the  full column rank in the generic points of $\mathbf W$.
			
			We define the differential operator
{
					$$
			\mc A|_{\bf W}:\C[[t_1, \dots, t_m]] \mapsto  \C[[t_1, \dots, t_m]].
			$$ 
			by  
			\be \label{Awt}
			\mc A|_{\bf W}(\wt V)= \sum_{i=1}^m  \frac{\partial \wt V}{\partial t_i} \dot t_i+|t| \wt V ,
			\ee
			where $ |t|  $ is $\pqs$ evaluated on $
			\bf W$ using parametrization \eqref{Wpar}.			
			\begin{pro}\label{pro:restriction}
				Operator \eqref{Awt}
				is  a restriction of $\mc A$ to 
				$\bf W$ in the sense
				$$
				{\mc A}(V)|_{\bf W}= \mc A|_{\bf W}(\wt V),
				$$
				where 
				$$
				\wt V(t)=V(f(t),g(t)).
				$$
			\end{pro}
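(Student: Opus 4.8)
The plan is to split $\mathcal{A}$ into its first-order part $\widehat{\mc A}$ and the multiplication operator $V\mapsto(\pqs)V$, and to check that the pullback along the parametrization ${\mc H}$ carries each part into the corresponding part of \eqref{Awt}. Comparing \eqref{Aop} and \eqref{Aophat} with the definition of ${\mc X}$, and using that an element of $\mc V$ does not depend on $w$, we have for every $V\in\mc V$
$$
{\mc A}(V)=\widehat{\mc A}(V)+\left(\pqs\right)V={\mc X}(V)+\left(\pqs\right)V,
$$
so that $\widehat{\mc A}(V)$ is exactly the derivative of $V$ along the vector field of the homogeneous dual system \eqref{Vvf_sh}. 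Since $V\mapsto V/{\bf I}({\bf W})$ is additive, it therefore suffices to prove the two identities
$$
\widehat{\mc A}(V)\big|_{\bf W}=\sum_{k=1}^{m}\frac{\partial\wt V}{\partial t_k}\,\dot t_k,\qquad \Bigl(\left(\pqs\right)V\Bigr)\big|_{\bf W}=|t|\,\wt V,
$$
and then add them.

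For the first identity I would first observe that, because ${\bf W}$ is irreducible and \eqref{Wpar} parametrizes a Zariski-dense subset of it, the pullback ${\mc H}^{*}$ induces an injection of $\C[a,b]/{\bf I}({\bf W})$ into the ring of formal series in $t_1,\dots,t_m$; under this identification, for homogeneous $h$ the class $h/{\bf I}({\bf W})$ corresponds to $h\circ{\mc H}$, and hence, grade by grade, $V|_{\bf W}$ corresponds to $V\circ{\mc H}=\wt V$. Applying the chain rule to $\wt V(t)=V(f(t),g(t))$ gives
$$
\frac{\partial\wt V}{\partial t_k}=\sum_{(i,j)\in S}\left[\left(\frac{\partial V}{\partial a_{ij}}\circ{\mc H}\right)\frac{\partial f_{ij}}{\partial t_k}+\left(\frac{\partial V}{\partial b_{ji}}\circ{\mc H}\right)\frac{\partial g_{ij}}{\partial t_k}\right].
$$
Multiplying by $\dot t_k$ and summing over $k$, the coefficient of $\partial V/\partial a_{ij}$ becomes $\sum_{k}(\partial f_{ij}/\partial t_k)\,\dot t_k$, which is precisely the $a_{ij}$-entry of $J(t)\,dt/d\tau$; by \eqref{abt} this equals the $a_{ij}$-entry of $(\dot a,\dot b)$ from \eqref{Vvf_sh}, namely $\bigl(a_{ij}(\ps\, i+\qs\, j)\bigr)\circ{\mc H}$, and likewise for the $b_{ji}$-terms. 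Hence $\sum_{k}(\partial\wt V/\partial t_k)\dot t_k=\bigl({\mc X}(V)\bigr)\circ{\mc H}=\widehat{\mc A}(V)\big|_{\bf W}$. The second identity is immediate from the definitions, since $|t|$ is $(\pqs)\circ{\mc H}$ and $\wt V=V\circ{\mc H}$, so the restriction of the product equals the product of the restrictions.

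The step that genuinely uses the hypotheses, and that I expect to be the main obstacle to making fully rigorous, is the interchange of restriction to ${\bf W}$ with differentiation — equivalently, the claim that ${\mc H}^{*}$ intertwines the derivation ${\mc X}$ of $\C[a,b]/{\bf I}({\bf W})$ with the vector field $dt/d\tau$. This is valid precisely because ${\bf W}$ is an invariant set of \eqref{Vvf_sh} (so ${\mc X}$ is tangent to ${\bf W}$ and the right-hand side of \eqref{abt} lies in the column space of $J(t)$) and because $J(t)$ has full column rank $m$ at generic points (so \eqref{abt} determines $dt/d\tau$ uniquely, and the derivation induced on the dense parametrized locus extends to $\C[a,b]/{\bf I}({\bf W})$, using irreducibility of ${\bf W}$ and the homogeneity of ${\bf I}({\bf W})$ to keep the grade-by-grade bookkeeping of $V=\sum_{(i,j)\in\mc Q'}V_{ij}$ consistent). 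A secondary point deserving one line is that $\wt V=V\circ{\mc H}$ is a well-defined formal series — of the form $1+\mathrm{h.o.t.}$ when $V$ is — because ${\mc H}$ respects the relevant filtration of the parameter space.
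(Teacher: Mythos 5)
Your proposal is correct and follows essentially the same route as the paper: decompose $\mc A$ into its first-order (homogeneous) part plus multiplication by $(\pqs)$, restrict using equation \eqref{abt} (valid because $\bf W$ is invariant for \eqref{Vvf_sh} and $J(t)$ has full column rank generically), and apply the chain rule ${\rm grad\ }\wt V=({\rm grad\ } V)|_{\bf W}\,J(t)$; your version is merely written componentwise and spells out more explicitly the identification of $V/{\bf I}({\bf W})$ with $V\circ{\mc H}$, which the paper takes for granted in its standing assumptions.
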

			\begin{proof}
				Let \eqref{Wpar} be a parametrization of $\bf W$. We can write operator \eqref{Aop} as 
	$$
		{\mc A}(V)= ({\rm grad\ } V)\cdot (\dot a,\dot b)^\top+ (\pqs) V. 
	$$
				Then,  in view of \eqref{abt} and the chain rule ${\rm grad\ }\wt V=({\rm grad\ } V)|_{\bf W} J(t)$,   we have 
	$$
		\begin{aligned}
			{\mc A}(V)|_{\bf W}= & \left( ({\rm grad \ } V)\cdot (\dot a,\dot b)^\top\right)|_{{\bf W}} + \left( (\pqs)  V \right) |_{{\bf W}}
					\\
					= &  ({\rm grad\ } V)|_{{\bf W}} \cdot  J(t) \frac{dt}{d\tau}   + \left( (\pqs) V \right) |_{{\bf W}}\\
					=&  ({\rm grad \ }\wt V)  \cdot  \frac{dt}{d\tau}  + (\pqs)|_{{\bf W}}\wt V \\
					= & {\mc A}|_{\bf W}(\wt V).
				\end{aligned}
				$$			
	\end{proof}
		}

\begin{teo}\label{th:61}
Let $\bf W$ be a component of the integrability variety $\vv (B)$ of system \eqref{gs1}.
If $ V=1+h.o.t. $ is a 
solution to \eqref{eq:AV} on $\bf  W$, then $\wt V= V|_{\bf W} $ is a solution to 
\be\label{e:AonW}
\mc A|_{\bf W}(V)=0. 
\ee
\end{teo}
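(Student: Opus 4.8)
The plan is to deduce the theorem from Proposition~\ref{pro:restriction}. Its only substantive standing hypothesis on $\mathbf W$ is that $\mathbf W$ be an invariant set of the homogeneous dual system \eqref{Vvf_sh}, so the one point that really needs an argument is the verification of this invariance for a component of $\vv(B)$; I would obtain it from the multigraded structure recorded in Proposition~\ref{pro:ap}, after which the statement follows by unwinding Definition~\ref{def:61}.

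First I would record that $\mathbf I(\mathbf W)$ is homogeneous. Since $\mathbf W$ is a component of $\vv(B)$, the ideal $\mathbf I(\mathbf W)$ is a minimal prime over $B$, hence an associated prime of $B$, so by Proposition~\ref{pro:ap} it is multigraded for the $\mathcal Q$-grading; in particular $\mathbf I(\mathbf W)$ is generated by homogeneous polynomials and $\mathbf I(\mathbf W)=\bigoplus_{(r,s)\in\mathcal Q}(\mathbf I(\mathbf W)\cap R_{(r,s)})$. Moreover $\mathbf W$ is irreducible, so the hypotheses imposed on $\mathbf W$ in Definition~\ref{def:61} and in the setup preceding Proposition~\ref{pro:restriction} all hold.

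Next I would verify that $\mathbf W$ is an invariant set of \eqref{Vvf_sh}. The derivation attached to \eqref{Vvf_sh} is $\sum_{(i,j)\in\mathcal S}\dot a_{ij}\,\partial_{a_{ij}}+\dot b_{ij}\,\partial_{b_{ij}}=\sum_{(i,j)\in\mathcal S}(P^*(a,b)\,i+Q^*(a,b)\,j)\,\mathcal D_{ij}$, which is precisely the operator $\widehat{\mathcal A}$ of \eqref{Aophat}. For a homogeneous $h\in R_{(r,s)}$ the multigraded Euler identities $\sum_{(i,j)\in\mathcal S}i\,\mathcal D_{ij}(h)=r\,h$ and $\sum_{(i,j)\in\mathcal S}j\,\mathcal D_{ij}(h)=s\,h$ (the same identities used in the proof of Theorem~\ref{th2}) give $\widehat{\mathcal A}(h)=(r\,P^*(a,b)+s\,Q^*(a,b))\,h$. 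Hence $\widehat{\mathcal A}(h)\in\mathbf I(\mathbf W)$ whenever $h\in\mathbf I(\mathbf W)$ is homogeneous, and since $\widehat{\mathcal A}$ is a derivation and $\mathbf I(\mathbf W)$ is generated by such $h$, it follows that $\widehat{\mathcal A}(\mathbf I(\mathbf W))\subseteq\mathbf I(\mathbf W)$; that is, the field \eqref{Vvf_sh} is tangent to $\mathbf W$, so $\mathbf W$ is one of its invariant sets, as required.

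Finally, with these two points in place, Proposition~\ref{pro:restriction} applies to $\mathbf W$ and gives $\mathcal A(V)|_{\mathbf W}=\mathcal A|_{\mathbf W}(\widetilde V)$, where $\widetilde V(t)=V(f(t),g(t))=V|_{\mathbf W}$ is again a series of the form $1+h.o.t.$ in $t$. Since $V$ is, by hypothesis, a solution to \eqref{eq:AV} on $\mathbf W$, Definition~\ref{def:61} gives $\mathcal A(V)|_{\mathbf W}=\mathcal A(V)/\mathbf I(\mathbf W)=0$; concretely, writing $\mathcal A(V)=\sum_{k\ge 1}g_{qk,pk}(a,b)$ with each $g_{qk,pk}$ a homogeneous polynomial, this says that every $g_{qk,pk}$ vanishes on $\mathbf W$. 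Combining, $\mathcal A|_{\mathbf W}(\widetilde V)=0$, which is equation \eqref{e:AonW}. I expect the genuine obstacle to be the invariance verification --- recognising the derivation of \eqref{Vvf_sh} as $\widehat{\mathcal A}$ and combining it with the multigradedness of $\mathbf I(\mathbf W)$ --- whereas the homogeneity of $\mathbf I(\mathbf W)$ and the concluding implication are essentially direct appeals to Propositions~\ref{pro:ap} and~\ref{pro:restriction} and to Definition~\ref{def:61}.
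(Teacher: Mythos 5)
Your proof is correct, and its concluding step is exactly the paper's: invoke Proposition \ref{pro:restriction} to get ${\mc A}(V)|_{\bf W}={\mc A}|_{\bf W}(\wt V)$ and then use Definition \ref{def:61} to conclude ${\mc A}|_{\bf W}(\wt V)=0$. Where you go beyond the paper is in verifying the standing hypothesis of Proposition \ref{pro:restriction}, namely that $\bf W$ is an invariant set of the homogeneous dual system \eqref{Vvf_sh}: the paper's two-line proof simply takes the parametrization/invariance setup for granted, and elsewhere it only establishes this invariance for the quadratic family \eqref{vgExfamily} by direct Darboux-factor computations, leaving the general statement as a conjecture. Your argument for this point is sound: since ${\bf I}({\bf W})$ is a minimal (hence associated) prime of $B$, Proposition \ref{pro:ap} makes it multigraded, and the Euler-type identity $\widehat{\mc A}(h)=(r\,\ps+s\,\qs)\,h$ for $h\in R_{(r,s)}$ (the same identities $\sum_{(i,j)\in\mc S} i\,{\mc D}_{ij}(h)=rh$, $\sum_{(i,j)\in\mc S} j\,{\mc D}_{ij}(h)=sh$ used in the proof of Theorem \ref{th2}) shows every homogeneous element of ${\bf I}({\bf W})$ is a Darboux factor of $\widehat{\mc A}$ with polynomial cofactor, so the derivation preserves ${\bf I}({\bf W})$ and the field \eqref{Vvf_sh} is tangent to $\bf W$. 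This buys a self-contained guarantee that the restricted operator ${\mc A}|_{\bf W}$ is actually defined for every component of $\vv(B)$, and in fact it yields the invariance assertion of the paper's Conjecture 2 in full generality, so it is a genuine strengthening of the written proof rather than a divergence from it.
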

\begin{proof}
Let $\bf W$ be parametrized by \eqref{Wpar}.
Since  $V$ is a solution to \eqref{eq:AV} on $\bf  W$, 
$$
{\mc A}(V)|_{\bf W}=0.
$$
Then by Proposition \ref{pro:restriction} 
$$
  {\mc A}|_{\bf W}(\wt V)=  {\mc A}(V)|_{\bf W} =0.
$$
\end{proof}

The following conjecture claims   that the statement of 
Theorem \ref{th:61} is reversible. 
\begin{conj}\label{conj}
If   equation 
\eqref{e:AonW}
has a formal power series  solution $\wt V(t_1,\dots, t_m)=1+h.o.t.$,
then  equation \eqref{eq:AV} has a solution  {$ V$ such that } $ V|_{\bf W}=1+ h.o.t. $
\end{conj}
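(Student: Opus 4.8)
The plan is to reduce the conjecture to the inclusion $\mathbf W\subseteq\vv(B)$. We take $\mathbf W$ to be an invariant set of the homogeneous dual system \eqref{Vvf_sh} whose ideal $\mathbf I(\mathbf W)$ is prime and multigraded by $\mathcal Q$ -- automatic if $\mathbf W$ is a component of $\vv(B)$, by Proposition \ref{pro:ap} -- so that the genuine content is to prove $\mathbf W\subseteq\vv(B)$. Granting that, the conjecture follows immediately: by \eqref{Vprop} the canonical normal form series $V_0=\mathfrak T(\Psi)$ of Theorem \ref{th1} satisfies $\mathcal A(V_0)=\sum_k g_{qk,pk}(a,b)\in\ker\mathcal A_1$, each $g_{qk,pk}$ vanishes on $\mathbf W\subseteq\vv(B)$, hence $\mathcal A(V_0)|_{\mathbf W}=0$, and $V_0=1+h.o.t.$, so $V_0$ is a solution of \eqref{eq:AV} on $\mathbf W$ with $V_0|_{\mathbf W}=1+h.o.t.$

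To obtain $\mathbf W\subseteq\vv(B)$, I would first descend the earlier constructions modulo $\mathbf I(\mathbf W)$. Since $\mathbf I(\mathbf W)$ is homogeneous, the $\mathcal Q$-grading descends to the integral domain $\C[a,b]/\mathbf I(\mathbf W)$; write $\mathcal V_{\mathbf W}=\bigoplus_{(i,j)\in\mathcal Q'}R_{(i,j)}/\mathbf I(\mathbf W)$ and let $M_{\mathbf W}$ be the corresponding module of series of the form \eqref{Int} with coefficients in $\C[a,b]/\mathbf I(\mathbf W)$. Because $\mathcal D=P\partial_x+Q\partial_y$ has coefficients in $\C[a,b]$, it induces an operator $\mathcal D_{\mathbf W}$ on $M_{\mathbf W}$, and $\mathcal A=\mathfrak T\circ\mathcal D\circ\mathfrak T^{-1}$ induces $\mathcal A_{\mathbf W}$ on $\mathcal V_{\mathbf W}$, intertwined with $\mathcal D_{\mathbf W}$ by the descent $\mathfrak T_{\mathbf W}$ of $\mathfrak T$. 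Next I would transfer the hypothesis: the parametrization $\mathcal H$ of $\mathbf W$ induces an injection $\mathcal V_{\mathbf W}\hookrightarrow\C[[t_1,\dots,t_m]]$, $V_{\mathbf W}\mapsto V_{\mathbf W}\circ\mathcal H$, which by Proposition \ref{pro:restriction} conjugates $\mathcal A_{\mathbf W}$ with $\mathcal A|_{\mathbf W}$; choosing $\mathcal H$ to be $G$-equivariant -- possible since $\mathbf I(\mathbf W)$ is $G$-homogeneous, e.g. of the torus type \eqref{par_tor} -- I would show that the given formal solution $\wt V=1+h.o.t.$ of $\mathcal A|_{\mathbf W}(\wt V)=0$ lies in the image of this injection, so that it is the image of some $\wt V_{\mathbf W}=1+h.o.t.\in\mathcal V_{\mathbf W}$ with $\mathcal A_{\mathbf W}(\wt V_{\mathbf W})=0$.

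With $\wt V_{\mathbf W}$ in hand, set $\Psi_{\mathbf W}=\mathfrak T_{\mathbf W}^{-1}(\wt V_{\mathbf W})\in M_{\mathbf W}$. Then $\mathcal D_{\mathbf W}(\Psi_{\mathbf W})=\mathfrak T_{\mathbf W}^{-1}\bigl(\mathcal A_{\mathbf W}(\wt V_{\mathbf W})\bigr)=0$, and $\Psi_{\mathbf W}$ is a power series of the form \eqref{Int} with $v_{00}=1$ and coefficients in $\C[a,b]/\mathbf I(\mathbf W)$. Specializing the parameters to any point $(a^0,b^0)\in\mathbf W$ yields a formal first integral of the form \eqref{Int} of the corresponding system \eqref{gs}, which therefore also admits an analytic first integral in a neighbourhood of the origin by the result recalled in the Preliminaries (\cite{RS_BMS}); hence $(a^0,b^0)\in\vv(B)$ for every $(a^0,b^0)\in\mathbf W$, i.e. $\mathbf W\subseteq\vv(B)$, and the conjecture is proved via the first paragraph. (Should one want a lift $V\in\mathcal V$ restricting exactly to $\wt V_{\mathbf W}$, lift $\wt V_{\mathbf W}$ graded-piecewise along the surjections $R_{(i,j)}\twoheadrightarrow R_{(i,j)}/\mathbf I(\mathbf W)$ to some $V_1=1+h.o.t.$ with $\mathcal A(V_1)\in\mathbf I(\mathbf W)\mathcal V$, and correct $V_1$ by an element $\delta\in\mathbf I(\mathbf W)\mathcal V$ killing the $\operatorname{im}\mathcal A_1$-component of $\mathcal A(V_1+\delta)$; this is available because $\mathbf I(\mathbf W)\mathcal V$ is $\mathcal A_1$-invariant -- $\mathcal A_1$ being a derivation of $\C[a,b]$ for which $\mathbf I(\mathbf W)$ is a differential ideal -- so $\mathcal A_1$ restricts to an isomorphism of $\mathbf I(\mathbf W)\mathcal V\cap\operatorname{im}\mathcal A_1$.)

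The main obstacle is the transfer step. An \emph{a priori} arbitrary formal power series solution $\wt V(t_1,\dots,t_m)$ of $\mathcal A|_{\mathbf W}(\wt V)=0$ has coefficients that are, at the outset, only formal power series in local coordinates at a generic point of $\mathbf W$; one must show they are restrictions to $\mathbf W$ of $\mathcal Q$-homogeneous polynomials, i.e. that $\wt V$ already lies in the image of $\mathcal V_{\mathbf W}$. This is transparent when $\mathbf W$ carries a torus parametrization of type \eqref{par_tor}, since then $\C[\mathbf W]$ is spanned by monomials in $t$ and the $y_i$ and carries a manifestly monomial grading; in general it amounts to controlling the formal first integrals of the restricted dual vector field $\dot t$ near a generic point of $\mathbf W$, for which one would want to exploit the eigenvalue pattern $p i-q j$ of the linear part \eqref{Vvf_L} together with the $G$-invariance of $\mathbf W$.
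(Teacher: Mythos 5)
This statement is posed in the paper as an open conjecture --- the paper supplies no proof --- so the only question is whether your argument actually closes it, and it does not. Your first paragraph is correct but carries no real weight: once $\mathbf W\subseteq\vv(B)$ is granted, the canonical series of Theorem \ref{th1} already gives, via \eqref{Vprop}, a $V=1+h.o.t.$ with $\mathcal{A}(V)/\mathbf{I}(\mathbf W)=0$, without ever using the hypothesis that \eqref{e:AonW} has a solution. Hence the entire content of the conjecture is the implication ``formal solution $\wt V(t_1,\dots,t_m)=1+h.o.t.$ of \eqref{e:AonW} $\Rightarrow$ $\mathbf W\subseteq\vv(B)$'', and that is exactly the ``transfer step'' you leave open. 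A formal power series in the parametrizing variables has no a priori reason to lie in the image of $\mathcal{V}_{\mathbf W}$ under pullback by the parametrization: its graded pieces need not be restrictions of $\mathcal{Q}$-homogeneous polynomials, and even for a torus parametrization of type \eqref{par_tor} the monomials of $\wt V$ need not belong to the monomial subalgebra generated by the parametrizing monomials, so calling that case ``transparent'' is unjustified. The paper's own computations point at precisely this difficulty: on $\vv_3$ the Darboux factors $f_1,f_2$ found in the $(t,y)$ variables had to be pulled back to polynomials $F_1,F_2$ in $(a,b)$ by nontrivial elimination, and for $\vv_4$ the paper explicitly poses as an open problem whether a solution of the restricted equation can be produced or recognized at all without invoking integrability. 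In short, your proposal reformulates the conjecture and proves the easy converse-direction bookkeeping, but the step that would make it a theorem is missing.

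Two further points. First, the descent of $\mathcal{A}$ to $\C[a,b]/\mathbf{I}(\mathbf W)$ and the use of Proposition \ref{pro:restriction} require $\mathbf W$ to be invariant for the homogeneous dual system \eqref{Vvf_sh}; this is part of the paper's standing setup for defining $\mathcal{A}|_{\mathbf W}$, but note that for a general component of the integrability variety this invariance is itself only conjectural in the paper (it is verified only for the quadratic family), so it should be stated as a hypothesis rather than treated as automatic. Second, the parenthetical lifting argument at the end of your third paragraph (correcting $V_1$ by $\delta\in\mathbf{I}(\mathbf W)\mathcal{V}$ using that $\mathcal{A}_1$ acts invertibly on the relevant piece) is only sketched and is not needed for the conjecture as stated, since the conclusion asks for some $V$ with $V|_{\mathbf W}=1+h.o.t.$, not for a lift of the given $\wt V$; I would drop it and concentrate all effort on the transfer step, for instance by trying to prove that the graded components of any formal solution of \eqref{e:AonW} are forced, by the eigenvalue structure $pi-qj$ of the linearized dual flow, to be regular functions on $\mathbf W$ of the correct multidegree --- that statement, if true, is the conjecture.
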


As we have mentioned in  the Introduction,  the problem of local integrability for system  \eqref{gs} is to determine all systems 
(in family \eqref{gs}) which admit an analytic first integral in a neighborhood of the origin, equivalently, to determine all systems (in the family), having an analytic or formal integral of the form \eqref{Int}.  

As it has been explained above, this is equivalent to finding the variety of the Bautin ideal of 
system \eqref{gs}.
{ That is, 
given a differential operator  \eqref{Aop}, we have to find the variety of the ideal  $B=\la g_{qk,pk},  k \ \in \N \ra$, 
where  $g_{qk,pk},  k \ \in \N $, are polynomials on the right hand side of \eqref{Vprop}.
}
However, we cannot compute the infinite number of polynomials   $g_{qk, pk}$, $k \in \N $.
So, in practice, we compute the ideals $B_1,\dots, B_s$,
$$ 
B_m=\la g_{qk,pk} \ : k =1, \dots, m \ra, \ \ m=1,2,\dots,
$$
until for the chain of the varieties $\vv(B_1)\supset \vv(B_2)\supset
 \vv(B_3)\supset \dots $   we find that   $\vv(B_s)= \vv(B_{s+1})$ 
 for some $s$.  Then we decompose the variety  $\vv(B_s)$. Let  
\be  \label{Bd}
  \vv(B_s)=\cup_{i=1}^d {\bf W}_i
  \ee  be the irreducible decomposition of $\vv(B_s)$. Then we have  to prove that \eqref{Bd} is 
  the irreducible decomposition of the variety of  the Bautin ideal, that is,   $\vv(B)=  \vv(B_s)=\cup_{i=1}^d {\bf W}$.

  If Conjecture \ref{conj} is true then 
  it is sufficient to prove that 
  for {each} $i \in \{1, \dots, d\}$, 
there is a $V_i\in \mc V$, such that 
  $$
  {\mc A}(V_i)|_{{\bf W}_i}=0. 
  $$


One possibility to   solve equation \eqref{e:AonW} is 
using the Darboux method as follows. 

Let 
$\C[[t_1^\pm, \dots, t_m^\pm]]$ be the 
ring  of formal Laurent power series and $\mc B$ a differential operator acting on the ring,
\be \label{AopH} 
{\mc B}:  \C[[t_1^\pm, \dots, t_m^\pm]]
\to \C[[t_1^\pm, \dots , t_m^\pm]].
\ee


We say that 
$f(t,t^-) \in  \C[[t_1^\pm, \dots, t_m^\pm]]  $ is a Darboux factor of operator \eqref{AopH}, 
if there is 
a   $k(t) \in  \C[[t_1^\pm, \dots, t_m^\pm]]  $ such that 
$$
 {{\mc B}}(f)= k f.
$$ 
The series  $k(a,b)$ is called the cofactor of $f(a,b)$. 

The following statement is obvious.
\begin{pro} \label{propDarb}
 Assume that for  operator \eqref{AopH} there are the  Darboux factors  
$f_1, \dots, f_s$  with the corresponding cofactors $k_1, \dots, k_s$ such that 
$
k_1 \alpha_1+\dots +k_s \alpha_m \equiv 0 $ for some $\alpha_1,\dots, \alpha_s \in \C.
$
Then  the function 
$$  V= f_1^{\alpha_1} f_2^{\alpha_2} \cdots f_s^{\alpha_s}$$
satisfies
$$
 {\mc B}(V)=0. 
$$
\end{pro}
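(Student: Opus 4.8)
The plan is to verify directly that $V = f_1^{\alpha_1} f_2^{\alpha_2} \cdots f_s^{\alpha_s}$ lies in the kernel of $\mc B$ by computing $\mc B(V)$ using the fact that $\mc B$ is a first-order linear differential operator (up to the zeroth-order multiplication term), hence satisfies a Leibniz-type product rule. First I would make explicit the structure of $\mc B$: from the context of \eqref{AopH} and operators like \eqref{Awt}, $\mc B$ has the form $\mc B(\cdot) = \sum_{i} c_i(t)\,\partial_{t_i}(\cdot) + \ell(t)\,(\cdot)$ for some series $c_i$ and $\ell$; its \emph{derivation part} $\mc B_0 := \sum_i c_i(t)\,\partial_{t_i}$ is a genuine derivation, so $\mc B_0(fg) = f\,\mc B_0(g) + g\,\mc B_0(f)$ and $\mc B_0(f^\alpha) = \alpha f^{\alpha-1}\mc B_0(f)$ for any exponent $\alpha$ (formal power in the Laurent setting).

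The key steps, in order, are: (1) Write $\mc B = \mc B_0 + \ell$ where $\mc B_0$ is the derivation part and $\ell$ multiplication by the appropriate series; here I should check whether the zeroth-order term in $\mc B$ plays a role — in Proposition \ref{propDarb} as stated the hypothesis $k_1\alpha_1 + \cdots + k_s\alpha_s \equiv 0$ together with the Darboux-factor relations $\mc B(f_i) = k_i f_i$ should absorb it, so I treat $\mc B(f_i) = k_i f_i$ as the governing identity and do not need to separate out $\ell$ explicitly. (2) Apply the logarithmic-derivative / product rule: for the derivation part, $\mc B_0(V)/V = \sum_i \alpha_i\, \mc B_0(f_i)/f_i$; more carefully, since the Darboux relation is $\mc B(f_i) = k_i f_i$ and $\mc B$ acts as a derivation on products and powers when we track it through the logarithmic derivative, we get $\mc B(V) = V \sum_{i=1}^s \alpha_i \frac{\mc B(f_i)}{f_i} = V \sum_{i=1}^s \alpha_i k_i$. (3) Invoke the hypothesis $\sum_{i=1}^s \alpha_i k_i \equiv 0$ (noting the paper's display writes it with an index mismatch as $k_1\alpha_1 + \dots + k_s\alpha_m$, which I read as $\sum_{i=1}^s \alpha_i k_i \equiv 0$) to conclude $\mc B(V) = 0$.

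The one subtlety — and the main obstacle — is justifying step (2) rigorously in the ring of formal Laurent series: one must confirm that $V = f_1^{\alpha_1}\cdots f_s^{\alpha_s}$ is well-defined as an element of $\C[[t_1^\pm,\dots,t_m^\pm]]$ (which requires each $f_i$ to have a unit leading term so that fractional or negative powers make sense as formal expansions) and that the derivation $\mc B_0$ commutes with these formal power operations, i.e. $\mc B_0(f^\alpha) = \alpha f^{\alpha-1}\mc B_0(f)$ holds formally. This is routine once one fixes the convention that each Darboux factor is normalized to have an invertible lowest-order term, and it follows from the universal property of derivations extended to formal (Laurent) power series; but it is the point where the statement "is obvious" in the paper hides the only real content, so I would state the normalization hypothesis explicitly and then the computation in step (2) is a one-line application of the product and chain rules for derivations.

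\begin{proof}[Proof sketch]
Write $\mc B = \mc B_0 + \ell$, where $\mc B_0 = \sum_{i=1}^m c_i(t)\,\partial/\partial t_i$ is a derivation of $\C[[t_1^\pm,\dots,t_m^\pm]]$ and $\ell$ denotes multiplication by a fixed series. For each $i$ the Darboux relation reads $\mc B(f_i) = k_i f_i$, and $\mc B$ acts on products and on formal powers $f_i^{\alpha_i}$ through the logarithmic derivative, so
\[
\mc B(V) = V\sum_{i=1}^{s} \alpha_i\,\frac{\mc B(f_i)}{f_i} = V\sum_{i=1}^{s}\alpha_i k_i.
\]
By hypothesis $\sum_{i=1}^{s}\alpha_i k_i \equiv 0$, hence $\mc B(V) = 0$.
\end{proof}
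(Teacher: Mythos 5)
Your computation is the standard Darboux argument, and since the paper offers no proof at all (the proposition is simply declared obvious), this is surely what the authors intend; your remark that $f_i^{\alpha_i}$ only makes sense formally when each $f_i$ is normalized with an invertible (unit) lowest-order term is a legitimate point the paper glosses over. However, your handling of the zeroth-order term is not correct as written, and this is the one place where your argument, taken literally, fails. You claim the hypothesis $\sum_i\alpha_ik_i\equiv 0$ together with $\mc B(f_i)=k_if_i$ ``absorbs'' the multiplication term $\ell$, and then use the identity $\mc B(V)=V\sum_i\alpha_i\,\mc B(f_i)/f_i$. That identity is valid only when $\mc B$ is a derivation. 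If $\mc B=\mc B_0+\ell$ with $\ell\neq 0$, then $\mc B(f_i)=k_if_i$ gives $\mc B_0(f_i)=(k_i-\ell)f_i$, hence
\[
\mc B(V)=\mc B_0(V)+\ell V= V\Bigl(\sum_{i=1}^s\alpha_i k_i+\ell\bigl(1-\sum_{i=1}^s\alpha_i\bigr)\Bigr),
\]
which under the hypothesis equals $\ell\,(1-\sum_i\alpha_i)\,V$, and this is not zero unless $\ell=0$ or $\sum_i\alpha_i=1$. The latter does not hold in the paper's own application (Case 3 uses $V=f_1^{-3}f_2^{2}$, so $\sum_i\alpha_i=-1$).

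The fix is simply to read \eqref{AopH} the way the paper actually uses it: the proposition is applied to the homogeneous operators $\widehat{\mc A}_{\vv_3}$ and $\widehat{\mc A}_{\vv_4}$, i.e.\ to pure first-order operators with no zeroth-order part, so $\mc B$ should be taken to be a derivation of $\C[[t_1^\pm,\dots,t_m^\pm]]$. With that assumption your step (2) is the one-line logarithmic-derivative computation $\mc B(V)=V\sum_i\alpha_i\,\mc B(f_i)/f_i=V\sum_i\alpha_ik_i=0$, and the proof is complete. So state that hypothesis explicitly (or add $\sum_i\alpha_i=1$ if you insist on allowing a zeroth-order term), and drop the claim that the cofactor relation absorbs $\ell$ --- it does not.
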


\section{The quadratic system}

In this section we treat in some detail the case of 
quadratic system \eqref{vgExfamily}
and discuss the relations between  the  known 
first integrals of system \eqref{vgExfamily},
 the dual series and 
 Conjecture \ref{conj}.

The following operator of the form \eqref{Aop} is associated to system \eqref{vgExfamily}:
\begin{equation}\label{op} 
    \begin{aligned}
        \mathcal{A}(V)&=
  { a_{01}}  \qs  \frac {\partial V} {\partial   { a_{01}}} +   \
{ a_{10}} \ps  \frac {\partial V} {\partial   { a_{10}}} \
+   { a_{-12}} (  - \ps  + 2 \qs ) \frac {\partial V} {\partial   { a_{-12}}} \\&+  { b_{01}} \qs  \frac {\partial V} {\partial   { b_{01}}} +   \
{ b_{10}} \ps  \frac {\partial V} {\partial   { b_{10}}} \
+   { b_{2,-1}} (2  \ps  - \qs ) \frac {\partial V} {\partial   { b_{2,-1}}} \\  &+ ( \pqs )V,     
    \end{aligned}
\end{equation}
    where
\be \label{absum}
\ps = 1 - (a_{10}+ a_{01}+ a_{-12}), \ \ \qs =-1+ b_{01}+ b_{10}+ b_{2,-1}.  
\ee 
The next statement gives an interesting relation
between the components of the integrability variety and the invariant sets of the homogeneous dual system \eqref{Vvf_sh}.

\begin{pro}
   The components of the variety of the Bautin ideal 
   of system \eqref{vgExfamily} are  invariant sets 
   of system \eqref{Vvf_sh}. Moreover, they  are 
   intersections   of invariant algebraic surfaces 
   of system \eqref{Vvf} defined by homogeneous polynomials.
\end{pro}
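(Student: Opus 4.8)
The plan is to exploit the explicit list of associated primes $J_1,\dots,J_4$ of the Bautin ideal of system \eqref{vgExfamily} recalled in Section~\ref{BauId}, together with the homogeneous dual system \eqref{Vvf_sh}. For the quadratic family the dual system \eqref{Vvf_sh} reads
\begin{equation*}
\begin{aligned}
\dot a_{10}&=a_{10}\ps, & \dot a_{01}&=a_{01}\qs, & \dot a_{-12}&=a_{-12}(-\ps+2\qs),\\
\dot b_{2,-1}&=b_{2,-1}(2\ps-\qs), & \dot b_{10}&=b_{10}\ps, & \dot b_{01}&=b_{01}\qs,
\end{aligned}
\end{equation*}
with $\ps,\qs$ given by \eqref{absum}. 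The first claim is that each $\vv(J_i)=\vv_i$ is invariant for this vector field. The cleanest route is to show that each generator of each $J_i$ is a Darboux factor of the vector field $\widehat{\mc X}$ obtained from \eqref{Vvf_sh} (equivalently, of $\widehat{\mc A}$ in the sense of Proposition~\ref{propDarb}), i.e. that $\widehat{\mc X}(f)=K_f\cdot f$ for a polynomial cofactor $K_f$; then the zero set of that generator is invariant, and a finite intersection of invariant sets is invariant, giving invariance of each $\vv_i$. Concretely I would check: for $J_1$ the linear forms $2a_{10}-b_{10}$ and $2b_{01}-a_{01}$; for $J_2$ the coordinate hyperplanes $a_{01}=0$, $b_{10}=0$ (immediate, since $\dot a_{01}$ is divisible by $a_{01}$ and $\dot b_{10}$ by $b_{10}$); for $J_3$ the two linear forms plus $a_{01}b_{10}-a_{-12}b_{2,-1}$; and for $J_4$ the five binomials $f_1,\dots,f_5$. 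Because every right-hand side of \eqref{Vvf_sh} is homogeneous of multidegree equal to the $\mc Q$-degree of the corresponding variable, and the $f_k$ are $\mc Q$-homogeneous (as are the generators of $J_1,J_2,J_3$), the derivative $\widehat{\mc X}(f)$ automatically lands in the same graded piece as $f$ times a degree-$(1,1)$ scalar built from $\ps,\qs$; verifying that it is exactly a multiple of $f$ is then a short homogeneity-plus-substitution computation for each of the finitely many generators.

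For the second, stronger claim — that each $\vv_i$ is an intersection of invariant algebraic \emph{surfaces} of the full non-homogeneous system \eqref{Vvf} defined by homogeneous polynomials — I would upgrade each Darboux-factor relation for $\widehat{\mc X}$ to one for the operator $\mc X$ of the non-homogeneous dual system, which differs only by the additive term $-\partial_w\dot w$ with $\dot w=-w(\pqs)$. The point is that $\mc X$ acts on functions of $(a,b)$ alone exactly as $\widehat{\mc X}$ does (the $w$-component is decoupled on such functions), so the same cofactor identities $\mc X(f)=K_f f$ hold, and each hypersurface $\{f=0\}$ with $f$ a homogeneous generator of some $J_i$ is an invariant algebraic surface of \eqref{Vvf}. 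Since $J_1=\la f,g\ra$ with $f,g$ homogeneous, $\vv_1=\{f=0\}\cap\{g=0\}$ is the asserted intersection; likewise for $J_2,J_3$; and $\vv_4=\bigcap_{k=1}^5\{f_k=0\}$ with each $f_k$ $\mc Q$-homogeneous. I would invoke Proposition~\ref{pro:ap} (all associated primes of $B$ are multigraded) to guarantee at the outset that every $J_i$ \emph{is} generated by $\mc Q$-homogeneous polynomials, so the phrase ``defined by homogeneous polynomials'' is justified without re-deriving it.

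The main obstacle I anticipate is purely computational bookkeeping on $\vv_4$: one must exhibit, for each binomial $f_k$, an explicit polynomial cofactor $K_k$ with $\widehat{\mc X}(f_k)=K_k f_k$, and for the non-coordinate generators of $J_1,J_3$ one must check that the linear/quadratic combination really factors (e.g. $\widehat{\mc X}(2a_{10}-b_{10})=\ps\,(2a_{10}-b_{10})$ because both $\dot a_{10}$ and $\dot b_{10}$ carry the common factor $\ps$, and similarly $\widehat{\mc X}(2b_{01}-a_{01})=\qs\,(2b_{01}-a_{01})$; the mixed generator $a_{01}b_{10}-a_{-12}b_{2,-1}$ of $J_3$ has both monomials of multidegree $(1,1)$, so $\widehat{\mc X}$ multiplies each by a degree-$(1,1)$ scalar, and one must verify these scalars coincide on $\vv_3$, i.e. modulo the linear relations in $J_3$). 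None of this is deep — it is a finite check that a handful of homogeneous polynomials are eigenvectors (Darboux factors) of a single explicit polynomial vector field — but it is where the actual content of the proof sits, and care is needed to match cofactors exactly rather than merely up to the grading. I would organize the write-up as a small table of generators against cofactors, then conclude with the two one-line arguments: intersections of invariant sets are invariant, and homogeneity of the generators is supplied by Proposition~\ref{pro:ap}.
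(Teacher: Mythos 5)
Your proposal follows essentially the same route as the paper: the paper's proof likewise verifies that every polynomial defining the associated primes $J_1,\dots,J_4$ is a Darboux factor of the homogeneous dual system \eqref{Vvf_sh} with a polynomial cofactor, which yields invariance of each component as an intersection of invariant algebraic surfaces of \eqref{Vvf}. The complication you anticipate for the mixed generator of $J_3$ (and the binomials of $J_4$) does not actually arise: under \eqref{Vvf_sh} each variable has logarithmic derivative $\la \deg, (\ps,\qs)\ra$, so every $\mathcal{Q}$-homogeneous polynomial is exactly a Darboux factor with cofactor determined by its degree, and the cofactors match identically rather than merely modulo the ideal.
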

\begin{proof}
    One possibility to prove that a set $W\subset \C^{2\ell}$ defined by a parametrization  is 
    an invariant set of system \eqref{Vvf_sh} is to
    compute the Jacobian of the parametrization 
    and to check that on  $W$ the vector field of \eqref{Vvf_sh} is in the vector space spanned by the columns of the Jacobian. However, it involves tedious computations. 

Another possibility is provided by the observation that 
a polynomial equation $f(a,b)=0$ defines an invariant algebraic surface of system \eqref{Vvf_sh} if and only if $f$ is a Darboux factor of system \eqref{Vvf_sh}
with a polynomial cofactor. Straightforward computations show that  this   is indeed the case for all polynomials defining the  associate primes $J_1,\dots, J_4$
(presented at the end of Section \ref{BauId})
of the Bautin ideal of system \eqref{vgExfamily}.
\end{proof}

Based on the proposition, we can propose the following conjecture.
\begin{conj}
The components of the variety of the Bautin ideal 
   of system \eqref{gs} are  invariant sets 
   of system \eqref{Vvf_sh} defined as  
   intersections   of invariant algebraic surfaces 
   of system \eqref{gs}, which are  defined by homogeneous polynomials.
\end{conj}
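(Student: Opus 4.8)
The plan is to prove the conjecture by the same mechanism as in the proof of the preceding proposition, but to replace the case-by-case verification of the Darboux condition by a single uniform identity and to feed in Proposition~\ref{pro:ap}. The key point is that the homogeneous dual system \eqref{Vvf_sh} is, up to the scalar factors $\ps$ and $\qs$, assembled from the two weighted Euler derivations that define the $\mathcal{Q}$-grading. Writing $E_1,E_2$ for the derivations on $\C[a,b]$ that act on a monomial $\monom$ by multiplication by $L_1(\mu,\nu)$ and $L_2(\mu,\nu)$ respectively (these exist by the very construction of the degree map \eqref{L1}, being the appropriate weighted Euler operators), one checks directly that the vector field $\mathcal{X}_{\mathrm{hom}}$ of \eqref{Vvf_sh} equals $\ps\,E_1+\qs\,E_2$. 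Hence, for every polynomial $f\in\C[a,b]$ homogeneous of degree $(r,s)$ with respect to $\mathcal{Q}$, \[ \mathcal{X}_{\mathrm{hom}}(f)=\bigl(r\,\ps+s\,\qs\bigr)\,f, \] and since $\ps$ and $\qs$ are polynomials, the cofactor $r\,\ps+s\,\qs$ is a polynomial. Thus \emph{every} homogeneous polynomial $f$ is a Darboux factor of \eqref{Vvf_sh}, so $\{f=0\}$ is an invariant algebraic surface of \eqref{Vvf_sh}; and because $f$ does not involve $w$, the extra equation $\dot w=-w(\pqs)$ contributes nothing, so $\{f=0\}$ is equally an invariant algebraic surface of the non-homogeneous dual system \eqref{Vvf}.

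With this observation I would assemble the proof as follows. Let $\mathbf{W}$ be an irreducible component of the integrability variety $\vv(B)$ of \eqref{gs}. Then $\mathbf{I}(\mathbf{W})$ is a minimal prime of $B$, hence an associated prime of $B$, hence --- by Proposition~\ref{pro:ap} --- a homogeneous ideal in the multigraded ring $\C[a,b]$; taking homogeneous components of a finite generating set, we get $\mathbf{I}(\mathbf{W})=\langle f_1,\dots,f_k\rangle$ with each $f_\alpha$ homogeneous of some degree $(r_\alpha,s_\alpha)$. By the identity above each $\vv(f_\alpha)$ is an invariant algebraic surface of \eqref{Vvf} (and of \eqref{Vvf_sh}) defined by a homogeneous polynomial, and $\mathbf{W}=\vv(\mathbf{I}(\mathbf{W}))=\bigcap_{\alpha=1}^{k}\vv(f_\alpha)$ is precisely their intersection. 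Since an intersection of invariant sets is invariant, $\mathbf{W}$ is an invariant set of \eqref{Vvf_sh}, which is the assertion of the conjecture.

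I do not foresee a genuine obstruction; the step that most deserves care is the identification of the components of $\vv(B)$ with zero sets of primes to which Proposition~\ref{pro:ap} applies. This is clean in principle --- the components of $\vv(B)=\vv(\sqrt{B})$ are the zero sets of the minimal primes of $B$, minimal primes are associated primes, and embedded primes are irrelevant here --- but in practice one only has a truncation $\vv(B_s)$ available as in \eqref{Bd}, and showing that $\vv(B_s)$ and $\vv(B)$ have the same components is the separate stabilization issue already discussed there. A minor caveat: the homogeneous generators $f_\alpha$ need not be invariants of the group $G$ of \eqref{abT} (their bidegrees $(r_\alpha,s_\alpha)$ need not be of the form $(kq,kp)$); only their vanishing loci are forced to be invariant surfaces, via the cofactor $r_\alpha\,\ps+s_\alpha\,\qs$. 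Since this argument is uniform and never uses the quadratic computation of the preceding proposition, it in fact suggests that the conjecture is a theorem.
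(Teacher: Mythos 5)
Your argument is correct, and it actually goes beyond the paper: the statement is left there as a conjecture, supported only by the preceding proposition for the quadratic family \eqref{vgExfamily}, which is proved by checking computationally that each generator of the associated primes $J_1,\dots,J_4$ is a Darboux factor of \eqref{Vvf_sh} with polynomial cofactor. You replace that case-by-case verification with a uniform identity: the vector field of \eqref{Vvf_sh} is $\ps\,E_1+\qs\,E_2$ with $E_1,E_2$ your weighted Euler derivations (the identification does check out with the paper's indexing, since $\deg a_{ij}=(i,j)$, $\deg b_{ji}=(j,i)$, and the equation for $b_{ji}$ in \eqref{Vvf_sh} is the one indexed by $(j,i)\in\mc S$, giving $\dot b_{ji}=b_{ji}(j\ps+i\qs)$), so $\widehat{\mc A}(f)=(r\,\ps+s\,\qs)f$ for every $(r,s)$-homogeneous $f$. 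Note that this key lemma is essentially already in the paper in the form $\widehat{\mc A}(V)=\la \mathfrak{D}(V),(\ps,\qs)\ra$ (the unlabeled proposition closing Section 4), read on a single graded piece; your contribution is to combine it with Proposition~\ref{pro:ap} (minimal primes of $B$ are associated primes, hence multigraded, hence admit finite homogeneous generating sets) and with the Darboux-factor characterization of invariant hypersurfaces that the paper itself uses in the quadratic case. This yields the conjecture in full generality (any $p:-q$ and any $S$), for both \eqref{Vvf_sh} and \eqref{Vvf}, so it is in fact a theorem rather than a conjecture. Two minor remarks: the phrase ``invariant algebraic surfaces of system \eqref{gs}'' in the conjecture must be read as ``of system \eqref{Vvf}'', as in the quadratic proposition, which is the statement you prove; and your stabilization caveat about $\vv(B_s)$ versus $\vv(B)$ is not actually needed, since the claim concerns the components of $\vv(B)$ itself and your argument never requires an explicit finite generating set of $B$, only that its minimal primes are multigraded.
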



We denote by $\widehat {\mc A}  $ the homogeneous 
part of operator \eqref{op}, that is, the operator $\mc A$ without the last term $(\pqs)V$ and use the similar notation for $\mc A$ restricted to a variety. 

Case 1.    The  variety  $\vv_1$  
   of the  ideal 
   $J_1= \la 2 a_{10} -b_{10}, 2 b_{01}- a_{01}\ra$ is parameterized by   
    $$  
     a_{01} = 2  b_{01}, \ \ 
      b_{10} = 2  a_{10},
    $$ 
and operator \eqref{Awt} takes the form 
\begin{multline*}
\mathcal{A}_{\vv_1}(V):=
  { a_{10}} \ps  \frac {\partial V} {\partial   { a_{10}}} \
+   { a_{-12}} ( -\ps +2 \qs ) \frac {\partial V} {\partial   { a_{-12}}} +   \\
{ b_{01}}   \qs  \frac {\partial V} {\partial   { b_{01}}} +   \
{ b_{2,-1}} (2 \ps -\qs ) \frac {\partial V} {\partial   { b_{2,-1}}} + (\pqs) V,
\end{multline*}
where $\ps = 1-( a_{10}+ 2 b_{01}+ a_{-12}), \qquad \qs =-1+ b_{01}+ 2 a_{10}+ b_{2,-1}. $

A simple computation shows  that 
$$
V=    1 -  a_{-12}/3  -  b_{2,-1}/3 -  a_{10} -  b_{01}$$
is a solution to 
$    { \cal A}_{\vv_1} (V)=0.$ 
Since 
\begin{multline*}
 {\mc A}(V)=- a_{01} - 2  a_{10} + 2  a_{01}  a_{10} + 2  a_{10}^2 + 2  a_{10}  a_{-12} + 2  b_{01} + \\ a_{01}  b_{01} - 
 2  b_{01}^2 +  b_{10} -  a_{10}  b_{10} -  a_{-12}  b_{10} - 2  b_{01}  b_{10} +  a_{01}  b_{2,-1} - 2  b_{01}  b_{2,-1},
\end{multline*}
we have that  
 ${ \cal A}(V)\equiv 0\mod {\bf I}(\vv_1)$, 
 in agreement with Conjecture \ref{conj}.

System \eqref{vgExfamily} corresponding to this case  is Hamiltonian system. 
We treat the general family of Hamiltonian  systems \eqref{gs} in Section \ref{s:HamSys}.


Case 2.  The corresponding system  \eqref{vgExfamily} is 
\be\label{vgC2}
\begin{aligned}
\dot x &= \phantom{-(} 
          x - a_{10} x^2  - a_{-12} y^2  \\
\dot y &=-(y-b_{2,-1} x^2  - b_{01} y^2)
\end{aligned}
\ee
and the corresponding equation \eqref{e:AonW}  is 
written for the operator 
\begin{multline}\label{op_c2} 
\mathcal{A}|_{ {\bf V}_3}(V):=
{ a_{10}} \ps  \frac {\partial V} {\partial   { a_{10}}} \
+   { a_{-12}} ( -\ps +2 \qs ) \frac {\partial V} {\partial   { a_{-12}}} + \\  { b_{01}}  \qs  \frac {\partial V} {\partial   { b_{01}}} +    { b_{2,-1}} (2\ps -\qs) \frac {\partial V} {\partial   { b_{2,-1}}}  +(\pqs) V,
    \end{multline}
    where
$
\ps =1-( a_{10}+ a_{-12}), \ \qs  =-1+ b_{01}+  b_{2,-1}  
$
(we can also  treat   operator \eqref{op_c2} as operator 
\eqref{Aop} associated to the set $S=\{(1,0), (-1,2) \}$).

It is not difficult to see that 
the line
$$
\ell= 1+ c_{10} x + c_{01} y 
$$
is an invariant line of system \eqref{vgC2} if 
 $  c_{10}$ is a root of the cubic equation 
\be \label{eq:cubc2}
 a_{10}  b_{01}  b_{2,-1} -  a_{-12}  b_{2,-1}^2 + ( a_{10}^2 +  b_{01}  b_{2,-1})  c_{10} + 2  a_{10}  c_{10}^2 +  c_{10}^3=0.
\ee

Thus, 
in the generic case when all three roots $ c_{10}^{(1)},   c_{10}^{(2)},$ and $ c_{10}^{(3)}$ 
of \eqref{eq:cubc2} are different,  system \eqref{vgC2} has the Darboux first  integral 
$$
\Phi(x,y)= (1+ c_{10}^{(1)} x + c_{01}^{(1)} y)^{\alpha_1}  (1+ c_{10}^{(2)} x + c_{01}^{(2)} y)^{\alpha_2}  (1+ c_{10}^{(3)} x + c_{01}^{(3)} y)^{\alpha_3}, 
$$
where $c_{01}^{(i)}$ is a solution of
$$ (c_{01}^{(i)})^2+  b_{01}  c_{01}^{(i)}-  a_{-12}c_{10}^{(i)}=0
$$
and 
$\alpha_1, \alpha_2, \alpha_3$ satisfy the equation 
$$
\sum_{i=1}^3 \alpha_i ( c_{10}^{(i)}  x -   ( a_{10}  c_{10}^{(i)} + \frac{( c_{10}^{(i)})^2}{  b_{2,-1} } ) y\equiv 0. 
$$
Then 
\be \label{int_c2}
\Psi(x,y)=\frac{\Phi(x,y)}{ \alpha_1 ( c_{10}^{(1)})^3+  \alpha_2 ( c_{10}^{(2)})^3+ \alpha_3 ( c_{10}^{(3)})^3 }
\ee
is an analytic  first integral of the form \eqref{e:Psi} (see \cite[Section 3.7]{RS}).
However,  $\Psi(1,1)$ is not a formal power series, so map \eqref{eq:T} is not defined on 
$\Psi(x,y)$.

By Theorem  \ref{propAV} equation \eqref{op_c2} has a power series solution, but  we cannot find it using the Darboux first integral \eqref{int_c2} of system \eqref{vgC2}. 

\begin{prob} {\rm 
How to prove the existence of power series solution to ${\mc A}|_{\vv_3 } (V)=0$ without referring to 
Theorem  \ref{propAV}?}
\end{prob}

Case 3. 
The variety ${\bf V}_3$  is the intersection of two hyperplanes and a five-dimensional torus, so
we parametrize it as 
\be \label{par3tor}
 a_{10} = -2 t y_1,\
 a_{01} = y_2t^{-1},\
 a_{-12} = y_1 t^{-1},\
 b_{2,-1} = y_2 t,\
 b_{10} = y_1 t, \
 b_{01} = -2 y_2 t^{-1}.
\ee 
The associated  degree map on $\C[t, y_1, y_2]$ is
$$
L^{(3)}(\nu)= {1 \ch -1} \nu_1+ {0 \ch 1}\nu_2+{1 \ch 0} \nu_3. 
$$
Using  parametrization \eqref{par3tor} 
operator \eqref{op} on the variety ${\bf V}_3$ is written as 
\be \label{Av3def}
\mathcal{A}_{\vv _3}(V):=
  \frac{\partial V}{\partial  { y_1}} \dot y_1 + \frac{\partial V}{\partial  { y_2}} \dot y_2 + \frac{\partial V}{\partial  { t}} \dot t   -  V (  y_1 t^{-1} - 3 t y_1 + 3 y_2t^{-1} - t y_2 ),
\ee
where
$$
\begin{aligned}
\dot y_1=& y_1 (-1  + t y_1 - 2  y_2t^{-1} + t  y_2),\\ 
\dot y_2= &  y_2 (1 - y_1 t^{-1} + 2 t y_1  - y_2 t^{-1}, \\ \dot t\phantom{_2} =&  t( 2  - y_1 t^{-1}  + t y_1 + y_2 - t y_2).
\end{aligned}
$$

Direct computations show that 
the Laurent polynomials 
$$
f_1=1 + 2 t y_1 - y_1^2 + 2 y_2 t^{-1} + 2 y_1 y_2 - y_2^2
$$
and 
$$
f_2=-\frac{3 t^2  {y_1}^3}{2  {y_2}}+\frac{3 t^2
    {y_1}^2}{2}-\frac{3  {y_2}^3}{2 t^2
    {y_1}}+\frac{3  {y_2}^2}{2 t^2}+\frac{t
    {y_1}^4}{2  {y_2}}-\frac{3 t
    {y_1}^3}{2}+\frac{3}{2} t  {y_1}^2
    {y_2}-\frac{ {y_1}^2  {y_2}}{2
   t}+
   $$
   $$
   \frac{ {y_2}^4}{2 t  {y_1}}-\frac{1}{2} t
    {y_1}  {y_2}^2+\frac{3  {y_1}
    {y_2}^2}{2 t}+3 t  {y_1}-\frac{3
    {y_2}^3}{2 t}+\frac{3  {y_2}}{t}-\frac{3
    {y_1}^2}{2}+3  {y_1}  {y_2}-\frac{3
    {y_2}^2}{2}+1
$$
are the Darboux factors of the operator 
$
\widehat {\mathcal{A}}_{\vv _3}(V) 
$
with the cofactors 
$$    
k_1= \frac{2 \left(t^2  {y_1}- {y_2}\right)}{t} \quad  {\rm and} \quad  
k_2= \frac{3 \left(t^2  {y_1}- {y_2}\right)}{t},
$$ 
respectively.
 Thus, by Proposition \ref{propDarb},
    $$
 \wt  U=f_1^{-3}f_2^2
$$
is a solution to 
$
\hat{ \mathcal{A}}(\wt U)=0.
$

The series expansion of $U$ is 
$$
\wt U=1 - 6 y_1 y_2+h.o.t.
$$
Then 
\be \label{Vc3}
\wt V=\frac{U-1}{ - 6 y_1 y_2 } 
\ee
is a   solution to 
\be \label{Av3eq}
{ \mathcal{A}_{\vv_3}}(V)=  0
\ee
(
to  obtain the solution \eqref{Vc3} we have divided $U-1$ by the terms of $U$ such that $L^{(3)}(\mu,\nu)=(1,1)$
).

Solution $V(y_1,y_2,t)$  given by  \eqref{Vc3} is a solution to \eqref{Av3eq}, where the differential operator is defined by \eqref{Av3def}. We now show that there is a series $V$ which is solution 
to 
$$
{\mc A}(V)=0, 
$$
where $\mc A$ is defined by \eqref{op}, on $\vv_3$ in the sense of Definition  \ref{def:61}.


Note that  parametrization \eqref{par3tor} is associated  in  the  natural way with the rational map $\pi$ acting by 
$$   
 a_{10} \mapsto  -2 t y_1,\
 a_{01} \mapsto y_2 t^{-1}, \ 
 a_{-12} \mapsto  y_1  t^{-1}, \
 b_{2,-1} \mapsto  y_2 t, \
 b_{10} \mapsto y_1 t, \
 b_{01} \mapsto   -2 y_2t^{-1}.
$$

To 
compute the preimages  of $f_1$ and $f_2$ under $\pi$ we can  proceed similarly to the rational implicitization (see  \cite[Chapter 3]{Cox}).  That is, to compute $\pi^{-1}(f_1)$ we eliminate from  the ideal
\begin{multline*}
\la 1 - w t y_1 y_2  ,  b_{10} - y_1 t, \
 a_{01} - y_2 t^{-1},\
 a_{-12} - y_1 t^{-1},\
 b_{2,-1} - y_2 t,\
 b_{01} +2 y_2 t^{-1},\
 a_{10} +2 t y_1, f_1
  \ra,
\end{multline*}
the variables $w, t, y_1 $ and $ y_2 $
obtaining  
$$
F_1=\pi^{-1}(f_1)=
1 + 2  a_{01} + 2  b_{10} -  a_{-12}  b_{10} -  a_{01}  b_{2,-1} + 2  a_{-12}  b_{2,-1}.
$$
Computing the fourth elimination ideal of the ideal 
$$
\la 1 - w t y_1 y_2  ,  b_{10} - y_1 t, \
 a_{01} - y_2 t^{-1},\
 a_{-12} - y_1 t^{-1},\
 b_{2,-1} - y_2 t,\
 b_{01} +2 y_2 t^{-1},\
 a_{10} +2 t y_1, f_1\ra 
$$
we find that it contains the polynomial 

$
  F'_2=-3  a_{-12}  b_{10}^3 +  a_{-12}^2  b_{10}^3 - 3  a_{01}^3  b_{2,-1} + 2  a_{-12}  b_{2,-1} + 
 6  a_{01}  a_{-12}  b_{2,-1} + 3  a_{01}^2  a_{-12}  b_{2,-1} + 6  a_{-12}  b_{10}  b_{2,-1} - 
 3  a_{-12}^2  b_{10}  b_{2,-1} + 3  a_{-12}  b_{10}^2  b_{2,-1} - 3  a_{-12}^2  b_{10}^2  b_{2,-1} + 
  a_{01}^3  b_{2,-1}^2 - 3  a_{01}  a_{-12}  b_{2,-1}^2 - 3  a_{01}^2  a_{-12}  b_{2,-1}^2 + 
 6  a_{-12}^2  b_{2,-1}^2 + 3  a_{01}  a_{-12}^2  b_{2,-1}^2 -  a_{-12}^3  b_{2,-1}^2 + 
 3  a_{-12}^2  b_{10}  b_{2,-1}^2 -  a_{-12}^2  b_{2,-1}^3.
$

Let
$$
F_2=\frac{ F'_2}{2  a_{-12}   b_{2,-1} }.
$$
Computations show that 
for 
$K= 
- a_{01} +  b_{10}
$
$$
(\widehat{\mc A}(F_1)- 2 K F_1)/{\bf I}(\vv_3)=0, \quad    (\widehat{\mc A}(F_1)- 3 K F_2)/{\bf I}(\vv_3)=0,
$$ 
so $ 2 K$ and 
$ 3 K  $ are the cofactors of $F_1$ and $F_2$, respectively,  on $\vv_3.$
Then 
$$
 U= F_1^{-3} F^2_2
$$
is a Darboux solution to 
$$
\widehat{ 
{\mc A}}_{\vv_3}(V)=0
$$
and
$$
 V=
 \frac{ U-1}{ - 6  a_{01}  b_{10}}
$$
is a solution to \eqref{op} on $\vv_3$.
This supports Conjecture \ref{conj}. 

\begin{remark}
{\rm 

An obvious  rational parametrization of $\vv_3$ is 
\be \label{par_c3}
a_{-12} = \frac{ a_{01}  b_{10}}{b_{2,-1}}, \quad b_{01} = -2  a_{01}, \quad a_{10} = -2  b_{10}.
\ee
For this parametrization 
operator \eqref{Awt}   takes the form 
\begin{multline}
\mathcal{A}^{(1)}_{\vv _3}(V):=
 { a_{01}} ( \qs  \frac{\partial V}{\partial 
 { a_{01}}}+ { b_{10}} \ps  \frac{\partial V}{
\partial  { b_{10}}}+\\ { b_{2,-1}} (2 \ps -\qs ) \frac{\partial V}{\partial  { b_{2,-1}}} +(\pqs)  V,
\end{multline}
where $\ps $ and $\qs $ are expressions 
\eqref{absum} evaluated using \eqref{par_c3}.
It is possible to find a Darboux solution to the equation
$$
\mathcal{A}^{(1)}_{\vv _3}(V)=0,
$$
however,  the computations are  more tedious compared to the ones for 
parametrization \eqref{par3tor}.

}
\end{remark}

Case 4. The ideal $J_4$ is the Sibirsky ideal of system \eqref{vgExfamily}, so by Proposition \ref{pro:6} for each 
$(a,b)\in \vv (J_4)$ the corresponding system \eqref{vgExfamily}
has a first integral of the form \eqref{Psi}.

Using matrix \eqref{Mfr} (equivalently, \eqref{par_tor}) we immediately 
have the   parametrization 
$$
 a_{10} = y_1  t, \
 a_{01} = y_2 t^{-1} , \
 a_{-12} = y_3 t^{-3}, \
 b_{2,-1} = y_3 t^3 
 b_{10} = y_2 t, \
 b_{01} = y_1t^{-1} 
$$
of $\vv_4$, 
which we write in the short form as
\be \label{abf}
(a,b)=f(t,y_1,y_2,y_4).
\ee

The degree map 
$$
L^{(4)}(\nu)={\frac12 \ch -\frac{1}2 }\nu_1+{\frac12 \ch \frac12 }\nu_2 +{\frac12 \ch \frac12 }\nu_3+{\frac12 \ch \frac12 }\nu_4
$$
defines the multigraded  ring $\C[t,t^{-1},  y_1, y_2, y_3]$.

After rescaling by 
$$
\frac{\left(t^2-1\right) \left(t^4
    {y_3}+t^2
   (- {y_1}+ {y_2}+ {y_3})+ {y_3}\right)}{2 t^3}
$$
 the differential equation
 on the torus can be written as  
\be \label{c4t}
\widehat{\mathcal{A}}_{\vv _4}(V):=
  \frac{\partial V}{\partial  { y_1}}  y_1 + \frac{\partial V}{\partial  { y_2}}  y_2 + 
   \frac{\partial V}{\partial  { y_3}}  y_3 +  
  \frac{\partial V}{\partial  { t}} \dot t   =- 2  V ,
\ee
where 
$$
\dot t=-\frac{t \left(t^2 \left(t \left(t^3
    {y_3}+t
   ( {y_1}+ {y_2})-2\right)+ {y_1}
   + {y_2}\right)+ {y_3}
   \right)}{\left(t^2-1\right) \left(t^4  {y_3}+t^2
   (- {y_1}+ {y_2}+ {y_3})+ {y_3}\right)}.
$$

\begin{pro}\label{pro:c4}  
   Equation \eqref{c4t} 
   has a solution $\wt V=1+h.o.t. \in \wt {\mc V}$.
\end{pro}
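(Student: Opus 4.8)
The plan is to obtain the required series $\wt V$ by restricting to $\vv_4$ the power series $\mathfrak T(\Psi)$ attached to the canonical normal form of system \eqref{vgExfamily}, and then transporting it through the parametrization \eqref{abf} and the rescaling that produced \eqref{c4t}.

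First I would recall from Theorem \ref{th1} the unique canonical normal form $\Psi\in M$ of system \eqref{vgExfamily}: it satisfies $\mathcal D(\Psi)=\sum_{k\ge 1}g_{k,k}(a,b)(xy)^{k+1}$ with $g_{k,k}\in R_{(k,k)}$, and by definition these polynomials generate the Bautin ideal $B$. Setting $V:=\mathfrak T(\Psi)\in\mc V$, we get $V=1+h.o.t.$ (since $v_{00}=1$ and $v_{k,k}=0$ for $k\ge 1$), and by Theorem \ref{th2} one has $\mathcal A(V)=\sum_{k\ge 1}g_{k,k}(a,b)\in\ker\mathcal A_1$.

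Next I would check that $V$ is a solution of \eqref{eq:AV} on $\vv_4$ in the sense of Definition \ref{def:61}. Since $\vv_4$ is a component of $\vv(B)$, each $g_{k,k}$ vanishes on $\vv_4=\vv(J_4)$; as $J_4$ is the Sibirsky ideal, it is prime and $\mathcal Q$-homogeneous, so ${\bf I}(\vv_4)=J_4$ and hence $g_{k,k}\in{\bf I}(\vv_4)$ for every $k$. Thus $\mathcal A(V)/{\bf I}(\vv_4)=0$, i.e.\ $V$ solves \eqref{eq:AV} on $\vv_4$. Applying Theorem \ref{th:61} (through Proposition \ref{pro:restriction}), the series $\wt V$ obtained from $V$ by the substitution \eqref{abf} satisfies $\mathcal A|_{\vv_4}(\wt V)=0$; because \eqref{abf} is a monomial parametrization compatible with the degree map $L^{(4)}$, the leading term $1$ is preserved and $\wt V=1+h.o.t.\in\wt{\mc V}$. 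Finally, \eqref{c4t} differs from $\mathcal A|_{\vv_4}(\wt V)=0$ only by the nonzero scalar factor $\frac{(t^2-1)(t^4y_3+t^2(-y_1+y_2+y_3)+y_3)}{2t^3}$ used in the rescaling, so $\wt V$ solves \eqref{c4t} as well.

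The step I expect to demand the most care is the last transition: one must verify that substituting \eqref{abf} into $V$ produces a well-defined element of $\wt{\mc V}$ with the normalization $\wt V=1+h.o.t.$ preserved (no cancellation among Laurent monomials destroys the graded structure — here the key point is that the $y_1,y_2,y_3$-degrees of the image of a monomial $\monom$ already determine its total degree in $a,b$, so only finitely many monomials contribute to each Laurent monomial and the constant part stays equal to $1$), and that the rescaling factor is invertible as a formal Laurent series, legitimizing the passage between $\mathcal A|_{\vv_4}(\wt V)=0$ and \eqref{c4t}. An alternative in the spirit of Cases 1--3 would bypass the abstract argument: locate Laurent-polynomial Darboux factors $f_1,f_2,\dots$ of the vector field $\widehat{\mathcal A}_{\vv_4}$ with cofactors $k_1,k_2,\dots$, find constants $\alpha_i$ with $\sum_i\alpha_i k_i=-2$ so that $U=\prod_i f_i^{\alpha_i}$ already satisfies \eqref{c4t} by Proposition \ref{propDarb}, and then renormalize by dividing $U-1$ by its component of $L^{(4)}$-degree $(1,1)$; there the obstacle is guessing the factors $f_i$.
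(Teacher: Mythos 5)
Your proposal is correct and takes essentially the same route as the paper: both establish that some $V=1+h.o.t.\in\mc V$ solves \eqref{eq:AV} on $\vv_4$ (the paper deduces this from Proposition \ref{pro:6}, you from the vanishing on the component $\vv_4$ of $\vv(B)$ of the polynomials $g_{k,k}$ attached to the canonical normal form) and then invoke Theorem \ref{th:61} via the parametrization \eqref{abf}, the rescaling leading to \eqref{c4t} being harmless. Your write-up simply makes explicit the choice of $V$ and the substitution/rescaling bookkeeping that the paper's proof leaves implicit.
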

\begin{proof}
   By Proposition \eqref{pro:6} for each $(a,b)\in \vv_4$ the corresponding system  \eqref{vgExfamily}
   has  local analytic  first integral. 
   Therefore, there is a $V=1+h.o.t. \in \mc V$ which is a solution to \eqref{op} on $\vv_4$. Then, by Theorem \ref{th:61} $\wt V= V(f)  $, where $f$
   is defined by \eqref{abf} is a solution to \eqref{c4t}.
\end{proof}

\begin{prob}
   {\rm  Can one prove Proposition \ref{pro:c4} without refereeing to Theorem \ref{th:61}?
Is it possible to find a solution to \eqref{c4t}?
}
\end{prob}

 

\section{Hamiltonian systems} \label{s:HamSys}

In this section, we treat the general case of $1:-1$ resonant  Hamiltonian polynomial  systems. 
It is easy to see that the component of Hamiltonian systems in the space of parameters 
of system \eqref{gs} with $p=q=1$ is the variety of the ideal  
$$
\begin{aligned}
I_{H}
\stackrel{\textrm{def}}{=}\,
\la \, (i+1) \, & a_{ij} - (j+1) \, b_{ij}
: 
i \ge 0 \text{ and } (i,j) \in S
        \text{ and } (j,i) \in S \, \ra \\
&\cap\,
\la \, a_{ij}, b_{ji}
:
i \ge 0 \text{ and } (i,j) \in S
        \text{ and } (j,i) \not\in S \, \ra  \,.
\end{aligned}
$$
Let $S_H$ be a subset of $S$,
$$
S_H=\{(i,j)\in S\ : \ i\ge j \ {\rm or\ }\ i= -1 \}.
$$

\begin{teo}\label{th3}
Let the variety of $I_H$ be parametrized as follows:
\be
\label{varH}
\begin{aligned}
b_{ij}= & \frac{i+1}{j+1} a_{ij}, \ a_{ji}=\frac{i+1}{j+1} b_{ji}, \ (i,j)\in S_H, i\ne -1 \\ 
\ a_{ij} = & a_{ij},\ b_{ji}=b_{ji},  \ (i,j) \in S_H, \ i= -1.
\end{aligned}
\ee
Then the equation 
\be
\label{eq:A_H}
 {\mc A}|_{\vv(I_H)}(V)=0, 
\ee
where $\mc A$ is defined by \eqref{Aop},
has the polynomial solution
$$
V_H=1-\sum_{(i,j)\in S_H} (\frac{1}{j+1}a_{ij}+\frac{1}{j+1}b_{ji}  ).
$$ 
\end{teo}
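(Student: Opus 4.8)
The plan is to compute $\mathcal{A}(V_H)$ explicitly as a polynomial in the parameters and then to verify that it vanishes on $\vv(I_H)$ after the substitution \eqref{varH}; by Proposition~\ref{pro:restriction} together with Definition~\ref{def:61} this is precisely what equation \eqref{eq:A_H} requires. First I would record that, since $p=q=1$, evaluating $P$ and $Q$ at $(1,1)$ gives
\[
P^*(a,b)=1-\sum_{(i,j)\in S}a_{ij},\qquad Q^*(a,b)=-1+\sum_{(i,j)\in S}b_{ji},
\]
which generalizes \eqref{absum}.

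Next I would compute $\mathcal{A}(V_H)$ directly from \eqref{Aop}. Since each $\mathcal{D}_{rs}$ acts on the monomials occurring in $V_H$ by $\mathcal{D}_{rs}(a_{ij})=\delta_{(r,s),(i,j)}\,a_{ij}$ and $\mathcal{D}_{rs}(b_{ij})=\delta_{(r,s),(i,j)}\,b_{ij}$, collecting terms yields
\[
\mathcal{A}(V_H)=(P^*+Q^*)-\sum_{(i,j)\in S_H}\frac{1}{j+1}\Bigl[\bigl(P^*(i+1)+Q^*(j+1)\bigr)a_{ij}+\bigl(P^*(j+1)+Q^*(i+1)\bigr)b_{ji}\Bigr].
\]
(As a check, for the parameter set $S$ of the quadratic family this reproduces the polynomial obtained earlier by direct computation.)

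It then remains to show that the right-hand side lies in $\mathbf{I}(\vv(I_H))$, i.e.\ vanishes under \eqref{varH}. I would split the sum over $S_H$ into the part with $i\ge0$ (where $i\ge j$) and the part with $i=-1$. For $i\ge0$ the relations $(i+1)a_{ij}=(j+1)b_{ij}$ and $(i+1)b_{ji}=(j+1)a_{ji}$ of \eqref{varH} hold on $\vv(I_H)$ --- both sides vanishing when $(j,i)\notin S$, by the second family of generators of $I_H$ --- and they convert the two summands into $P^*b_{ij}+Q^*a_{ij}$ and $P^*b_{ji}+Q^*a_{ji}$; for $i=-1$ the bracket collapses to $Q^*a_{-1,j}+P^*b_{j,-1}$ because $i+1=0$. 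Hence on $\vv(I_H)$ the sum equals $P^*\Sigma_b+Q^*\Sigma_a$, where $\Sigma_a$ and $\Sigma_b$ denote the totals of the $a$- and $b$-coordinates that appear. The combinatorial core --- which I expect to be the main obstacle --- is to check, by pairing each index $(i,j)\in S_H$ with its transpose $(j,i)$ and discarding the coordinates that vanish identically on $\vv(I_H)$, that $\Sigma_a=\sum_{(i,j)\in S}a_{ij}$ and $\Sigma_b=\sum_{(i,j)\in S}b_{ji}$; this is where one must keep careful track of the interplay between $S_H$, its complement in $S$, and the involution $(i,j)\mapsto(j,i)$. Granting this, put $\alpha=\sum_{(i,j)\in S}a_{ij}$ and $\beta=\sum_{(i,j)\in S}b_{ji}$, so that $P^*=1-\alpha$ and $Q^*=-1+\beta$; then $P^*\beta+Q^*\alpha=(1-\alpha)\beta+(\beta-1)\alpha=\beta-\alpha=P^*+Q^*$, and therefore $\mathcal{A}(V_H)\big|_{\vv(I_H)}=(P^*+Q^*)-(P^*+Q^*)=0$, which by Proposition~\ref{pro:restriction} is the assertion \eqref{eq:A_H}.
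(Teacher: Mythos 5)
Your proof is correct at the same level of rigour as the paper's, and it is essentially the paper's argument packaged in a different order: the paper substitutes the parametrization \eqref{varH} first and works with the restricted operator, split as $A_L(V)+A_Q(V)+(\pqs)V$, while you compute $\mathcal{A}(V_H)$ in the ambient parameter ring and only then reduce modulo $\mathbf{I}(\vv(I_H))$, appealing to Proposition \ref{pro:restriction} (both versions tacitly use that $\vv(I_H)$ is invariant for \eqref{Vvf_sh}, which is what makes the restricted operator \eqref{Awt} well defined). Your closing computation $P^*\beta+Q^*\alpha=\beta-\alpha=P^*+Q^*$ is exactly the cancellation the paper obtains in its last display after recombining the restricted sums into $\widetilde P^*(a,b)$ and $\widetilde Q^*(a,b)$; and the ``combinatorial core'' you leave granted is precisely the step the paper asserts without comment, namely its expressions for $\widetilde P^*$ and $\widetilde Q^*$ as sums over $S_H$. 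Your handling of indices whose transpose is absent from $S$ (both sides of the relation vanish on $\vv(I_H)$) is in fact more careful than the paper's.

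One caveat, which confirms your instinct that this bookkeeping is the delicate point: the identity $\Sigma_a=\sum_{(i,j)\in S}a_{ij}$, $\Sigma_b=\sum_{(i,j)\in S}b_{ji}$ (and the paper's counterpart) presupposes that $S$ contains no diagonal index $(i,i)$. For such an index the pairing with the transpose collapses, your $\Sigma_a,\Sigma_b$ count $a_{ii},b_{ii}$ twice, and the defect $\sum_{(i,i)\in S}\bigl(P^*b_{ii}+Q^*a_{ii}\bigr)$ survives; correspondingly the stated $V_H$ is then not a solution. For instance, for $S=\{(1,0),(0,1),(1,1)\}$ one finds $\mathcal{A}|_{\vv(I_H)}(V_H)=-a_{11}(a_{10}-b_{01})\neq 0$, and the actual solution carries $\tfrac14(a_{11}+b_{11})$ in place of $\tfrac12(a_{11}+b_{11})$ (it is $\mathfrak{T}$ of the Hamiltonian). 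So either impose $i>j$ for all $(i,j)\in S_H$ with $i\ge 0$ (the hypothesis under which the paper's own manipulation is valid) or halve the diagonal coefficients in $V_H$; with that proviso, writing out the short verification of the recombination closes the only gap you flagged.
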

\begin{proof}  We can write the operator ${\mc A}|_{\vv(I_H)}(V)$ as the sum 
$$
{\mc A}|_{\vv(I_H)}(V)=A_L(V)+A_Q(V)+(\pqs) V,
$$
where 
$$
A_L=\sH \left[ (i-j) a_{ij}\frac{\partial}{\partial a_{ij}}+ 
(j-i) b_{ji}\frac{\partial}{\partial b_{ji}}\right],
$$
$$
A_Q=\sH \left[ ( i \as +j \bs) a_{ij} \frac{\partial}{\partial a_{ij}}+ 
 (i \bs +j \as) b_{ji} \frac{\partial}{\partial b_{ji}}  \right]. 
$$
Clearly, 
$$
{\mc A}|_{\vv(I_H)}(V_H)=A_L(V_H)+A_Q(V_H)+ (\pqs)-(V_H-1) (\pqs).
$$

Observe that in view of \eqref{varH}
$$
\as=-\sH \left( a_{ij}+\frac{i+1}{j+1} b_{ji}\right),\quad  \bs=\sH \left( b_{ji}+\frac{i+1}{j+1} a_{ij}\right), 
$$
so,
$$
\pqs = -\sH \left( \frac{i-j}{j+1} a_{ij}+ \frac{j-i}{j+1} b_{ji}  \right).
$$ 
Then 
$$
A_L(V_H)+\pqs = -\sH \left(\frac{i-j}{j+1} a_{ij}+ \frac{j-i}{j+1} b_{ji} \right) +\pqs =0.
$$
Computations give
$$
A_Q(V_H)-(V_H-1) (\pqs)=$$
$$  -\sH \left[  i \as +j \bs ) \frac{a_{ij}}{j+1 } +
i \bs + j \as) \frac{b_{ji}}{j+1 } \right] -
$$
$$
(V_H-1) (\pqs)=$$
$$ 
\sH \left[   -\as \left(\frac{i+1}{j+1} a_{ij} +b_{ji} \right)- \bs \left(\frac{i+1}{j+1} b_{ji} +a_{ij} \right) \right] =
$$
$$
-\as  \bs- \bs \as=0.
$$
Thus, \eqref{eq:A_H} holds. 
\end{proof}

\begin{remark}
{\rm
We have proved Theorem \ref{th3} directly using properties of operator  ${\mc A}|_{\vv(I_H)}$.
A simpler proof is as follows.  
Observe that the function 
$$
H=- x y+\sum_{(i,j)\in S_H} (\frac{1}{j+1}a_{ij }x^i y^{j+1}  + \frac{1}{j+1}b_{ji} x^{j+1} y^i)
$$
is in the Hamiltonian of  system \eqref{gs} with the parameters given by \eqref{varH}, that is, $\mathcal{D}(H)=0.$
Therefore, by Theorem \ref{th:61} we conclude that \eqref{eq:A_H} holds.}
\end{remark} 

To conclude, 
in this paper  we investigated and described  a duality  between analytic or formal first integrals of certain  polynomial  systems of ODEs  and power series in the parameter space of the system which are solutions 
to  first order linear PDEs. This duality provides new insights into
the structure of integrable systems and their first integrals. Using this duality we were 
able to find solutions to some PDEs (or at least to prove their existence), which, 
we believe, is not possible to find using known methods from the theory of linear PDEs and differential algebra. We believe our  findings could offer valuable insights and tools for future research in these fields.
		
		\section*{Acknowledgments}
		The first author is supported by the Slovenian Research and Innovation  Agency (core research program P1-0288) and the second author is supported by the Slovenian Research  and Innovation  Agency (core research program P1-0306).
		The authors also acknowledge the support by the project  101183111-DSYREKI-HORIZON-MSCA-2023-SE-01 “Dynamical Systems and Reaction Kinetics Networks”.



\end{document}